\DeclareMathAlphabet{\mathbfsl}{OT1}{ppl}{b}{it} 
\DeclareRobustCommand{\nsbinom}{\genfrac[]\z@{}}
\newcommand{\abs}[1]{\left|#1\right|}
\newcommand{\field}[1]{\mathbb{#1}}
\newcommand{\Z}{\field{Z}}
\newcommand{\F}{\field{F}}
\newcommand{\cA}{{\cal A}}
\newcommand{\cB}{{\cal B}}
\newcommand{\cC}{{\cal C}}
\newcommand{\cD}{{\cal D}}
\newcommand{\cF}{{\cal F}}
\newcommand{\cS}{{\cal S}}
\newcommand{\cT}{{\cal T}}
\newcommand{\cP}{{\cal P}}
\newcommand{\cV}{{\cal V}}
\newcommand{\cM}{{\cal M}}
\newcommand{\cN}{{\cal N}}
\newcommand{\linadd}{\kern1pt\mbox{\small$\boxplus$}\kern1pt}
\newcommand{\supp}[1]{ \text{supp} ({#1}) }
\newcommand{\wt}[1]{ \text{wt} ({#1}) }
\newtheorem{definition}{Definition}
\newtheorem{theorem}{Theorem}
\newtheorem{lemma}{Lemma}
\newtheorem{corollary}{Corollary}
\newtheorem{conjecture}{Conjecture}
\newtheorem{example}{Example}
\theoremstyle{definition}
\begin{document}

\bibliographystyle{plain}

\title{Non-Binary Diameter Perfect Constant-Weight Codes}

\author{
{\sc Tuvi Etzion}\thanks{Department of Computer Science, Technion,
Haifa 3200003, Israel, e-mail: {\tt etzion@cs.technion.ac.il}.
}
}

\maketitle

\begin{abstract}
Diameter perfect codes form a natural generalization for perfect codes. They are based on the
code-anticode bound which generalizes the sphere-packing bound. The code-anticode bound was proved by Delsarte for distance-regular
graphs and it holds for some other metrics too. In this paper we prove the bound for non-binary constant-weight codes
with the Hamming metric and characterize the diameter perfect codes and the maximum size anticodes for these codes.
We distinguish between six families of non-binary diameter constant-weight codes and four families of maximum size non-binary
constant-weight anticodes. Each one of these families of diameter perfect codes raises some different questions.
We consider some of these questions and leave lot of ground for further research. Finally, as a consequence,
some $t$-intersecting families related to the well-known Erd\"{o}s-Ko-Rado theorem, are constructed.
\end{abstract}

\vspace{0.5cm}


\vspace{0.5cm}



\newpage
\section{Introduction}

Perfect codes are one of the most fascinating structures in coding theory.
These codes meet the well-known sphere-packing bound. When the size of a ball does not depend on its center,
these codes are the largest ones for their length and minimum distance.
They are defined for variety of metrics in coding theory. For most such metrics other bounds which
improve on the sphere-packing bound, were developed. One of these bounds is the {\bf \emph{code-anticode bound}} which
was first introduced by Delsarte in his seminal work~\cite{Del73}. This bound is a direct generalization of
the sphere-packing bound. For this bound the concept of anticode is required.
An {\bf \emph{anticode}} $\cA$ with {\bf \emph{diameter}} $D$ in a space~$\cV$ is a subset of $\cV$, with a metric,
such that the maximum distance between elements in $\cA$ is $D$.
%
Delsarte~\cite{Del73} proved the code-anticode bound for distance-regular graphs. The bound was re-introduced and generalized by
Ahlswede, Aydinian, and Khachatrian~\cite{AAK01} who called any code which meets this bound, a $D$-diameter perfect code.
To apply the code-anticode bound, one must find the size of the maximum size anticode for the related space, metric, and parameters.
This diametric problem in some schemes, such as the Hamming scheme and the Johnson scheme was discussed in~\cite{AAK01}
and shown to be closely related to intersection problems for systems of finite sets which go back to Erd\"{o}s-Ko-Rado theorem
(see~\cite{AhKh97,AhKh98,EKR61} and references therein).

The proof of the code-anticode bound which was introduced in~\cite{AAK01} required a space and metric associated
with a graph which admits a transitive group of automorphisms. The proof in~\cite{AAK01} was demonstrated
on the Johnson graph. Nevertheless, no general proof was given for the theorem for the family of graphs
which admit a transitive group of automorphisms.
They have considered the related codes and anticodes in the Hamming scheme, the Johnson scheme, and the Grassmann scheme.
Other metrics, which are not associated with distance-regular graphs, for which the proof was generalized, are
the $\ell_\infty$ metric on the set of permutations~$S_n$~\cite{TaSc10},
the Lee metric in $\Z_m^n$~\cite{Etz11} and the Kendall~$\tau$-metric on the set of permutations~$S_n$~\cite{BuEt15}.
The bound was also considered for the rank-distance on Ferrers diagrams in~\cite{EGRW16}.
For each one of these spaces and metrics some diameter perfect codes were introduced in these papers.

In this paper we further expand our knowledge on diameter perfect codes and optimal anticodes. We consider the set of
all words of length $n$ with weight $w$ over an alphabet $Q$ with $q$ symbols, $q >2$, where the metric used is the Hamming distance.
The graph (whose vertices are the space) associated with these words and metric will be denoted by J$_q(n,w)$. Its set of vertices
consists of all the words of length $n$ and weight $w$ over an alphabet $Q$ of size $q$, say $\Z_q$. Two vertices are connected by an undirected
edge if the Hamming distance of their associated words is one. Hence, when $q>2$ the related graph is not connected unless $w=n$.
Therefore, the graph is not distance-regular, unless $w=n$, which implies that the direct arguments in~\cite{AAK01,Del73} do not apply.
It should be noted that the graph is a union of disjoint distance-regular graphs, each one is formed from codewords which have the same support.
This graph was considered before when $q=3$ and $w=n-1$ in~\cite{Kro08,KOP16} who provided some arguments for the
code-anticode bound for this family of codes. Nevertheless, similarly to the proof in~\cite{AAK01},
the proof in~\cite{Kro08,KOP16} is not complete. In~\cite{Kro08} it is proved that there is only one such diameter perfect code
with minimum distance 4, based on the nonexistence of related perfect colorings~\cite{Fon07}. There is also a construction
in~\cite{Kro08} of diameter perfect codes with minimum distance 5. In~\cite{KOP16} there is a proof that a
diameter perfect code with minimum distance 5 and length 16 does not exists. A diameter perfect code with minimum distance 5 and length 64
is derived from a construction in~\cite{BDMW}.
In this paper, we provide a general proof of the code-anticode bound for J$_q(n,w)$, $q >2$.
We distinguish between six families of diameter perfect codes
in J$_q(n,w)$ and four families of maximum size anticodes associated with them. It should be noted that
each family of non-binary diameter perfect constant-weight codes is a family of maximum size constant-weight codes
with the related parameters (the length of the codewords will be denoted by $n$, their weight by $w$, the minimum distance of the code by $d$, and
the alphabet size by $q$). This problem of finding the maximum size of such codes was extensively studied, e.g.~\cite{CDLL08,ChLi07,ZhGe13}
and references therein.

The rest of this paper is organized as follows. In Section~\ref{sec:code-anticode}, the sphere-packing bound, the
code-anticode bound, and its generalization presented in~\cite{AAK01}, are introduced. The
local inequality lemma from which the code-anticode bound can be proved, is also presented in this section. Based on the code-anticode bound
the concept of a diameter perfect code, which is a generalization of a perfect code, is defined. Finally, the set of words
with weight $w$ over an alphabet with $q$ letters is defined and some preliminaries are given. In Section~\ref{sec:diam_John},
the code-anticode bound is proved for the Johnson scheme and it is shown that Steiner systems and complements of Steiner systems
are diameter perfect codes in this scheme. In Section~\ref{sec:no_bin_diam_John}, non-binary diameter perfect constant-weight codes
are discussed. The code-anticode bound is proved for non-binary constant-weight words and six families of diameter
perfect codes are characterized. In Section~\ref{sec:w=n}, non-binary diameter perfect constant-weight codes for which $w=n$
are discussed. These codes are derived from diameter perfect codes in the Hamming scheme. There is a one-to-one correspondence between
these two families of codes. In Section~\ref{sec:perfect}, non-binary perfect constant-weight codes for which the alphabet size is $2^k+1$,
the length is $n$, and the weight is $w=n-1$ are considered. Most of these codes are over ternary alphabet.
In Section~\ref{sec:GST}, we define the family of generalized Steiner systems and show that each
generalized Steiner system is a diameter perfect constant-weight code. In Section~\ref{sec:MDS-CW}, the family of MDS constant-weight codes
is defined. This family got its name from MDS codes since the minimum weight codewords of any MDS code form
such a code. Section~\ref{sec:d=w+1} is devoted to codes for which $d=w+1$. Codes in this family have exactly one codeword
for each support which consists of any $w$-subset of the $n$ coordinates. In Section~\ref{sec:MOA-CW}, the last family called
multiple orthogonal arrays constant-weight codes is defined. For this family $d < w$ and each $w$ coordinates are the supports
of codewords which form an orthogonal array. In Section~\ref{sec:anticodes}, we compare between the different maximum size anticodes
associated with the six families of diameter perfect constant-weight codes. We distinguish between four families of such anticodes.
Finally, in Section~\ref{sec:conclude} we summarize our results and suggest a list of open problems for future research.

\section{The Code-Anticode Bound and Preliminaries}
\label{sec:code-anticode}

Let $\cV$ be a space and $d:\cV \times \cV \longrightarrow \Z$ be a metric defined on $\cV$. In the metric $d$,
the {\bf \emph{ball}}, $\cB_e(x)$ of radius $e$ around $x \in \cV$ is the set of elements which are at distance $e$ from $x$,
i.e. $\cB_e (x) \triangleq \{ y ~:~ d(x,y) \leq e \}$. The metric is called {\bf \emph{regular}} if the size of the ball does not depend
on its center. In this case the ball of radius $e$ is denoted by $\cB(e)$.
The well-known sphere-packing bound is introduced in the following theorem.

\begin{theorem}
\label{thm:sphere_pack}
If $\cC \subseteq \cV$ is a code with minimum distance $2e+1$, associated with a regular metric, then
$$
\abs{\cC} \cdot \abs{\cB(e)} \leq \abs{\cV}~.
$$
\end{theorem}

A code which meets the bound of Theorem~\ref{thm:sphere_pack} is called a \emph{perfect code}.
Theorem~\ref{thm:sphere_pack} is generalized by the code-anticode bound which will be introduced now.
The bound is an immediate consequence of the following theorem which was proved by Delsarte~\cite{Del73}.

\begin{theorem}
\label{thm:ditinct_dist_bound}
Let $\cV$ be the set of vertices in a distance-regular graph and let $X$ and $Y$ be two set of vertices in $\cV$.
If the nonzero distances occurring between vertices of $X$ do not occur between vertices of $Y$, then
$$
\abs{X} \cdot \abs{Y} \leq \abs{\cV}~.
$$
\end{theorem}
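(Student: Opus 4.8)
The plan is to prove Theorem~\ref{thm:ditinct_dist_bound} using the eigenvalue structure of distance-regular graphs, which is the tool Delsarte introduced for exactly this kind of distance-distribution argument. First I would set up the Bose--Mesner algebra of the distance-regular graph on $\cV$. Let the graph have diameter $\rho$, and let $A_0 = I, A_1, \dots, A_\rho$ be the distance matrices, where $(A_i)_{u,v} = 1$ precisely when $\gd{u}{v} = i$. These matrices are simultaneously diagonalizable and span a commutative algebra; there is an orthogonal basis of idempotents $E_0, E_1, \dots, E_\rho$ (the primitive idempotents) with $E_0 = \frac{1}{\abs{\cV}} J$, where $J$ is the all-ones matrix, and $\sum_j E_j = I$.

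Next I would encode the hypothesis combinatorially. Let $\chi_X, \chi_Y \in \R^{\cV}$ be the characteristic vectors of $X$ and $Y$. The quantity I want to control is $\chi_X^{\mathsf T} J \chi_Y = \abs{X}\cdot\abs{Y}$, which equals $\abs{\cV}\,\chi_X^{\mathsf T} E_0 \chi_Y$. The idea is to compare this against an inner product taken in the metric structure. Let $S_X \subseteq \{1,\dots,\rho\}$ be the set of nonzero distances occurring between vertices of $X$, and define the matrix $M = \sum_{i \in S_X} A_i$, so that $\chi_X^{\mathsf T} M \chi_X$ counts ordered pairs in $X$ at a nonzero distance, but crucially, by the hypothesis that these distances do \emph{not} occur between vertices of $Y$, we have $\chi_Y^{\mathsf T} M \chi_Y = 0$. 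Expanding $M$ in the idempotent basis as $M = \sum_j \mu_j E_j$ with real eigenvalues $\mu_j$, and writing $\chi_X = \sum_j x_j$, $\chi_Y = \sum_j y_j$ with $x_j = E_j \chi_X$, the two quadratic forms become $\sum_j \mu_j \norm{x_j}^2$ and $\sum_j \mu_j \norm{y_j}^2$, where I use the orthogonality $E_j E_k = \delta_{jk} E_j$ and idempotency.

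The core of the argument is then to relate $\abs{X}\abs{Y}$ to these spectral expansions via the $j=0$ term, which carries the factor $\abs{\cV}$. Concretely, $\norm{x_0}^2 = \chi_X^{\mathsf T} E_0 \chi_X = \abs{X}^2/\abs{\cV}$ and likewise $\norm{y_0}^2 = \abs{Y}^2/\abs{\cV}$, so that $\abs{X}\abs{Y} = \abs{\cV}\,\norm{x_0}\,\norm{y_0}$. I would then choose the coefficients defining $M$ not merely as the indicator of $S_X$ but as a nonnegative combination realizing a suitable polynomial in the adjacency matrix --- equivalently, pick the matrix $M$ so that $\mu_0$ is as large as possible while all other $\mu_j$ are nonpositive, subject to $M$ being supported on distances in $S_X \cup \{0\}$. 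With such a choice, $\chi_Y^{\mathsf T} M \chi_Y = \mu_0 \norm{y_0}^2 + \sum_{j\geq 1}\mu_j\norm{y_j}^2$, where the vanishing of the left side (since the distances in $S_X$ are absent from $Y$, and one handles the $i=0$ diagonal term separately as $\abs{Y}$) forces $\mu_0\norm{y_0}^2 \le -\sum_{j\ge1}\mu_j\norm{y_j}^2$; combined with the same manipulation for $X$ and a Cauchy--Schwarz step on the $j=0$ components, this yields $\abs{X}\abs{Y}\le\abs{\cV}$.

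I expect the main obstacle to be the existence and correct normalization of the matrix $M$, that is, proving that one can choose a polynomial $f$ of the adjacency matrix whose off-diagonal support lies exactly in $S_X$, whose zero-distance (diagonal) coefficient is controlled, and whose eigenvalues satisfy $\mu_0 > 0 \ge \mu_j$ for $j \ge 1$. This is precisely the linear-programming / positive-semidefiniteness heart of Delsarte's method, and it is where the distance-regularity is indispensable: it guarantees that each $A_i$ is a polynomial in $A_1$, so that any $\{0,1\}$-combination of distance matrices lies in the Bose--Mesner algebra and is simultaneously diagonalized by the fixed idempotents $E_j$. Once that linear-algebraic fact is in hand, the inequality follows cleanly; without it, the eigenvalues $\mu_j$ would not be well-defined independently of the pair $(X,Y)$, which is exactly the difficulty flagged in the introduction for the disconnected graph $\mathrm{J}_q(n,w)$ when $w<n$.
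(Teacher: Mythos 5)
A point of reference first: the paper does not prove Theorem~\ref{thm:ditinct_dist_bound} at all --- it is quoted from Delsarte~\cite{Del73}, with pointers to the simpler proofs in~\cite{Del76} and~\cite{Roo82} --- so your proposal has to stand on its own. As written it has a genuine gap precisely where you flag one: the existence of the certificate matrix $M$ is the entire content of the theorem, and you neither construct it nor prove it exists. Moreover, the sign conditions you impose ($\mu_0>0\geq\mu_j$ for $j\geq 1$, with $M$ supported off the diagonal on $S_X$) make your final display into the Hoffman ratio bound for the coclique $Y$ in the graph whose edges are the pairs at a distance in $S_X$; that yields $\abs{Y}\leq\abs{\cV}\,\frac{-\mu_{\min}}{\mu_0-\mu_{\min}}$, which becomes $\abs{X}\cdot\abs{Y}\leq\abs{\cV}$ only after a second, clique-type spectral inequality $(\abs{X}-1)(-\mu_{\min})\leq\mu_0$ that your ``same manipulation for $X$'' gestures at but does not carry out (and which itself needs the scheme structure). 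The Cauchy--Schwarz step on the $j=0$ components does no work, since $\|E_0\chi_X\|^2=\abs{X}^2/\abs{\cV}$ and $\|E_0\chi_Y\|^2=\abs{Y}^2/\abs{\cV}$ are explicit.

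The standard short proof avoids any certificate by using the orthogonality relations of the scheme directly. With $v_i$ the valencies, $m_j$ the multiplicities, and $\sum_i P_i(j)P_i(k)/v_i=\delta_{jk}\abs{\cV}/m_j$, one gets
$$
\sum_{i=0}^{\rho}\frac{1}{v_i}\parenv{\chi_X^{\mathsf T}A_i\chi_X}\parenv{\chi_Y^{\mathsf T}A_i\chi_Y}
=\sum_{j=0}^{\rho}\frac{\abs{\cV}}{m_j}\,\|E_j\chi_X\|^2\,\|E_j\chi_Y\|^2
\;\geq\;\abs{\cV}\,\|E_0\chi_X\|^2\,\|E_0\chi_Y\|^2=\frac{\abs{X}^2\abs{Y}^2}{\abs{\cV}}\,,
$$
since every summand on the right is nonnegative; under the hypothesis every term with $i\geq 1$ on the left vanishes (one of its two factors is zero), so the left side equals $\abs{X}\cdot\abs{Y}$, and the bound follows. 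If you insist on the certificate route, the correct certificate built from $X$ is $N=\sum_j\frac{\abs{\cV}}{m_j}\|E_j\chi_X\|^2E_j=\sum_i\frac{1}{v_i}(\chi_X^{\mathsf T}A_i\chi_X)A_i$: its coefficients vanish off $S_X\cup\{0\}$, its eigenvalues are all \emph{nonnegative} (not nonpositive), $c_0=\abs{X}$ and $\mu_0=\abs{X}^2$, giving $\abs{Y}\leq\abs{\cV}c_0/\mu_0=\abs{\cV}/\abs{X}$ --- but exhibiting it already requires the same orthogonality relations, so nothing is saved over the direct computation.
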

\begin{corollary}
\label{cor:ditinct_dist_bound}
Let $\cV$ be the set of vertices in a distance-regular graph and let $X$ and $Y$ be two sets of vertices in $\cV$.
If the the minimum distance in $X$ is $D+1$ and the maximum distance in $Y$ is $D$, then
$$
\abs{X} \cdot \abs{Y} \leq \abs{\cV}~.
$$
\end{corollary}

Theorem~\ref{thm:ditinct_dist_bound} was proved in~\cite{Del73} for distance-regular graphs using the duality theorem
for linear programming. Much simpler proofs to the same bound for distance-regular graphs were given by~\cite{Del76}
and~\cite{Roo82}. Theorem~\ref{thm:ditinct_dist_bound}
was generalized by Ahlswede, Aydinian, and Khachatrian~\cite{AAK01} with the following lemma which is the key result
required for the definition of diameter perfect codes. It has several proofs, depending on the
space and the metric being considered. Nevertheless, there are spaces and metrics for which this lemma is not satisfied
and hence it cannot be used for those spaces and metrics.

\begin{lemma}
\label{lem:local_inequality}
Let $\cC_\cD$ be a code in a space $\cV$ with a metric $d: \cV \times \cV \longrightarrow \Z$ (satisfying some conditions),
where the distances between the codewords in
$\cC$ are taken from a subset $\cD$. Let~$\cA$ be a subset of $\cV$ and let $\cC'_\cD \subseteq \cA$ be the
largest code in $\cA$ with distances taken from $\cD$. Then
\begin{equation}
\label{eq:local_inequality}
\frac{\abs{\cC_\cD}}{\abs{\cV}} \leq \frac{\abs{\cC'_\cD}}{\abs{\cA}} ~.
\end{equation}
\end{lemma}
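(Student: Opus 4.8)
The plan is to prove \eqref{eq:local_inequality} by a double-counting (averaging) argument over a transitive group of metric-preserving automorphisms of $\cV$; the unstated ``conditions'' on the metric are precisely that such a group exists. First I would fix a group $G$ acting transitively on $\cV$ in which every $g \in G$ is a bijection of $\cV$ preserving $d$. Such a $g$ maps any finite configuration to one with the same multiset of pairwise distances, so it carries codes with distances in $\cD$ to codes with distances in $\cD$.

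The core pointwise observation is that for every $g \in G$ the set $g^{-1}(\cC_\cD) \cap \cA$ lies in $\cA$ and has all pairwise distances in $\cD$: it is a subset of $\cA$, and since $g^{-1}$ preserves $d$, its pairwise distances form a subset of those occurring in $\cC_\cD$, hence lie in $\cD$. Because $\cC'_\cD$ is, by hypothesis, the \emph{largest} such code inside $\cA$, this gives
\[
\abs{\cC_\cD \cap g(\cA)} = \abs{g^{-1}(\cC_\cD) \cap \cA} \leq \abs{\cC'_\cD} \qquad \text{for every } g \in G.
\]

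Next I would sum this bound over all $g \in G$ and evaluate the left-hand side by exchanging the order of summation,
\[
\sum_{g \in G} \abs{\cC_\cD \cap g(\cA)} = \sum_{x \in \cC_\cD} \abs{\mathset{g \in G ~:~ x \in g(\cA)}}.
\]
The decisive step is to show that the inner count $N(x) := \abs{\mathset{g \in G : x \in g(\cA)}}$ is independent of $x \in \cV$. This is exactly where transitivity does the work: given $x, x' \in \cV$, pick $h \in G$ with $h(x) = x'$; then $g \mapsto hg$ is a bijection of $G$ that restricts to a bijection between $\mathset{g : x \in g(\cA)}$ and $\mathset{g : x' \in g(\cA)}$, since $x \in g(\cA)$ iff $x' = h(x) \in (hg)(\cA)$. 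Hence $N(x) = N$ is constant, and counting the pairs $(x,g)$ with $x \in g(\cA)$ in two ways gives $\abs{\cV}\cdot N = \sum_{g} \abs{g(\cA)} = \abs{G}\cdot\abs{\cA}$, so $N = \abs{G}\,\abs{\cA}/\abs{\cV}$.

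Combining the two computations, $\sum_{g} \abs{\cC_\cD \cap g(\cA)} = \abs{\cC_\cD}\cdot N = \abs{\cC_\cD}\,\abs{G}\,\abs{\cA}/\abs{\cV}$, while the pointwise bound yields $\abs{\cC_\cD}\,\abs{G}\,\abs{\cA}/\abs{\cV} \leq \abs{G}\cdot\abs{\cC'_\cD}$; cancelling $\abs{G}$ and rearranging gives the claim. I expect the one genuine obstacle to be the verification that $N(x)$ is constant, i.e. that $G$ acts transitively and by metric automorphisms on the space in question --- this is precisely the content hidden in ``satisfying some conditions,'' and in the present paper it must be checked for the relevant scheme (the Johnson scheme, and later the disconnected graph J$_q(n,w)$, where transitivity has to be argued with some care). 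Once transitivity is granted, everything else is routine double counting.
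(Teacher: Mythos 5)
Your proof is correct and is essentially the paper's argument: the paper states this lemma only in general form and proves it instance by instance (Lemma~\ref{lem:local_John} for J$(n,w)$ with $G=S_n$, and Lemma~\ref{lem:gen_anticode_preJQ} for J$_q(n,w)$ with coordinate permutations augmented by symbol permutations, organized there via an orthogonal array), and each of those proofs is exactly your double count of pairs $(x,g)$ with $x\in\cC_\cD\cap g(\cA)$, with the constant $N=\abs{G}\abs{\cA}/\abs{\cV}$ appearing as $w!\,(n-w)!\,\abs{\cA}$ in the Johnson case. You also correctly identify that the hidden ``conditions'' amount to the existence of a transitive group of isometries, which is precisely the point the paper must (and does) verify separately for the disconnected space J$_q(n,w)$.
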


The conditions required for Lemma~\ref{lem:local_inequality} can be different depending on the space and the metric.
The lemma was proved for several such spaces and metrics, e.g.~\cite{BuEt15,Etz11,EGRW16,TaSc10}.
Lemma~\ref{lem:local_inequality} will be referred to as the {\bf \emph{local inequality lemma}} and~(\ref{eq:local_inequality})
as the {\bf \emph{local inequality bound}}. The local inequality lemma implies the code-anticode bound as follows.
Choose $\cA$ in Lemma~\ref{lem:local_inequality} to be an anticode with diameter $D$ and choose $\cC_\cD$ to be a code
with distances between codewords taken from the subset $\cD \triangleq \{ i ~:~ D+1 \leq i \leq \Delta \}$, for some $\Delta \geq D+1$.
Hence, $\cC'_\cD$ is a code with exactly one codeword and therefore~(\ref{eq:local_inequality}) implies that
\begin{equation}
\label{eq:code_anticode_bound}
\abs{\cC_\cD} \cdot \abs{\cA} \leq  \abs{\cV} ~,
\end{equation}
which will be referred to as the {\bf \emph{code-anticode bound}}.

Lemma~\ref{lem:local_inequality} was proved in~\cite{AAK01} only for the Johnson graph J$(n,w)$, but it was claimed that
it holds for any graph which admits a transitive group of automorphism. The proof for the Johnson graph was not trivial and a different proof,
based on the transitive operation,
should be adapted to different distance-regular graphs. In particular it can be easily adapted to the Hamming graph.
Distance-regular graphs are associated with association schemes~\cite{Del73} from which the most important one in coding theory
is the Hamming scheme associated with the Hamming distance. The following lemma is well-known in the Hamming scheme
as well as for any other metric. It can be proved using the triangle inequality.

\begin{lemma}
\label{lem:ball=anti}
A ball with radius $e$ in a metric is an anticode with diameter $2e$.
\end{lemma}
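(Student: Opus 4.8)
The plan is to fix the center of the ball and read off its diameter directly from the triangle inequality. First I would let $\cA \triangleq \cB_e(x)$ be the ball of radius $e$ about an arbitrary point $x \in \cV$, so that by definition $d(x,y) \le e$ for every $y \in \cA$. To bound the diameter of $\cA$ from above, take any two points $y,z \in \cA$ and apply the triangle inequality:
$$
d(y,z) \le d(y,x) + d(x,z) \le e + e = 2e~.
$$
Since $y$ and $z$ are arbitrary, the maximum distance between elements of $\cA$ is at most $2e$, so $\cA$ is an anticode of diameter at most $2e$. This half of the argument uses nothing beyond the metric axioms, and it is the only part needed for the application to the sphere-packing bound: a code of minimum distance $2e+1$ has all pairwise distances strictly larger than the diameter of the ball, so taking $\cA$ to be a ball of radius $e$ in the code-anticode bound~(\ref{eq:code_anticode_bound}) immediately recovers Theorem~\ref{thm:sphere_pack}.

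To see that the diameter is \emph{exactly} $2e$, and not merely at most $2e$, I would exhibit two points of the ball at distance precisely $2e$. In the graph metrics considered here this is where a mild structural input enters: one must know that the underlying graph contains a geodesic of length $2e$ through $x$, i.e.\ vertices $y,z$ with $d(x,y)=d(x,z)=e$ and $d(y,z)=2e$. For a distance-regular graph (and, after restricting to a fixed support, for each component of J$_q(n,w)$) such a pair exists whenever $e$ does not exceed half the diameter of the graph, which is exactly the regime in which a ball of radius $e$ is a proper anticode. Choosing such $y,z$ shows the upper bound $2e$ is attained, whence the diameter of $\cA$ equals $2e$.

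The only genuine obstacle is this second step: the triangle inequality alone yields an inequality, and asserting the exact value $2e$ presupposes that the space is rich enough to realize the extremal distance. I would therefore read the conclusion ``diameter $2e$'' with the understanding that for the regular metrics and graphs under consideration the extremal pair always exists; in a degenerate space of small diameter the ball would still be an anticode, but of smaller diameter, and only the upper bound would survive.
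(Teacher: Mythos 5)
Your first paragraph is exactly the paper's proof: the triangle inequality applied to two points of the ball gives $d(y,z)\le 2e$, and the paper stops there, reading ``anticode with diameter $2e$'' as an upper bound on the diameter. Your second and third paragraphs, on when the value $2e$ is actually attained, go beyond what the paper proves (and beyond what any of its applications require), but they are a correct and worthwhile caveat rather than a gap.
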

\begin{proof}
Let $y,z$ be two distinct words in a the ball centered at some word~$x$, i.e., $d(y,x) \leq e$ and $d(x,z) \leq e$.
Hence, by the triangle inequality,
$$
d(y,z) \leq d(y,x) + d(x,z) \leq 2e ,
$$
which implies that an ball with radius $e$ is an anticode with diameter $2e$.
\end{proof}

A code $\cC$ which attains~(\ref{eq:code_anticode_bound}) with equality is called a {\bf \emph{$D$-diameter perfect code}}.
The following theorem implies a tight connection between perfect codes and diameter perfect codes. It is an immediate consequence
from Theorem~\ref{thm:sphere_pack} and Lemma~\ref{lem:ball=anti}.

\begin{theorem}
\label{thm:perfect=diameter}
If $d:\cV \times \cV \longrightarrow \Z$ is a regular metric for which
the local inequality bound~(\ref{eq:local_inequality}) is satisfied,
then an $e$-perfect code in $\cV$ is a $(2e)$-diameter perfect code.
\end{theorem}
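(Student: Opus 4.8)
The plan is to exhibit, for the given $e$-perfect code $\cC$, a single concrete anticode of diameter $2e$ against which the code-anticode bound~(\ref{eq:code_anticode_bound}) is met with equality; the natural candidate is a ball of radius $e$. First I would record the two ingredients. Since $\cC$ is an $e$-perfect code, its minimum distance is $2e+1$, and because the metric is regular the sphere-packing bound of Theorem~\ref{thm:sphere_pack} is attained with equality, so $\abs{\cC} \cdot \abs{\cB(e)} = \abs{\cV}$. Independently, Lemma~\ref{lem:ball=anti} shows that $\cB(e)$ is an anticode of diameter $2e$.

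Next I would invoke the code-anticode bound. By hypothesis the local inequality bound~(\ref{eq:local_inequality}) holds for $d$, so its consequence~(\ref{eq:code_anticode_bound}) is available. Setting $D=2e$, taking the anticode to be $\cA=\cB(e)$ (legitimate by Lemma~\ref{lem:ball=anti}), and choosing the code to be $\cC$ with distance set $\cD = \{\, i : 2e+1 \le i \le \Delta \,\}$ for a suitable $\Delta$, the bound applies: this choice of $\cD$ is valid because every nonzero distance occurring in $\cC$ is at least its minimum distance $2e+1 = D+1$. Hence~(\ref{eq:code_anticode_bound}) yields $\abs{\cC} \cdot \abs{\cB(e)} \le \abs{\cV}$.

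Finally I would combine the two facts: the sphere-packing equality forces the code-anticode inequality to be an equality as well. Thus $\cC$ attains~(\ref{eq:code_anticode_bound}) with equality against the diameter-$2e$ anticode $\cB(e)$, which is precisely the definition of a $(2e)$-diameter perfect code.

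The argument is short, and there is essentially no analytic obstacle once the local inequality is assumed; the only point that deserves care is conceptual, namely making sure that $\cB(e)$ is a legitimate witness, i.e., that equality is attained against an anticode of the correct diameter. It is worth flagging a clean byproduct that reassures us no choice of anticode is lost by working with the ball: since~(\ref{eq:code_anticode_bound}) holds for \emph{every} anticode of diameter $2e$, the equality just obtained forces $\cB(e)$ to be a maximum-size anticode of diameter $2e$, for otherwise a strictly larger such anticode would contradict~(\ref{eq:code_anticode_bound}).
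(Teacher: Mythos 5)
Your proposal is correct and follows exactly the route the paper intends: the paper presents this theorem as an immediate consequence of Theorem~\ref{thm:sphere_pack} and Lemma~\ref{lem:ball=anti}, which is precisely the combination you spell out (sphere-packing equality for the $e$-perfect code, the ball $\cB(e)$ as a diameter-$2e$ anticode, and the code-anticode bound forcing equality). Your closing remark that the equality also certifies $\cB(e)$ as a maximum-size anticode of diameter $2e$ is a correct and worthwhile observation.
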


We continue with the spaces and metrics which will be discussed in this paper.

\begin{definition}
Let J$(n,w)$ denote the set of all $w$-subset of an $n$-set (or equivalently, the set of all the binary words of length $n$
and weight $w$), with the Johnson distance (defined as half of the Hamming distance).
\end{definition}

\begin{definition}
Let J$_q(n,w)$, $q > 2$, denote the set of all the words of length $n$ and weight $w$, over $\Z_q$, with the Hamming distance.
\end{definition}

Clearly, both J$(n,w)$ and J$_q(n,w)$ are contained in the Hamming scheme. But, J$(n,w)$ is a well-known association scheme,
while J$_q(n,w)$ does not define an association scheme and its related graph is not connected. But, since both subsets are part
of the Hamming scheme, they share the following property of the Hamming scheme implied by Theorem~\ref{thm:perfect=diameter}.

\begin{theorem}
\label{thm:eP_diamP}
An $e$-perfect code in J$(n,w)$ is a $(2e)$-diameter perfect code in J$(n,w)$. If the local inequality lemma holds for J$_q(n,w)$, then
an $e$-perfect code in J$_q(n,w)$ is a $(2e)$-diameter perfect code in J$_q(n,w)$.
\end{theorem}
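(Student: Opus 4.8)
The plan is to obtain both assertions as direct applications of Theorem~\ref{thm:perfect=diameter}, whose hypotheses are exactly that the metric be regular and that the local inequality bound~(\ref{eq:local_inequality}) hold. Thus the entire proof reduces to checking these two hypotheses in each of the two spaces; once they are verified, the conclusion that an $e$-perfect code is a $(2e)$-diameter perfect code is immediate from that theorem.

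For J$(n,w)$ both hypotheses are already in hand. The space J$(n,w)$ is the Johnson scheme, a distance-regular graph, so the number of words at Johnson distance at most $e$ from a given word does not depend on that word; hence the metric is regular and the ball $\cB(e)$ is well defined. Moreover, the local inequality lemma (Lemma~\ref{lem:local_inequality}) was established for the Johnson graph in~\cite{AAK01}. Applying Theorem~\ref{thm:perfect=diameter} then yields the first assertion with no further work.

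For J$_q(n,w)$ the local inequality bound is assumed in the hypothesis of the statement, so the only thing requiring a genuine argument is regularity of the Hamming metric on this space. This is the one delicate point, precisely because for $q>2$ the associated graph is disconnected and not distance-regular, so regularity cannot be quoted from the theory of association schemes. I would instead deduce it from vertex-transitivity: the group generated by the $n!$ permutations of the coordinates together with the coordinate-wise permutations of the nonzero symbols that fix~$0$ consists of bijections of J$_q(n,w)$ that preserve both the weight and the Hamming distance, and it acts transitively on the words of weight $w$ (first align the supports by a coordinate permutation, then match the nonzero entries coordinate by coordinate). Any such distance-preserving bijection carries the ball of radius $e$ about a word bijectively onto the ball of radius $e$ about its image, so all balls of radius $e$ have equal cardinality and the metric is regular. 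With regularity established, the hypotheses of Theorem~\ref{thm:perfect=diameter} are met for J$_q(n,w)$ whenever the local inequality holds, and the second assertion follows.

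The main obstacle is exactly this regularity check for J$_q(n,w)$: one must resist invoking distance-regularity, which genuinely fails here, and instead exhibit the transitive automorphism group explicitly and confirm that its generators preserve weight and Hamming distance. Everything else is a routine appeal to Theorem~\ref{thm:perfect=diameter}.
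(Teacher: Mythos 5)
Your proposal is correct and follows the same route as the paper, which states this theorem as an immediate consequence of Theorem~\ref{thm:perfect=diameter} (regular metric plus local inequality) without further proof. Your explicit verification of regularity for J$_q(n,w)$ via the transitive group of coordinate permutations and zero-fixing symbol permutations is a detail the paper leaves implicit, but it is exactly the right justification and does not change the argument.
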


The following definitions will be used in the sequel.
For a word $x=(x_1,x_2,\ldots,x_n)$ over an alphabet $Q$, the {\bf \emph{support}} of $x$, $\supp{x}$,
is the subset of nonzero coordinates in $x$, i.e., $\supp{x} \triangleq \{ i ~:~ x_i \neq 0,~ 1 \leq i \leq n \}$.
The {\bf \emph{weight}} of $x$, $\wt{x}$, is the number of nonzero coordinates in $x$,
i.e., $\wt{x} \triangleq \abs{ \{i ~:~ x_i \neq 0,~ 1 \leq i \leq n \} } = \abs{\supp{x}}$.
Let $Q$ denote an alphabet of size $q$ which contains a \emph{zero} and let $Q^* \triangleq Q \setminus \{ 0 \}$.

\begin{definition}
An $(n,d,w)_q$ code is a code of length $n$ over an alphabet with $q>2$ symbols having minimum Hamming distance $d$.
Let $A_q(n,d,w)$ denote the maximum size of an $(n,d,w)_q$ code.
\end{definition}

In the sequel, for some constructions and bounds we will require to use MDS codes and their nonlinear version of orthogonal arrays.
An {\bf \emph{orthogonal array}} OA$_\lambda (t,n,q)$ is an $(\lambda q^t) \times n$ array $\cM$ over an alphabet $Q$ with $q$ symbols
such that in each projection of $t$ columns from $\cM$ each ordered $t$-tuple of $Q$ appears in exactly $\lambda$ rows. When $\lambda =1$
the orthogonal array is denoted by OA$(t,n,q)$ and is called an {\bf \emph{orthogonal array with index unity}}. If such an orthogonal array with
index unity is a linear code, then the code is called a {\bf \emph{maximum distance separable code}}
(an MDS code in short). This code has length~$n$, dimension $t$, and minimum Hamming distance $n-t+1$.
An orthogonal array OA$(t,n,q)$ is also a $(n-t)$-diameter perfect code~\cite{AAK01} in the Hamming scheme.
The related maximum size anticode of length $n$, diameter $n-t$, and $q^{n-t}$ codewords is defined by
$$
\{ (a_1,a_2,\ldots,a_{n-t} \overbrace{0\cdots \cdots 0}^{t ~ \text{times}}) )   ~:~  a_i \in Q,~ 1 \leq i \leq n-t  \}~.
$$

There are some well-known bounds on the tradeoff between $t$, $n$, and $q$ (see for example~\cite[pp. 11-16]{Rag71}).

\begin{theorem}
\label{thm:b_OA1}
If there exists an OA$(2,n,q)$ then $n \leq q +1$.
\end{theorem}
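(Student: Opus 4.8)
The plan is to treat the $q^2$ rows of the orthogonal array as a set of $q^2$ words of length $n$ over $Q$ and to double count, in two different ways, the number of pairs of rows that agree in a given coordinate. Assume $n \ge 2$, since for $n \le 1$ the claim $n \le q+1$ is trivial.

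First I would record two structural consequences of the definition of OA$(2,n,q)$ (strength two, index unity, hence exactly $q^2$ rows). (i) Any two distinct rows agree in at most one coordinate: if two rows agreed in columns $i$ and $j$, then the same ordered pair would occur in those two columns in two different rows, contradicting that each ordered pair appears in exactly $\lambda=1$ row. (ii) In every single column each symbol of $Q$ occurs exactly $q$ times: fixing a column $k$ and any other column $l$, the strength-two property says that the symbol $s$ sitting in column $k$ is paired with each of the $q$ symbols of $Q$ in column $l$ exactly once, so $s$ occurs exactly $q$ times in column $k$.

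Next I would count the set $T$ of triples consisting of an ordered pair of distinct rows together with a column in which they agree. Counting by columns and using (ii): in each column, each of the $q$ symbols sits in exactly $q$ rows and therefore contributes $q(q-1)$ ordered pairs of distinct agreeing rows, so each column contributes $q \cdot q(q-1) = q^2(q-1)$ triples, giving $\abs{T} = n\, q^2(q-1)$. Counting by pairs of rows and using (i): each of the $q^2(q^2-1)$ ordered pairs of distinct rows agrees in at most one column, so $\abs{T} \le q^2(q^2-1)$. Combining the two estimates yields $n\, q^2(q-1) \le q^2(q^2-1) = q^2(q-1)(q+1)$, and dividing by $q^2(q-1) > 0$ gives $n \le q+1$.

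There is essentially no hard step here; the only things to get right are the exact frequency count in (ii) and the small case $n \le 1$. I expect the main (very mild) obstacle to be fact (ii), since it is the place where index unity is genuinely used and where an off-by-one would propagate through the whole count. An alternative route would reduce columns $3,\dots,n$ to a family of $n-2$ mutually orthogonal Latin squares of order $q$ and invoke the classical bound that there are at most $q-1$ of them; the double-counting argument above is preferable because it is self-contained and avoids that detour.
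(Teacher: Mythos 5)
Your argument is correct. Note that the paper itself gives no proof of this statement: Theorems~\ref{thm:b_OA1}--\ref{thm:b_OA3} are quoted as known bounds with a pointer to Raghavarao's book, so there is no in-paper argument to compare against. Your double count is the standard one and it checks out: fact (i) (two distinct rows agree in at most one column) and fact (ii) (each symbol occurs exactly $q$ times in each column) both follow correctly from index unity with $n\ge 2$, the column-side count $n\,q^2(q-1)$ and the pair-side bound $q^2(q^2-1)$ are right, and the factorization $q^2-1=(q-1)(q+1)$ delivers $n\le q+1$. The only pedantic caveat is that the final division by $q^2(q-1)$ needs $q\ge 2$, which is the implicit standing assumption for a nondegenerate alphabet (for $q=1$ the statement itself fails trivially). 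Your proof is self-contained and more elementary than the detour through mutually orthogonal Latin squares, and it would serve as a legitimate replacement for the external citation.
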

\begin{theorem}
\label{thm:b_OA2}
Assume that there exists an OA$(t,n,q)$, where $t \geq 3$ and $q \geq t$.
\begin{enumerate}
\item If $q$ is even then $n \leq q+t-1$.

\item If $q$ is odd then $n \leq q+t-2$.
\end{enumerate}
\end{theorem}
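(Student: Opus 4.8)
The plan is to handle both parities through a single combinatorial reduction and then isolate the one genuinely hard case. The engine is a \emph{restriction} operation: given an OA$(t,n,q)$ of index unity, fix one column to a prescribed symbol, say $0$, retain only the rows carrying that symbol in that column, and then delete the column. Because every choice of $t$ columns meets each ordered $t$-tuple exactly once, in particular the deleted column together with any $t-1$ survivors, the retained rows meet each $(t-1)$-tuple on any $t-1$ surviving columns exactly once; hence the output is an OA$(t-1,n-1,q)$ of index unity. This one step drives everything that follows.

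For item~(1) I would induct on $t$ with Theorem~\ref{thm:b_OA1} as the base case: for $t=2$ it gives $n\le q+1=q+t-1$. Applying the restriction once to a putative OA$(t,n,q)$ with $t\ge 3$ and $q\ge t$ produces an OA$(t-1,n-1,q)$ of index unity; since $q\ge t>t-1$, the inductive hypothesis yields $n-1\le q+(t-1)-1$, that is $n\le q+t-1$. This settles item~(1) with no parity input whatsoever, and indeed establishes $n\le q+t-1$ for every $q$.

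For item~(2) I would use the bound just proved to reduce the claim to excluding the extremal length. Assume $q$ is odd and, for contradiction, that an OA$(t,q+t-1,q)$ of index unity exists. Apply the restriction $t-3$ times; at each stage the strength and the number of columns each drop by one while index unity is preserved, and one arrives at an OA$(3,q+2,q)$ of index unity. Thus the entire odd case collapses to a single base statement: for odd $q$ no OA$(3,q+2,q)$ of index unity exists (equivalently $n\le q+1=q+t-2$ when $t=3$), from which the general odd bound propagates back up the reduction.

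This base case is where the real work lies, and it is the step I expect to be the main obstacle. Parametrising the $q^3$ rows by three index-unity columns and reading each of the remaining $q-1$ columns as a map $Q^3\to Q$, the index-unity conditions force each such map to be a bijection in every single coordinate and to be pairwise ``orthogonal''; slicing by one coordinate recovers a complete set of $q-1$ mutually orthogonal Latin squares, i.e.\ an affine plane of order $q$, and the surviving conditions say these slices glue into a Latin cube with extra cross-orthogonality. The nonexistence for odd $q$ is then a parity statement: one forms the product of the signs of the permutations read off from the rows and columns of these squares and shows the gluing conditions pin this product to a value attainable only when $q$ is even. Carrying out this sign computation carefully --- morally the array-theoretic analogue of the fact that an oval extends to a hyperoval only in even characteristic, and the content of Bush's argument --- is the delicate part, in contrast to the clean inductive reductions that frame it.
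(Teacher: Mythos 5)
Your structural work is correct and worth keeping: the restriction operation (fix a symbol in one column, retain the matching $q^{t-1}$ rows, delete the column) does produce an index-unity OA$(t-1,n-1,q)$ from an index-unity OA$(t,n,q)$; the induction anchored at Theorem~\ref{thm:b_OA1} therefore gives $n\le q+t-1$ for every $q$, which settles part~(1); and iterating the restriction $t-3$ times correctly collapses part~(2) to the single claim that no index-unity OA$(3,q+2,q)$ exists when $q$ is odd. Note that the paper itself supplies no proof of this theorem at all --- it is quoted from the literature (Raghavarao's book; the bounds are Bush's) --- so on part~(1) you have already done more than the paper does.

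The gap is part~(2): the base case, which you yourself flag as ``where the real work lies,'' is asserted rather than proved, and the mechanism you gesture at is too vague to assess. You say one should take ``the product of the signs of the permutations read off from the rows and columns of these squares'' and show the gluing conditions force a value attainable only for even $q$, but you never identify which permutations, which product, or which gluing condition constrains it --- and it is not evident that any such sign invariant is actually pinned down by the orthogonal-array axioms. The classical argument is of a different flavor: it is a coincidence-counting/pairing argument (Bush's), whose linear/geometric shadow is the fact that a $(q+2)$-arc in $\mathrm{PG}(2,q)$ cannot exist for odd $q$ because the secants through a point off the arc partition its $q+2$ points into pairs, forcing $q$ to be even. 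Naive counting on the OA itself does not immediately suffice either: the distance distribution of the associated $(q+2,q^3,q)$ MDS code (every two rows agree in exactly $0$ or $2$ positions) is internally consistent for all $q$, so the contradiction for odd $q$ has to be extracted from a finer combinatorial configuration. Until that computation is actually carried out, part~(2) stands reduced to its essential core but unproved.
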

\begin{theorem}
\label{thm:b_OA3}
If there exists an OA$(t,n,q)$, where $q \leq t$, then $n \leq t+1$.
\end{theorem}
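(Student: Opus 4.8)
The plan is to argue by contradiction: assume an OA$(t,n,q)$ of index unity exists with $q \leq t$ but $n \geq t+2$, and derive a contradiction from the strength-$2$ bound already established in Theorem~\ref{thm:b_OA1}. The guiding idea is that the hypothesis $q \leq t$ makes the strength $t$ too large to bound directly, so instead of working with the array itself I would pass to its \emph{dual} array, whose strength $n-t \geq 2$ is small, and then squeeze that strength down to $2$.

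First I would record the elementary \textbf{reduction step}: given an OA$(s,m,q)$ of index unity with $s \geq 1$, fix any one symbol $a \in Q$ in a chosen column, keep only the rows carrying $a$ there (there are $q^{s-1}$ of them, since each column is balanced), and delete that column. Inspecting any $s-1$ of the remaining columns together with the deleted one shows that the resulting $q^{s-1} \times (m-1)$ array is an OA$(s-1,m-1,q)$ of index unity. Thus strength and length can be lowered together, one at a time.

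Next comes the \textbf{key step}, which I also expect to be the main obstacle: identifying the given OA$(t,n,q)$ of index unity with an MDS code of length $n$, ``dimension'' $t$, and minimum distance $n-t+1$, and producing its dual as an OA$(n-t,n,q)$ of index unity. For a linear MDS code this is immediate, the dual code being MDS with the complementary parameters; for a possibly nonlinear index-unity array the same conclusion should follow from the fact that the distance distribution of an MDS code is uniquely determined by $n,t,q$, so that Delsarte's duality (the MacWilliams-type transform) delivers a dual array that is again an orthogonal array of index unity, of strength $n-t$. Establishing this duality cleanly in the nonlinear setting is the delicate point of the whole argument.

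Granting the dual OA$(n-t,n,q)$ of index unity, the conclusion is quick. Since $n \geq t+2$ we have $n-t \geq 2$, so I would apply the reduction step $n-t-2$ times to obtain an OA$(2,t+2,q)$ of index unity. Theorem~\ref{thm:b_OA1} then forces $t+2 \leq q+1$, i.e.\ $t \leq q-1$, contradicting $q \leq t$. Hence no such $n \geq t+2$ exists and $n \leq t+1$, as claimed. Should one prefer to avoid invoking the dual for every $n$ at once, the same scheme can be run as an induction on $t$: the reduction step disposes of all $t > q$ at once and leaves only the base case $t=q$, $n=q+2$, where the dual is exactly the two-dimensional array OA$(2,q+2,q)$ needed for Theorem~\ref{thm:b_OA1}.
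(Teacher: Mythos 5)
The paper does not actually prove this statement: it is quoted as Bush's classical bound on orthogonal arrays of index unity and attributed to the literature (Raghavarao, pp.~11--16), so there is no in-paper proof to compare yours against. Judged on its own, your argument has one genuine gap, and you have correctly located it yourself: the ``key step'' producing a dual OA$(n-t,n,q)$ of index unity from the given OA$(t,n,q)$. For a \emph{linear} MDS code this is just the dual code, but an orthogonal array of index unity need not be linear, and a nonlinear code has no dual code. The justification you offer --- that the distance distribution of an MDS code is forced, so ``Delsarte's duality delivers a dual array'' --- does not work: the MacWilliams transform of a distance distribution is a nonnegative vector of rational numbers, not a set of codewords, and no actual array with $q^{n-t}$ rows and strength $n-t$ comes out of it. To my knowledge no such combinatorial duality is established for nonlinear index-unity arrays, so the whole chain (dual array $\to$ repeated derivation $\to$ OA$(2,t+2,q)$ $\to$ Theorem~\ref{thm:b_OA1}) rests on an unproved and probably hard assertion. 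The derivation step itself (fixing a symbol in a column to drop from OA$(s,m,q)$ to OA$(s-1,m-1,q)$) and the final arithmetic are fine.

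Ironically, the correct ingredient is hiding inside the flawed key step. You do not need a dual \emph{array}, only Delsarte's inequality that the MacWilliams transform $\{A^\perp_w\}$ of the (uniquely determined) distance distribution of the MDS code is nonnegative. That transform is the formal weight distribution of an $[n,n-t,t+1]$ MDS code, and in particular
$$
A^\perp_{t+2}=\binom{n}{t+2}\Bigl[(q^2-1)-(t+2)(q-1)\Bigr]=\binom{n}{t+2}(q-1)(q-t-1)~,
$$
so if $q\leq t$ the last factor is negative and nonnegativity forces $\binom{n}{t+2}=0$, i.e.\ $n\leq t+1$. This bypasses both the dual array and Theorem~\ref{thm:b_OA1}. (Bush's original proof avoids Delsarte altogether and is a direct counting argument on the rows of the array.) As written, though, your proof is not complete.
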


The following result~\cite[pp. 317--331]{McSl77} is derived from MDS codes.
\begin{theorem}
\label{thm:MDS_conjecture}
If $t \leq n-2$, then there exists an OA$(t,n,q)$ if $n \leq q+1$ for each prime power~$q$
and $2 \leq t \leq q-1$. The only exception is when $q$ is even and $t \in \{3,q-1\}$ in which case $n \leq q+2$.
\end{theorem}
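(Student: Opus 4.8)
The plan is to reduce the existence of the orthogonal array to the existence of a linear MDS code over $\F_q$ and then to exhibit such codes by Reed--Solomon type constructions. First I would record the basic dictionary between index-unity orthogonal arrays and MDS codes. If $G$ is a generator matrix of a linear $[n,k]$ code $C$ over $\F_q$, then projecting the $q^k$ codewords onto any $t$ coordinates is the $\F_q$-linear map $x \mapsto x G_S$, where $G_S$ is the corresponding $t \times k$ submatrix; its image is all of $\F_q^t$, with every $t$-tuple occurring in exactly $q^{k-t}$ codewords, precisely when those $t$ columns of $G$ are linearly independent. Taking $k=t$ forces index unity, and the requirement that every $t$ columns of the $t \times n$ matrix $G$ be independent is exactly the statement that $C$ is an $[n,t,n-t+1]$ MDS code. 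Hence it suffices to construct an MDS code of dimension $t$ and length $n$ over $\F_q$ for every $n$ in the asserted range and to read off its $q^t$ codewords as the rows of the array.

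Second, I would build these MDS codes from generalized Reed--Solomon codes. Choosing $n \le q$ distinct evaluation points $\alpha_1,\dots,\alpha_n \in \F_q$ and forming the $t \times n$ matrix with columns $(1,\alpha_j,\dots,\alpha_j^{t-1})^{T}$, every $t \times t$ minor is a Vandermonde determinant in distinct points and is therefore nonzero, so the code is MDS of length up to $q$. Appending the single column $(0,\dots,0,1)^{T}$ (``the point at infinity'') extends the construction to length $q+1$, the remaining minors again reducing to nonvanishing Vandermonde-type determinants for $2 \le t \le q-1$. Puncturing such a length-$(q+1)$ MDS code repeatedly keeps it MDS and produces MDS codes of dimension $t$ for every length $n$ with $t \le n \le q+1$; since the hypothesis $t \le n-2$ together with $t \le q-1$ guarantees that $t+2 \le n \le q+1$ is a nonempty range, this settles the main case and explains where the bounds $2 \le t \le q-1$ and $n \le q+1$ come from.

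Finally I would treat the even-$q$ exception, which is the only genuinely delicate point. For $t=3$ one needs an $[q+2,3,q]$ MDS code, that is, a set of $q+2$ points of $\mathrm{PG}(2,q)$ no three of which are collinear. For $q$ even one takes the $q+1$ points of a conic together with its nucleus, obtaining a hyperoval; the fact that the nucleus lies on every tangent line---hence that the enlarged set still has no three collinear points---holds exactly in characteristic two, which is why the exception is confined to even $q$. The columns of this hyperoval furnish the desired $[q+2,3,q]$ MDS code, and for $t=q-1$ I would pass to its dual, invoking the standard fact that the dual of an MDS code is again MDS, so that the dual is an $[q+2,q-1,4]$ MDS code whose codewords form the required OA$(q-1,q+2,q)$. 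The main obstacle is thus not the Reed--Solomon portion, which is routine, but verifying the hyperoval/nucleus configuration and its essential reliance on $q$ being even.
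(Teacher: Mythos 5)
Your proof is correct and is precisely the standard argument underlying this result, which the paper does not prove itself but quotes from MacWilliams and Sloane: index-unity orthogonal arrays of strength $t$ correspond to $[n,t]$ MDS codes, (singly extended) Reed--Solomon codes supply these for all $n\le q+1$ and $2\le t\le q-1$, and the hyperoval obtained from a conic plus its nucleus --- a configuration available only in even characteristic --- together with MDS duality accounts for the exceptional cases $t\in\{3,q-1\}$ with $n=q+2$. Nothing further is needed.
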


\section{Binary Diameter Perfect Constant-Weight Codes}
\label{sec:diam_John}

Our first step is to prove the local inequality lemma and as a consequence to prove the code-anticode bound.
The local inequality lemma for the Johnson scheme was proved in~\cite{AAK01} and for completeness the proof is presented here.
It will be interesting to look on the difference between the proof of this lemma for the Johnson scheme and
its proof for constant-weight codes over a non-binary alphabet (see Lemma~\ref{lem:gen_anticode_preJQ}).

\begin{lemma}
\label{lem:local_John}
Let $\cC_\cD$ be a code in J$(n,w)$ with distances between the codewords of $\cC_\cD$ are taken from a subset $\cD$.
Let $\cA$ be a subset of J$(n,w)$ and let $\cC'_\cD \subseteq \cA$ be the largest code in $\cA$ with distances taken from $\cD$. Then
\begin{equation}
\label{eq:local_John}
\frac{|\cC_\cD|}{\binom{n}{w}} \leq \frac{|\cC'_\cD|}{\abs{\cA}} ~.
\end{equation}
\end{lemma}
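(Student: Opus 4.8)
The plan is to prove the local inequality via a double-counting argument over the orbit of $\cA$ under the automorphism group of the Johnson graph, exploiting transitivity. The symmetric group $S_n$ acts transitively on J$(n,w)$ by permuting the $n$ coordinates, and this action preserves the Johnson distance. The idea is to average the number of codewords of $\cC_\cD$ that land inside a translated (permuted) copy of $\cA$, and to compare this average against the maximal $\cD$-code size inside a single copy of $\cA$.

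First I would set up the counting framework. For each permutation $\pi \in S_n$, consider the set $\pi(\cA) \subseteq$ J$(n,w)$, and examine the intersection $\cC_\cD \cap \pi(\cA)$. The crucial observation is that any two codewords in this intersection are simultaneously codewords of $\cC_\cD$ (so their distance lies in $\cD$) and elements of $\pi(\cA)$; since $\pi^{-1}$ is an isometry, $\pi^{-1}(\cC_\cD \cap \pi(\cA))$ is a subset of $\cA$ whose pairwise distances all lie in $\cD$. By maximality of $\cC'_\cD$ as the largest $\cD$-code inside $\cA$, we obtain
\begin{equation}
\label{eq:upper_intersection}
\abs{\cC_\cD \cap \pi(\cA)} \leq \abs{\cC'_\cD} \quad \text{for every } \pi \in S_n.
\end{equation}

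Next I would sum \eqref{eq:upper_intersection} over all $\pi \in S_n$ and evaluate the left-hand side by switching the order of summation. Writing the sum as counting pairs $(\pi, c)$ with $c \in \cC_\cD$ and $c \in \pi(\cA)$, transitivity of $S_n$ on J$(n,w)$ forces each codeword $c$ to lie in $\pi(\cA)$ for a number of permutations $\pi$ that is independent of $c$. By a standard orbit-counting (Burnside-type) identity, this count equals $\abs{S_n} \cdot \abs{\cA} / \binom{n}{w}$, since the fraction of the group mapping $\cA$ to cover a fixed vertex equals $\abs{\cA}/\abs{\text{J}(n,w)}$. Hence the total is $\abs{\cC_\cD} \cdot \abs{S_n} \cdot \abs{\cA} / \binom{n}{w}$, while the right-hand side sums to $\abs{S_n} \cdot \abs{\cC'_\cD}$. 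Dividing both sides by $\abs{S_n}$ and rearranging yields exactly \eqref{eq:local_John}.

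The main obstacle I expect is justifying the regularity of the covering count, i.e., that each fixed vertex of J$(n,w)$ lies in precisely $\abs{S_n}\,\abs{\cA}/\binom{n}{w}$ of the translates $\pi(\cA)$. This is where transitivity is essential and is the step that does not generalize freely to J$_q(n,w)$, whose graph is disconnected and whose automorphism group must also permute alphabet symbols within each coordinate; the contrast with the non-binary case (treated in Lemma~\ref{lem:gen_anticode_preJQ}) is precisely the point flagged in the paragraph preceding this lemma. In the Johnson case the argument is clean because $S_n$ acts both transitively and isometrically, so the orbit of $\cA$ tiles the space uniformly in the averaged sense, and the counting identity follows from the orbit-stabilizer theorem applied to the coordinate action.
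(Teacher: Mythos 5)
Your proof is correct and is essentially the paper's own argument: both double-count the pairs $(c,\pi)$ with $c\in\cC_\cD$ and $\pi(c)\in\cA$ (your version with $c\in\pi(\cA)$ differs only by replacing $\pi$ with $\pi^{-1}$), use the orbit--stabilizer count $w!\,(n-w)!$ for the permutations matching a fixed codeword to a fixed element of $\cA$, and bound the contribution of each permutation by $\abs{\cC'_\cD}$ via isometry and maximality. No substantive difference.
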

\begin{proof}
Consider the set of pairs
$$
\cP= \{(c,\pi) ~:~ c \in \cC_\cD, ~ \pi \in S_n,~ \pi (c) \in \cA \}.
$$
For a fixed $c \in \cC_\cD$ and a fixed $a \in \cA$ there are exactly $w! (n-w)!$ choices for a permutation~$\pi$, such that $a=\pi (c)$.
Hence, the number of pairs in $\cP$ equals to $\abs{\cC_\cD} \cdot \abs{\cA} \cdot w! \cdot (n-w)!$.

Note, that for each permutation $\pi$ and two elements $x,y \in \text{J}(n,w)$, we have that $d(\pi(x),\pi(y)) = d(x,y)$.
It implies that a fixed permutation $\pi \in S_n$ can transfer the elements of $\cC_\cD$ into at most $\abs{\cC'_\cD}$ elements
of $\cA$. Therefore, each  permutation $\pi$ contributes at most $\abs{\cC'_\cD}$ pairs to $\cP$,
and hence the number of pairs in $\cP$ is at most $\abs{\cC'_\cD} \cdot n!$ which implies that
$$
\abs{\cC_\cD} \cdot \abs{\cA} \cdot w! \cdot (n-w)! \leq \abs{\cC'_\cD} \cdot n!~,
$$
and the claim of the lemma follows.
\end{proof}
Lemma~\ref{lem:local_John} implies that the code-anticode bound is satisfied for the Johnson scheme.

\begin{corollary}
If $\cC$ is a code in J$(n,w)$ with minimum Johnson distance $D+1$ and $\cA$ is an anticode in J$(n,w)$ with maximum Johnson distance $D$, then
$$
\abs{\cC} \cdot \abs{\cA} \leq \binom{n}{w}~.
$$
\end{corollary}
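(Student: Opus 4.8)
The plan is to read the statement off directly from the local inequality lemma (Lemma~\ref{lem:local_John}) by making the standard choice of distance set described after Lemma~\ref{lem:local_inequality}. First I would set $\cD \triangleq \{ i ~:~ i \geq D+1 \}$, the collection of all admissible Johnson distances that are at least $D+1$. Since $\cC$ has minimum Johnson distance $D+1$, every pair of distinct codewords of $\cC$ is at a distance lying in $\cD$, so $\cC$ is a legitimate instance of $\cC_\cD$ in Lemma~\ref{lem:local_John}. I would then take the subset $\cA$ in that lemma to be exactly the given anticode of maximum Johnson distance $D$.

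The one step that carries actual content is the identification of $\cC'_\cD$, the largest subcode of $\cA$ whose pairwise distances all lie in $\cD$. Here I would invoke the defining property of the anticode: any two distinct elements of $\cA$ are at distance at most $D$, hence strictly less than $D+1$, so no pair of distinct elements of $\cA$ has its distance in $\cD$. Consequently every subcode of $\cA$ with distances confined to $\cD$ contains at most one element, giving $\abs{\cC'_\cD} \leq 1$ (indeed $\abs{\cC'_\cD}=1$ as soon as $\cA \neq \emptyset$). Substituting this into~(\ref{eq:local_John}) yields $\abs{\cC}/\binom{n}{w} \leq \abs{\cC'_\cD}/\abs{\cA} \leq 1/\abs{\cA}$, and clearing denominators produces $\abs{\cC}\cdot\abs{\cA} \leq \binom{n}{w}$, which is the assertion; the degenerate case $\cA=\emptyset$ is vacuous.

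I do not expect a genuine obstacle here, since the substance of the argument has already been absorbed into Lemma~\ref{lem:local_John}, whose double-counting over the full permutation group $S_n$ does the real work. The only point demanding any care is the passage from ``maximum distance $D$'' to ``at most one codeword whose mutual distances lie in $\cD$'' --- that is, the elementary but essential observation that an anticode of diameter $D$ forbids all distances exceeding $D$. This is what collapses $\cC'_\cD$ to a single point and turns the local inequality into the advertised code-anticode product bound.
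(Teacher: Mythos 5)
Your derivation is exactly the paper's intended one: the corollary is stated as an immediate consequence of Lemma~\ref{lem:local_John} via the general recipe given after Lemma~\ref{lem:local_inequality}, namely taking $\cD=\{i : i\geq D+1\}$ and observing that the diameter-$D$ anticode forces $\abs{\cC'_\cD}=1$. Your handling of the single substantive step (collapsing $\cC'_\cD$ to one element) and of the degenerate empty case is correct, so there is nothing to add.
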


The following lemma is readily verified.
\begin{lemma}
\label{lem:anticode_n_w_t}
The set $\cA(n,w,t)$, where $0 \leq t \leq w \leq \frac{n}{2}$, defined by
$$
\cA(n,w,t) \triangleq \{ ( \overbrace{1\cdots \cdots 1}^{t ~ \text{times}} ,a_1,\ldots,a_{n-t}) ~:~ a_j \in \{0,1\},~1 \leq j \leq n-t,~ \wt{a_1,\ldots,a_{n-t}}=w-t  \},
$$
is an anticode in J$(n,w)$ whose diameter is $w-t$ and its size is~$\binom{n-t}{w-t}$.
\end{lemma}

For a binary code $\cC$ of length $n$, the {\bf \emph{complement}} of $\cC$, $\bar{\cC}$, is defined by
$$
\bar{\cC} \triangleq \{ (\bar{x}_1 , \bar{x}_2,\ldots,\bar{x}_n) ~:~ (x_1,x_2,\ldots,x_n) \in \cC \},
$$
where $\bar{b}$ is the binary complement of $b \in \{ 0,1 \}$.

\begin{lemma}
\label{lem:anticode_complementJ}
The set $\bar{\cA} (n,w,t)$, where $0 \leq t \leq w \leq \frac{n}{2}$, defined by
$$
\bar{\cA} (n,w,t) \triangleq \{ (\overbrace{0\cdots \cdots 0}^{t ~ \text{times}} ,a_1,\ldots,a_{n-t}) ~:~ a_j \in \{0,1\},~ \wt{a_1,\ldots,a_{n-t}}=n-w  \},
$$
is an anticode in J$(n,n-w)$ whose diameter is $w-t$ and its size is~$\binom{n-t}{w-t}$.
\end{lemma}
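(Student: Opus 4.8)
The plan is to deduce everything from Lemma~\ref{lem:anticode_n_w_t} by recognizing $\bar{\cA}(n,w,t)$ as the bitwise complement of the anticode $\cA(n,w,t)$. First I would verify that the map $x \mapsto \bar{x}$ sends $\cA(n,w,t)$ bijectively onto $\bar{\cA}(n,w,t)$: complementing a word of $\cA(n,w,t)$ turns its leading block of $t$ ones into $t$ zeros and changes the weight of the trailing $n-t$ coordinates from $w-t$ to $(n-t)-(w-t) = n-w$, so the image is exactly a word of $\bar{\cA}(n,w,t)$; the map is plainly injective and surjective between the two sets.

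Next I would observe that bitwise complementation is an isometry of the Hamming metric, since for each coordinate $i$ one has $x_i \neq y_i$ if and only if $\bar{x}_i \neq \bar{y}_i$, so the set of disagreement positions, and hence the Hamming distance and the Johnson distance, are unchanged. As the same map also carries weight-$w$ words to weight-$(n-w)$ words, it maps J$(n,w)$ isometrically onto J$(n,n-w)$. Therefore the anticode property, the diameter $w-t$, and the cardinality all transfer directly from $\cA(n,w,t)$ to its image, and the lemma follows at once from Lemma~\ref{lem:anticode_n_w_t}.

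Should a self-contained argument be preferred, the size follows from the symmetry $\binom{n-t}{n-w} = \binom{n-t}{w-t}$, counting the trailing weight-$(n-w)$ vectors, and the diameter follows by writing two words through their supports $A, B$ of size $n-w$ among the $n-t$ free coordinates: the Hamming distance is $2\bigl((n-w)-|A\cap B|\bigr)$, so the Johnson distance equals $(n-w)-|A \cap B|$, which is maximized by minimizing the overlap. The one point needing care is that the maximum value $w-t$ is actually attained, i.e.\ that $A$ and $B$ can be chosen with overlap as small as $n-2w+t$; this is precisely where the hypothesis $w \leq n/2$ enters, ensuring $n-2w+t \geq 0$ so that two $(n-w)$-subsets of an $(n-t)$-set whose union has full size $n-t$ indeed exist.
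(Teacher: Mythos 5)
Your proposal is correct and follows essentially the same route as the paper: it identifies $\bar{\cA}(n,w,t)$ as the bitwise complement of $\cA(n,w,t)$, uses the fact that complementation preserves Hamming (hence Johnson) distance, and transfers the size and diameter from Lemma~\ref{lem:anticode_n_w_t}. The extra self-contained counting argument and the remark on where $w \leq n/2$ is needed are welcome additions but do not change the underlying approach.
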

\begin{proof}
Clearly, $\bar{\cA} (n,w,t)$ is the complement of $\cA (n,w,t)$ and hence by Lemma~\ref{lem:anticode_n_w_t}, we have that
$\abs{\bar{\cA} (n,w,t)} = \binom{n-t}{w-t}$. Moreover, for each two words $x,y \in \text{J}(n,w)$, $d(x,y)=d(\bar{x},\bar{y})$
and hence the diameter of $\bar{\cA} (n,w,t)$ equals to the diameter of $\cA (n,w,t)$ which is $w-t$ by Lemma~\ref{lem:anticode_n_w_t}.
Thus, the claim of the lemma follows directly from Lemma~\ref{lem:anticode_n_w_t}.
\end{proof}

\begin{definition}
A Steiner system S$(t,w,n)$ is a pair $S=(\cN,B)$, where $\cN$ is an $n$-set and $B$ is a set of $w$-subsets (called {\bf \emph{blocks}})
from $\cN$, where each $t$-subset of $\cN$ is contained in exactly one block of $B$.
\end{definition}

Subsets can be translated to words, and vice versa, via the following definition.
The {\bf \emph{characteristic  vector}} of a $w$-subset $S$ of an $n$-set $\cN$ is a binary word of length $n$
and weight~$w$ whose $i$-th coordinate is a \emph{one} if and only if the $i$-th element of~$\cN$ is contained in~$S$.
In some cases, we are going to consider mixed language of words and subsets and the translation between the
two should be understood from the context.

The following well-known theorem is readily verified.
\begin{theorem}
\label{thm:ST_b_s}
The number of blocks in a Steiner system S$(t,w,n)$ is
$$
\binom{n}{t} {\Large \text{/}} \binom{w}{t}
$$
and its minimum Johnson distance is $w-t+1$.
\end{theorem}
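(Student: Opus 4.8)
The plan is to establish the two assertions—the block count and the minimum Johnson distance—independently, both by elementary counting arguments that exploit the defining property that every $t$-subset of $\cN$ lies in exactly one block.

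For the number of blocks, I would double-count the incidence set $\cI = \{(T,B_0) : T \subseteq B_0,\ \abs{T}=t,\ B_0 \in B\}$. Counting by the first coordinate, each of the $\binom{n}{t}$ $t$-subsets of $\cN$ is contained in exactly one block by the definition of a Steiner system, so $\abs{\cI} = \binom{n}{t}$. Counting by the second coordinate, each block is a $w$-subset and hence contains exactly $\binom{w}{t}$ distinct $t$-subsets, so $\abs{\cI} = \abs{B}\cdot\binom{w}{t}$. Equating the two expressions gives $\abs{B} = \binom{n}{t}/\binom{w}{t}$.

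For the minimum distance, I would first translate distances into intersection sizes. Viewing two distinct blocks $B_1,B_2$ as their characteristic vectors, their Hamming distance is $\abs{B_1}+\abs{B_2}-2\abs{B_1\cap B_2} = 2(w-\abs{B_1\cap B_2})$, so the Johnson distance (half the Hamming distance) equals $w-\abs{B_1\cap B_2}$. Minimizing the Johnson distance over distinct blocks is therefore equivalent to maximizing $\abs{B_1\cap B_2}$. The key inequality is that two distinct blocks share at most $t-1$ elements: if $\abs{B_1\cap B_2}\ge t$, then some $t$-subset $T\subseteq B_1\cap B_2$ would be contained in two distinct blocks, contradicting the uniqueness of the block through $T$. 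Hence $\abs{B_1\cap B_2}\le t-1$ and every pairwise Johnson distance is at least $w-t+1$.

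It remains to show this bound is attained, which I expect to be the only delicate point. Fixing any block $B_1$ and (using $w<n$) an element $x\notin B_1$, form the $t$-subset $T$ consisting of $x$ together with any $t-1$ elements of $B_1$. Let $B_2$ be the unique block containing $T$; since $x\in B_2\setminus B_1$ we have $B_2\neq B_1$, while $T\setminus\{x\}\subseteq B_1\cap B_2$ forces $\abs{B_1\cap B_2}\ge t-1$. Combined with the upper bound above this yields $\abs{B_1\cap B_2}=t-1$, so the pair $B_1,B_2$ realizes Johnson distance exactly $w-t+1$, and the minimum distance is as claimed.
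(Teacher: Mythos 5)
Your proof is correct and complete; the paper itself states this theorem without proof (``readily verified''), and your double-counting of $t$-subset/block incidences together with the intersection bound $\abs{B_1\cap B_2}\le t-1$ (plus the attainment argument) is exactly the standard verification intended. No gaps.
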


\begin{theorem}
\label{thm:anticode_Steiner}
Any Steiner system S$(t,w,n)$ forms a $(w-t)$-diameter perfect code.
\end{theorem}
\begin{proof}
If $\cC$ is the code constructed from a Steiner system \text{S}$(t,w,n)$, then by Theorem~\ref{thm:ST_b_s}
its Johnson distance is $w-t+1$, and
$$
\abs{\cC} = \frac{\binom{n}{t}}{\binom{w}{t}} = \frac{\binom{n}{w}}{\binom{n-t}{w-t}} ~.
$$
On the other hand by the code-anticode bound $\abs{\cC} \cdot \abs{\cA} \leq \binom{n}{w}$, where $\cA$ is an anticode in J$(n,w)$
whose diameter is $w-t$, and therefore $\abs{\cA} \leq \binom{n-t}{w-t}$.

Since by Lemma~\ref{lem:anticode_n_w_t} the set $\cA (n,w,t)$ is an anticode in J$(n,w)$ of size $\binom{n-t}{w-t}$ and whose
diameter is $w-t$, the claim of the theorem follows.
\end{proof}
Theorem~\ref{thm:anticode_Steiner} was proved in~\cite{AAK01}, but it should be noted (as it was not mentioned in~\cite{AAK01}),
that except for the family of Steiner systems, there is another family of diameter perfect codes in J$(n,w)$.

\begin{theorem}
\label{thm:anticode_comp_Steiner}
The complement of a Steiner system S$(t,w,n)$ forms a $(w-t)$-diameter perfect code.
\end{theorem}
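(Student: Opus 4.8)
The plan is to mirror the proof of Theorem~\ref{thm:anticode_Steiner}, replacing the Steiner system with its complement and the anticode $\cA(n,w,t)$ with the complementary anticode $\bar{\cA}(n,w,t)$ supplied by Lemma~\ref{lem:anticode_complementJ}. The key observation is that complementation is an isometry on the Johnson graph: for binary words $x,y$ of length $n$ we have $d(x,\bar{y})$-type relations, and more precisely $d(\bar{x},\bar{y})=d(x,y)$, a fact already used inside the proof of Lemma~\ref{lem:anticode_complementJ}. Hence taking complements preserves Johnson distances, cardinalities, and minimum distances.

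First I would let $\cC$ be the code arising from a Steiner system $\text{S}(t,w,n)$, so that by Theorem~\ref{thm:ST_b_s} its minimum Johnson distance is $w-t+1$ and
$$
\abs{\cC} = \frac{\binom{n}{t}}{\binom{w}{t}} = \frac{\binom{n}{w}}{\binom{n-t}{w-t}}~.
$$
Its complement $\bar{\cC}$ lives in $\text{J}(n,n-w)$, has the same size $\abs{\bar{\cC}}=\abs{\cC}=\binom{n}{w}/\binom{n-t}{w-t}$, and, because complementation is distance preserving, also has minimum Johnson distance $w-t+1$. Since $\text{J}(n,n-w)$ has the same cardinality as $\text{J}(n,w)$, namely $\binom{n}{n-w}=\binom{n}{w}$, I would then invoke the code-anticode bound in $\text{J}(n,n-w)$: for any anticode $\cA$ of diameter $w-t$ we get $\abs{\bar{\cC}}\cdot\abs{\cA}\le\binom{n}{w}$, forcing $\abs{\cA}\le\binom{n-t}{w-t}$.

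To finish, I would exhibit an anticode meeting this bound. Lemma~\ref{lem:anticode_complementJ} provides exactly the set $\bar{\cA}(n,w,t)$, which is an anticode in $\text{J}(n,n-w)$ of diameter $w-t$ and size $\binom{n-t}{w-t}$. Combining the upper bound on anticode size with the value of $\abs{\bar{\cC}}$ shows that $\bar{\cC}$ attains the code-anticode bound~(\ref{eq:code_anticode_bound}) with equality, so $\bar{\cC}$ is a $(w-t)$-diameter perfect code, as claimed.

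I expect no genuine obstacle here; the content is essentially bookkeeping, with the only subtle point being to keep the parameters straight — the complement of an $\text{S}(t,w,n)$ word has weight $n-w$, so the ambient graph is $\text{J}(n,n-w)$ rather than $\text{J}(n,w)$, and one must verify that $\binom{n}{n-w}=\binom{n}{w}$ so that the right-hand side of the code-anticode bound is unchanged. Because Lemma~\ref{lem:anticode_complementJ} has already packaged the complementary anticode with the correct diameter and cardinality, the proof reduces to stringing these facts together exactly as in Theorem~\ref{thm:anticode_Steiner}.
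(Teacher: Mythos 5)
Your proposal is correct and follows essentially the same route as the paper's own proof: complementation preserves Johnson distance, Lemma~\ref{lem:anticode_complementJ} supplies the anticode $\bar{\cA}(n,w,t)$ of diameter $w-t$ and size $\binom{n-t}{w-t}$ in J$(n,n-w)$, and $\binom{n}{n-w}=\binom{n}{w}$ lets the computation of Theorem~\ref{thm:anticode_Steiner} carry over verbatim. No gaps.
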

\begin{proof}
For each two words $x,y \in \text{J}(n,w)$, $d(x,y)=d(\bar{x},\bar{y})$ and hence the  complement of a Steiner system S$(t,w,n)$
has minimum Johnson distance $w-t+1$. By Lemma~\ref{lem:anticode_complementJ}, $\bar{\cA} (n,w,t)$ and $\cA (n,w,t)$ have the same
diameter $w-t$ and the same size $\binom{n-t}{w-t}$. Moreover, the weight of the codewords in $\bar{\cA} (n,w,t)$ is $n-w$ and
since also $\binom{n}{n-w}=\binom{n}{w}$, this implies, as in the proof of Theorem~\ref{thm:anticode_Steiner}, that the
complement of a Steiner system S$(t,w,n)$ forms a $(w-t)$-diameter perfect code in J$(n,n-w)$.
\end{proof}

\begin{corollary}
Any Steiner system S$(t,w,n)$ and any complement of a Steiner system S$(t,w,n)$ forms a $(w-t)$-diameter perfect code.
\end{corollary}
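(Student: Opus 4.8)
The plan is to observe that this corollary is nothing more than the conjunction of the two theorems just established, so the proof will be a direct appeal to both. Theorem~\ref{thm:anticode_Steiner} already shows that any Steiner system S$(t,w,n)$, viewed as a binary constant-weight code in J$(n,w)$, forms a $(w-t)$-diameter perfect code; Theorem~\ref{thm:anticode_comp_Steiner} shows the analogous statement for the complement of such a system, which lives in J$(n,n-w)$. Combining the two yields the corollary immediately, so I would write only a one-line proof that cites both results.

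To make the one-liner transparent I would recall the quantitative facts that each half supplies, so the reader sees why both products are simultaneously tight. For the Steiner system itself, Theorem~\ref{thm:ST_b_s} gives the number of blocks as $\binom{n}{t}/\binom{w}{t}=\binom{n}{w}/\binom{n-t}{w-t}$ and the minimum Johnson distance as $w-t+1$; paired with the anticode $\cA(n,w,t)$ of Lemma~\ref{lem:anticode_n_w_t}, whose diameter is $w-t$ and whose size is $\binom{n-t}{w-t}$, the code-anticode product equals $\binom{n}{w}$ and the bound is met with equality. For the complement, the distance-preservation identity $d(x,y)=d(\bar{x},\bar{y})$ transfers the minimum distance unchanged, and Lemma~\ref{lem:anticode_complementJ} supplies the matching anticode $\bar{\cA}(n,w,t)$ of the same diameter and size, now inside J$(n,n-w)$ where $\binom{n}{n-w}=\binom{n}{w}$.

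There is no genuine obstacle here, since all the work was done in Theorems~\ref{thm:anticode_Steiner} and~\ref{thm:anticode_comp_Steiner}. The only point I would flag is bookkeeping: the two diameter perfect codes sit in different ambient spaces, J$(n,w)$ and J$(n,n-w)$, even though they share the same diameter $w-t$. The corollary's compact phrasing suppresses this, so I would take care that the statement is read as two parallel assertions rather than as a claim that a single code is perfect in one fixed space.
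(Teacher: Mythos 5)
Your proposal is correct and matches the paper's intent exactly: the corollary is stated without proof precisely because it is the immediate conjunction of Theorems~\ref{thm:anticode_Steiner} and~\ref{thm:anticode_comp_Steiner}, which is all your one-liner does. Your added remark about the two codes living in the different ambient spaces J$(n,w)$ and J$(n,n-w)$ is a sensible clarification but not a departure from the paper's argument.
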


There is a well-known conjecture~\cite{Del73} that there are no nontrivial perfect codes in J$(n,w)$.
Steiner systems are embedded in diameter perfect codes in J$(n,w)$~\cite{AAK01} similar to Steiner systems embedded in perfect
codes in J$(n,w)$ as was proved in~\cite{Etz96}. This makes it very tempting to have a similar conjecture for diameter perfect codes.
\begin{conjecture}
\label{conj:Dbinary}
There are no nontrivial diameter perfect codes in J$(n,w)$, except for Steiner systems and their complements.
\end{conjecture}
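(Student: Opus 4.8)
Since this statement is phrased as a conjecture, what follows is a plan of attack: it would settle the conjecture in its principal regime and isolate the genuine obstruction in the remaining one.

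The plan is first to reduce the statement to a packing-and-counting assertion. Fix a diameter $D$ and put $t=w-D$, so that a $D$-diameter perfect code $\cC$ in J$(n,w)$ has minimum Johnson distance $D+1=w-t+1$. Since two weight-$w$ words lie at Johnson distance $w-i$ exactly when their supports meet in $i$ coordinates, this minimum distance is equivalent to the supports pairwise intersecting in at most $t-1$ points; that is, $\cC$ is a $t$-packing of $w$-subsets of an $n$-set. If the maximum anticode of diameter $w-t$ has size $\binom{n-t}{w-t}$, realised by $\cA(n,w,t)$ as in Lemma~\ref{lem:anticode_n_w_t}, then the code-anticode bound forces $\abs{\cC}=\binom{n}{w}/\binom{n-t}{w-t}=\binom{n}{t}/\binom{w}{t}$. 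I would then count incidences of $t$-subsets: each codeword contains $\binom{w}{t}$ of them, the packing property forbids any $t$-subset from lying in two codewords, so the codewords cover exactly $\abs{\cC}\binom{w}{t}=\binom{n}{t}$ distinct $t$-subsets. As this is the total number of $t$-subsets, every $t$-subset is covered exactly once and $\cC$ is a Steiner system S$(t,w,n)$; the complementary statement in J$(n,n-w)$ follows from $d(x,y)=d(\bar x,\bar y)$ and $\binom{n}{w}=\binom{n}{n-w}$, as in Theorem~\ref{thm:anticode_comp_Steiner}.

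The step above is conditional on the maximum anticode of diameter $w-t$ having size exactly $\binom{n-t}{w-t}$. An anticode of that diameter is precisely a $t$-intersecting family of $w$-subsets, so the needed input is the exact maximum size of such a family, supplied for all $n$ by the Complete Intersection Theorem of Ahlswede and Khachatrian~\cite{AhKh98}. In the Erd\"{o}s--Ko--Rado regime (for $n$ above the relevant threshold in $w$ and $t$) that maximum is $\binom{n-t}{w-t}$, attained only by the stars $\cA(n,w,t)$, and the argument of the previous paragraph applies verbatim. A clean observation makes the two regimes dovetail: if a Steiner system S$(t,w,n)$ exists then, being a code of size $\binom{n}{t}/\binom{w}{t}$ with minimum distance $w-t+1$, it already forces the maximum anticode to equal $\binom{n-t}{w-t}$ through the code-anticode bound, exactly as in the proof of Theorem~\ref{thm:anticode_Steiner}. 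Hence whenever a Steiner candidate exists we are automatically in this regime, and the only place a new diameter perfect code could hide is among parameters for which no Steiner system S$(t,w,n)$ exists.

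The difficulty is therefore concentrated below the Erd\"{o}s--Ko--Rado threshold. There the Complete Intersection Theorem yields maximum $t$-intersecting families of Frankl--Ahlswede--Khachatrian kernel type, of some size $\abs{\cF}>\binom{n-t}{w-t}$; a putative diameter perfect code would then have size $\binom{n}{w}/\abs{\cF}<\binom{n}{t}/\binom{w}{t}$ and, while still a $t$-packing, would cover strictly fewer than $\binom{n}{t}$ of the $t$-subsets, so the counting identity no longer forces the Steiner structure and yields no contradiction. Ruling out such codes --- or classifying them and confirming that none are new --- is the main obstacle, and it appears to require case-by-case nonexistence results of exactly the design-theoretic flavour already invoked for minimum distance $4$ and $5$~\cite{Kro08,KOP16,Fon07}. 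Moreover, the conjecture parallels the long-standing (still open) conjecture that J$(n,w)$ has no nontrivial perfect codes: by Theorem~\ref{thm:eP_diamP} any nontrivial perfect code would be a nontrivial diameter perfect code, hence, if the conjecture holds, a Steiner system or its complement. This linkage indicates that a complete proof is beyond present techniques, and that the realistic target is the conjecture in the Erd\"{o}s--Ko--Rado regime together with an individual treatment of the remaining small-parameter, non-EKR cases.
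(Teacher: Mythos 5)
The paper does not prove this statement: it is posed as a conjecture and reappears as the first open problem in the concluding section, so there is no proof of record to compare yours against. Judged on its own terms, your analysis is sound and identifies the right line of attack. The reduction is correct: with $t=w-D$, a $D$-diameter perfect code is a packing of $w$-sets meeting pairwise in at most $t-1$ points, and once the maximum anticode of diameter $w-t$ --- equivalently the maximum $t$-intersecting family of $w$-subsets of an $n$-set --- is known to have size $\binom{n-t}{w-t}$, your double count of $t$-subsets forces every $t$-subset to be covered exactly once, i.e.\ the code is a Steiner system S$(t,w,n)$; complementation then disposes of $w>n/2$ exactly as in Theorem~\ref{thm:anticode_comp_Steiner}. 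Your observation that the existence of an S$(t,w,n)$ itself pins the maximum anticode at $\binom{n-t}{w-t}$ via the code-anticode bound is also correct and is essentially the content of Theorem~\ref{thm:anticode_Steiner}.

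Two caveats. First, the Complete Intersection Theorem you need is the one in~\cite{AhKh97} (for $t$-intersecting systems of finite sets), not~\cite{AhKh98}, which treats anticodes in the Hamming space. Second, as you acknowledge, the argument settles the conjecture only for $n\geq (t+1)(w-t+1)$; below that threshold the optimal $t$-intersecting families are the larger Frankl--Ahlswede--Khachatrian families, a putative diameter perfect code would have fewer than $\binom{n}{t}/\binom{w}{t}$ codewords, and the counting identity no longer bites. That residual regime is precisely why the statement remains a conjecture. The only partial information the paper records there is the embedding result of~\cite{AAK01} (building on~\cite{Etz96}) that a nontrivial diameter perfect code in J$(n,w)$ forces the existence of certain Steiner systems, which excludes some parameters through the divisibility conditions of Corollary~\ref{cor:necesary_ST}; combining that with your threshold analysis would be a natural next step, but a complete resolution is not available in the paper or in your proposal.
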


The following well-known results, to which we will refer later, can be easily verified.
\begin{lemma}
\label{lem:derived}
If there exists a Steiner system S$(t,w,n)$, $t > 1$, then there exists a Steiner system S$(t-1,w-1,n-1)$.
\end{lemma}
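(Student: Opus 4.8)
The plan is to use the classical derived-design construction. Write $S=(\cN,B)$ for the given Steiner system S$(t,w,n)$, fix an arbitrary point $x \in \cN$, and set $\cN' \triangleq \cN \setminus \{x\}$, which is an $(n-1)$-set. First I would define the candidate system $S'=(\cN',B')$ by
$$
B' \triangleq \{ \beta \setminus \{x\} ~:~ \beta \in B,~ x \in \beta \}.
$$
Since every block $\beta$ containing $x$ has size $w$, each derived block $\beta \setminus \{x\}$ is a $(w-1)$-subset of $\cN'$; hence $S'$ has the correct ground-set size $n-1$ and block size $w-1$ to be a candidate S$(t-1,w-1,n-1)$.

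The key step is to verify the defining covering property of $S'$: every $(t-1)$-subset of $\cN'$ is contained in exactly one block of $B'$. I would establish this through the correspondence $T \mapsto T \cup \{x\}$ between $(t-1)$-subsets $T$ of $\cN'$ and $t$-subsets of $\cN$ containing $x$. Given such a $T$ (note $x \notin T$), the set $T \cup \{x\}$ is a $t$-subset of $\cN$, so by the S$(t,w,n)$ property it lies in exactly one block $\beta \in B$; this $\beta$ necessarily contains $x$, whence $\beta \setminus \{x\} \in B'$ and it contains $T$. Conversely, any block $\gamma \in B'$ containing $T$ has the form $\gamma = \beta \setminus \{x\}$ with $x \in \beta \in B$, and then $\beta = \gamma \cup \{x\}$ is a block of $S$ containing $T \cup \{x\}$; uniqueness of the block of $S$ through $T \cup \{x\}$ forces $\beta$, and hence $\gamma$, to be unique. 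Thus $T$ lies in exactly one derived block, which is precisely the Steiner-system condition defining S$(t-1,w-1,n-1)$.

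This argument is entirely routine, so I do not anticipate a genuine obstacle; the only point requiring a little care is the two-directional correspondence that yields both existence and uniqueness of the derived block via the single $t$-subset $T \cup \{x\}$ of $\cN$. The hypothesis $t>1$ enters only to guarantee $t-1 \geq 1$, so that the conclusion is a bona fide Steiner system rather than a degenerate object.
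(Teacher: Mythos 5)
Your proof is correct: it is the classical derived-design construction (fix a point $x$, take the blocks through $x$ and delete $x$), with the existence-and-uniqueness of the covering block verified via the bijection $T \mapsto T \cup \{x\}$. The paper states this lemma as a well-known fact without proof, and your argument is exactly the standard verification it has in mind.
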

\begin{corollary}
\label{cor:necesary_ST}
A necessary condition that a Steiner system S$(t,w,n)$ exists is that all the numbers $\frac{\binom{n-i}{t-i}}{\binom{k-i}{t-i}}$,
$0 \leq i \leq t-1$, are integers.
\end{corollary}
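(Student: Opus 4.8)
The plan is to combine the block-count formula of Theorem~\ref{thm:ST_b_s} with repeated application of the derived-system construction of Lemma~\ref{lem:derived}. The guiding observation is that the number of blocks of \emph{any} Steiner system must be a non-negative integer, and that each derived system of S$(t,w,n)$ contributes exactly one of the claimed divisibility constraints.

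First I would dispose of the case $i=0$: by Theorem~\ref{thm:ST_b_s} the number of blocks of S$(t,w,n)$ is $\binom{n}{t}/\binom{w}{t}$, which must be a non-negative integer, giving the condition for $i=0$. Next I would iterate Lemma~\ref{lem:derived}. That lemma asserts that whenever S$(t,w,n)$ exists with $t>1$, the derived system S$(t-1,w-1,n-1)$ exists. Applying it $i$ times, for $0 \le i \le t-1$, produces the chain of derived systems S$(t-i,w-i,n-i)$, terminating at S$(1,w-t+1,n-t+1)$. Applying Theorem~\ref{thm:ST_b_s} to each derived system in this chain shows that its number of blocks, $\binom{n-i}{t-i}/\binom{w-i}{t-i}$, is an integer, which is precisely the asserted necessary condition.

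The only point requiring care is the bookkeeping in the iteration: I must check that each application of Lemma~\ref{lem:derived} remains within its hypothesis, i.e.\ that the first parameter $t-i$ stays at least $1$ throughout the range $0 \le i \le t-1$, so that the chain of derived systems is well defined all the way down to the endpoint. Since $t-i \ge t-(t-1) = 1$ on this range, the hypothesis $t-i > 1$ needed to pass from S$(t-i,w-i,n-i)$ to S$(t-i-1,w-i-1,n-i-1)$ holds for every $i$ with $i \le t-2$, and the final step merely records the block count of S$(1,w-t+1,n-t+1)$. There is no substantive obstacle beyond this routine induction; the corollary is essentially immediate once Lemma~\ref{lem:derived} and Theorem~\ref{thm:ST_b_s} are in hand.
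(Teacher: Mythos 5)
Your proposal is correct and follows exactly the route the paper intends: Corollary~\ref{cor:necesary_ST} is stated as an immediate consequence of iterating the derived-system construction of Lemma~\ref{lem:derived} and applying the block-count formula of Theorem~\ref{thm:ST_b_s} to each system S$(t-i,w-i,n-i)$ in the resulting chain (the paper leaves this as ``easily verified''). Your bookkeeping on when the hypothesis $t-i>1$ is needed is accurate, and you correctly read the $k$ in the statement as a typo for $w$.
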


Steiner system were investigated throughout the years and a short survey by Colbourn and Mathon can be found in~\cite[pp. 102--110]{CoDi07}.
It was proved in~\cite{Kee14} and later in~\cite{GKLO16} that for each pair $(t,w)$,
there exists an $n_0$ such that for each $n \geq n_0$ the necessary conditions of Corollary~\ref{cor:necesary_ST} are also sufficient.
Unfortunately, the proof is nonconstructive and this $n_0$ is beyond our imagination.

\section{Non-Binary Diameter Perfect Constant-Weight Codes}
\label{sec:no_bin_diam_John}

The main part of our paper is devoted to non-binary diameter perfect constant-weight codes.
We distinguish between six families of such codes in J$_q(n,w)$, where $q>2$.

\begin{itemize}
\item[{\bf [F1]}] Non-binary diameter perfect constant-weight codes for which $w=n$.

\item[{\bf [F2]}] Diameter perfect constant-weight codes over an alphabet of size $2^k+1$  for which~${w=n-1}$ .

\item[{\bf [F3]}] Non-binary diameter perfect constant-weight codes which are generalized Steiner systems.

\item[{\bf [F4]}] Non-binary diameter perfect constant-weight codes for which $d=w$. These codes are
called maximum distance separable constant-weight codes. Each such code has ${\binom{n}{w} (q-1)}$ codewords.

\item[{\bf [F5]}] Non-binary diameter perfect constant-weight codes for which $d=w+1$. Such a code has $\binom{n}{w}$ codewords.

\item[{\bf [F6]}] Non-binary diameter perfect constant-weight codes for which $d<w$. These codes are
called multiple orthogonal arrays constant-weight codes. Each such code has ${\binom{n}{w} (q-1)^{w-d+1}}$ codewords.
\end{itemize}

Before we start our discussion on these six families of codes we have to prove the local inequality lemma for J$_q(n,w)$
with the Hamming metric. For the proof of this lemma it is required to prove the following simple lemma.

\begin{lemma}
\label{lem:lambda_for_all}
For each $q \geq 2$ and any given pair $(t,n)$, where $1 \leq t \leq n$, there exists some $\lambda \geq 1$ for which there exists
an OA$_\lambda (t,n,q)$.
\end{lemma}
\begin{proof}
Consider a matrix $\cM$ whose rows are all the $q^n$ distinct words of length $n$ over~$\Z_q$.
Clearly, in each projection of $t$ coordinates from $\cM$ each $t$-tuple is contained in ${\frac{q^n}{q^t}= q^{n-t}}$ distinct rows (codewords).
Thus, the $q^n \times n$ matrix $\cM$ forms an OA$_\lambda (t,n,q)$, where ${\lambda = q^{n-t}}$.
\end{proof}

\begin{lemma}
\label{lem:gen_anticode_preJQ}
Let $\cC_\cD$ be a constant-weight code of length $n$ and weight~$w$ over $\Z_q$, $q>2$,
with distances between the codewords of $\cC_\cD$ taken from a subset $\cD$.
Let $\cA$ be a subset of J$_q(n,w)$ and let $\cC'_\cD \subseteq \cA$ be the largest code in $\cA$ with
distances taken from $\cD$. Then
\begin{equation}
\label{eq:pre_anticode_JQ}
\frac{|\cC_\cD|}{\binom{n}{w} (q-1)^w} \leq \frac{|\cC'_\cD|}{\abs{\cA}} ~.
\end{equation}
\end{lemma}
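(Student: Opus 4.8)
The plan is to mimic the double-counting argument used in the proof of Lemma~\ref{lem:local_John} for the Johnson scheme, but to replace the symmetric group $S_n$ with a larger transitive group acting on J$_q(n,w)$, and to handle the fact that the resulting action is \emph{not} sharply transitive by introducing a carefully chosen orthogonal array as a weighting device. Recall that in the Johnson case every permutation $\pi \in S_n$ preserved Hamming distance and the count of permutations sending a fixed codeword to a fixed anticode element was the constant $w!(n-w)!$. For J$_q(n,w)$ the natural distance-preserving group is the wreath-type group $G$ generated by coordinate permutations from $S_n$ together with independent permutations of the nonzero symbols $Q^*$ on each coordinate (i.e. $G \cong (S_{q-1})^n \rtimes S_n$, acting on $\Z_q$ by fixing $0$ and permuting $Q^*$). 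Every element of $G$ is an automorphism of J$_q(n,w)$ that preserves Hamming distance, which is exactly the property needed to transport codes.

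First I would verify the distance-invariance: for any $g \in G$ and any $x,y \in \text{J}_q(n,w)$ we have $d(g(x),g(y)) = d(x,y)$, since permuting coordinates and independently relabeling the nonzero alphabet on each coordinate cannot change which coordinates agree. Consequently a single $g \in G$ maps $\cC_\cD$ to a set with the same pairwise distance pattern lying (after restricting to those images landing in $\cA$) inside $\cA$, so $g$ contributes at most $|\cC'_\cD|$ pairs to the incidence set. Then I would set up the pair-counting. The subtlety is that $G$ is transitive on J$_q(n,w)$ but the number of $g$ sending a fixed $c$ to a fixed $a$ depends only on the orbit structure and equals a constant $|G|/|\text{J}_q(n,w)| = |G|\big/\big(\binom{n}{w}(q-1)^w\big)$, by orbit–stabilizer together with transitivity. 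This uniformity is what makes the clean denominator $\binom{n}{w}(q-1)^w$ appear.

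The place where Lemma~\ref{lem:lambda_for_all} enters—and the main obstacle—is that a generic $g \in G$ can \emph{collapse} two distinct codewords of $\cA$ onto the same image or, more importantly, we must ensure the counting is genuinely tight rather than merely transitive. The real issue is that unlike the Johnson case, to pass from ``transitive'' to the exact inequality one wants the averaging group to act in a balanced way on the symbol coordinates, and the existence of an OA$_\lambda(t,n,q)$ for a suitable $\lambda$ guarantees that one can choose, for each codeword, a set of symbol-assignments that is perfectly balanced across supports. Concretely I expect the proof to count the pair set
$$
\cP = \{ (c,g) ~:~ c \in \cC_\cD,~ g \in G,~ g(c) \in \cA \},
$$
to evaluate $|\cP|$ two ways: from the codeword side it is $|\cC_\cD| \cdot |\cA| \cdot \frac{|G|}{\binom{n}{w}(q-1)^w}$ using the uniform fiber count, and from the group side it is at most $|\cC'_\cD| \cdot |G|$ since each $g$ contributes at most $|\cC'_\cD|$ pairs. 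Dividing by $|G|$ then yields
$$
\frac{|\cC_\cD|}{\binom{n}{w}(q-1)^w} \leq \frac{|\cC'_\cD|}{|\cA|},
$$
which is~(\ref{eq:pre_anticode_JQ}).

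I anticipate that the one genuinely delicate step is justifying the uniform fiber count $|G|/\big(\binom{n}{w}(q-1)^w\big)$: one must confirm that $G$ acts transitively on J$_q(n,w)$ (clear, since one can move any support to any support and any nonzero symbol to any nonzero symbol) and then invoke orbit–stabilizer so that the stabilizer of every point has the same size, making the number of $g$ with $g(c)=a$ independent of the pair $(c,a)$. If instead the paper's Lemma~\ref{lem:lambda_for_all} is used, the role of the orthogonal array is presumably to supply an explicit balanced multiset of symbol-labelings playing the role of the group, with $\lambda$ absorbed on both sides of the count and cancelling; either route reaches the same bound, and I would favor the group-theoretic version for transparency while keeping Lemma~\ref{lem:lambda_for_all} in reserve to handle the symbol coordinates when $q-1$ is not itself large enough for the symbol group to act with index-unity balance.
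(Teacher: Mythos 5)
Your argument is correct and proves the lemma, but it is mechanically different from the paper's proof, so a comparison is in order. The paper runs the double count in two stages: it first counts pairs $(c,\pi)$ with $\pi\in S_n$ matching the support of $\pi(c)$ to the support of some $a\in\cA$ (uniform fiber $w!\,(n-w)!$), and then lifts to triples $(c,\pi,v)$ where $v$ ranges over the rows of an orthogonal array OA$_\lambda(w,n,q-1)$ supplied by Lemma~\ref{lem:lambda_for_all}, with $v$ acting by coordinatewise cyclic shifts of the nonzero symbols; the OA property guarantees each pair lifts to exactly $\lambda$ triples, and $\lambda$ cancels at the end. You instead perform a single count over the full distance-preserving group $G\cong(S_{q-1})^n\rtimes S_n$, and the uniform fiber size $|G|/\left(\binom{n}{w}(q-1)^w\right)$ drops out of transitivity plus orbit--stabilizer (the set $\{g: g(c)=a\}$ is a coset of $\mathrm{Stab}(c)$, hence of constant size). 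The two proofs are the same averaging argument in different clothes: the paper's $S_n$ together with the rows of the OA plays the role of your group, and indeed the object Lemma~\ref{lem:lambda_for_all} actually produces is all of $\Z_{q-1}^n$, which combined with $S_n$ is exactly such a transitive, distance-preserving group of cyclic symbol shifts. Your version buys transparency and dispenses with the orthogonal-array machinery entirely. Two small corrections to your commentary rather than to your proof: a $g\in G$ cannot ``collapse'' two distinct elements of $\cC_\cD$ onto one image, since it is a bijection of J$_q(n,w)$; and the closing hedge about keeping Lemma~\ref{lem:lambda_for_all} in reserve for small $q-1$ is unnecessary --- the group-theoretic count is already complete for every $q>2$.
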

\begin{proof}
Consider the set of pairs
$$
\cP= \{(c,\pi) ~:~ c \in \cC_\cD, ~ \pi \in S_n, ~ \supp{\pi (c)} = \supp{a},~\text{where}~a \in \cA \}.
$$
For a fixed $c \in \cC_\cD$ and a fixed $a \in \cA$ there are exactly $w! (n-w)!$ choices for $\pi$, for which $\supp{\pi (c)} = \supp{a}$.
Hence, the number of pairs in $\cP$ equals to $\abs{\cC_\cD} \cdot \abs{\cA} \cdot w! \cdot (n-w)!$.

For the word $v=(v_1,v_2,\ldots,v_n) \in \Z_{q-1}^n$, we form a subset $\cA_v$ of J$_q(n,w)$ as follows.
Given a word $x=(x_1,x_2,\ldots,x_n)$ of $\cA$, the word $a_v=(a_1,a_2,\ldots,a_n)$ is constructed in $\cA_v$ as follows.
\begin{enumerate}
\item If $x_i=0$, then $a_i=x_i=0$.

\item If $x_i \neq 0$, then $a_i = x_i +v_i$ when $x_i +v_i < q$ and $a_i = x_i +v_i -(q-1)$ when $x_i +v_i \geq q$.
In other words, if $j=v_i$, then $a_i$ takes the $j$-th nonzero value of $\Z_q$ after the value of~$x_i$, where 1 follows $q-1$.
\end{enumerate}
Using this definition, we have that $\supp{a_v}=\supp{x}$.

Clearly, $\cA_v$ is obtained from $\cA$ by permuting the nonzero elements in each one of the $w$ nonzero coordinates,
of the words in $\cA$, by some $w$ cyclic
permutations (a permutation for each coordinate) on the $q-1$ nonzero symbols of $\Z_q$ (which can be different for each coordinate)
and hence $\abs{\cA_v}=\abs{\cA}$. Moreover, $\cA_v$ and $\cA$ are isomorphic subsets of J$_q(n,w)$.

Now, let $\cM$ be any orthogonal array OA$_\lambda (w,n,q-1)$, for some $\lambda \geq 1$,
whose existence is implied by Lemma~\ref{lem:lambda_for_all}. The number of rows of $M$ is $\lambda (q-1)^w$.

Consider now the set of triples
$$
\cT = \{ (c,\pi,v) ~:~ c \in \cC_\cD, ~ \pi \in S_n,~ v \in \cM, ~ \pi (c) \in \cA_v~  \}.
$$
Let $(c,\pi)$ be a pair in $\cP$, i.e.,
$c \in \cC_\cD$, $\pi \in S_n$, and $\supp{\pi (c)} = \supp{a}$ for some $a \in \cA$. Let $X = \supp{a}$ and let $u=(u_1,u_2,\ldots,u_n)$
be a word in $\Z_{q-1}^n$ such that $a=\pi (c)_u$. It is easy to verify that for each word $v \in \Z_{q-1}^n$, for which the projection
of the coordinates in $X$ on~$u$ and the projection of the coordinates in $X$ on~$v$ are equal, we have that $\pi (c)_v = a = \pi (c)_u$.
Since $\cM$ contains $\lambda$ rows for which these projections are equal, it follows that $\abs{\cT} = \lambda \abs{\cP}$.

Note, that for each permutation $\pi \in S_n$ and two elements $x,y \in \text{J}_q(n,w)$, we have that $d(\pi(x),\pi(y)) = d(x,y)$.
This implies that a fixed permutation $\pi$ with a fixed row $v \in \cM$ can transfer the elements of $\cC_\cD$ into at most $\abs{\cC'_\cD}$
elements of~$\cA_v$. Therefore, the number of triples in $\cT$ is at most $\lambda \cdot \abs{\cC'_\cD} \cdot n! \cdot (q-1)^w$ which implies that
$$
\lambda \cdot \abs{\cC_\cD} \cdot \abs{\cA} \cdot w! \cdot (n-w)! = \lambda \cdot \abs{\cP}
= \abs{\cT}  \leq \lambda \cdot \abs{\cC'_\cD} \cdot n! \cdot (q-1)^w~,
$$
and hence the claim of the lemma is proved.
\end{proof}
The important consequence from Lemma~\ref{lem:gen_anticode_preJQ} is that~(\ref{eq:local_inequality}) and~(\ref{eq:code_anticode_bound})
are satisfied and hence we can consider diameter perfect codes in J$_q(n,w)$.

\subsection{Diameter perfect constant-weight codes for which $w=n$}
\label{sec:w=n}

The first family, [{\bf F1}], of non-binary diameter perfect constant-weight codes contains all perfect code in the Hamming scheme
and all diameter perfect codes in the Hamming scheme. Since by Theorem~\ref{thm:perfect=diameter} any perfect code in the Hamming scheme
is also a diameter perfect code, it follows that we can consider only diameter perfect codes.
In this family of codes we have that $w=n$ and the size of the non-binary alphabet is increased by one compared to the
alphabet in the Hamming scheme. The relation between these families of codes is stated in the following theorem.

\begin{theorem}
\label{thm:increase_alphabet_diam}
There exists a $D$-diameter perfect code of length $n$ over an alphabet of size~${q-1}$ in the Hamming scheme,
if and only if there exists a $D$-diameter perfect constant-weight code of length $n$ and weight $w=n$ over an alphabet of size $q$.
\end{theorem}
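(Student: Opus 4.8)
The plan is to exhibit a distance-preserving bijection between the two spaces and then argue that being $D$-diameter perfect is invariant under such an isometry. Write $Q^* = Q \setminus \{0\}$ for the $q-1$ nonzero symbols of the size-$q$ alphabet, and let $\cH$ denote the Hamming space of all words of length $n$ over an alphabet of size $q-1$. The key observation is that a word of weight $w=n$ over $Q$ is precisely a word all of whose coordinates lie in $Q^*$; hence the map $\phi$ that identifies the $q-1$ symbols of the Hamming alphabet with the elements of $Q^*$ (coordinate by coordinate) is a bijection from $\cH$ onto the vertex set of J$_q(n,n)$. Since two coordinates of a word are equal if and only if their images under $\phi$ are equal, $\phi$ preserves the Hamming distance exactly, so $\phi$ is an isometry. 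In particular both spaces have the same cardinality $(q-1)^n$, which is also the value $\binom{n}{n}(q-1)^n$ of the normalizing constant in Lemma~\ref{lem:gen_anticode_preJQ} when $w=n$.

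First I would record that all the quantities entering the code-anticode bound are isometry invariants. If $\cC$ is a code in $\cH$, then $\phi(\cC)$ is a code in J$_q(n,n)$ with the same size and the same minimum distance, and conversely; thus codes of minimum distance $D+1$ on one side correspond exactly to codes of minimum distance $D+1$ on the other. Likewise, $\cA$ is an anticode of diameter $D$ in $\cH$ if and only if $\phi(\cA)$ is an anticode of diameter $D$ in J$_q(n,n)$, with $\abs{\phi(\cA)} = \abs{\cA}$. Because $\phi$ is induced by a bijection of the ground alphabets, it maps the collection of diameter-$D$ anticodes of one space bijectively onto that of the other while preserving sizes; consequently the maximum size of a diameter-$D$ anticode is the same in the two spaces, a common value I will call $M_D$.

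Next I would invoke the code-anticode bound, which holds in $\cH$ by the classical Delsarte result (the Hamming graph is distance-regular) and in J$_q(n,n)$ by Lemma~\ref{lem:gen_anticode_preJQ}. A code $\cC$ is $D$-diameter perfect exactly when its minimum distance is at least $D+1$ and $\abs{\cC}\cdot M_D = (q-1)^n$. Since $\abs{\phi(\cC)} = \abs{\cC}$, the minimum distances agree, and $M_D$ and $(q-1)^n$ are common to both spaces, the equality $\abs{\cC}\cdot M_D = (q-1)^n$ holds for $\cC$ in $\cH$ if and only if it holds for $\phi(\cC)$ in J$_q(n,n)$. As $\phi$ is a bijection, every $D$-diameter perfect code on one side is the image of one on the other, which establishes the equivalence in both directions.

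The routine verifications, that $\phi$ is well defined and bijective and that it preserves Hamming distance, are the only calculations required, and there is no real obstacle here: the content of the theorem is simply that J$_q(n,n)$ is isometrically the Hamming graph on an alphabet of size $q-1$, as already noted in the introduction, where it is observed that J$_q(n,w)$ is connected precisely when $w=n$. The one point I would state carefully is that being \emph{$D$-diameter perfect} is a property of the ambient metric space together with the code, so that it transfers across any isometry; once this is made explicit the equivalence is immediate.
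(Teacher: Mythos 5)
Your proposal is correct and takes essentially the same approach as the paper: the paper's proof also identifies the code (and its maximum anticode) over the alphabet $\{1,\ldots,q-1\}$ with the same set of words viewed inside J$_q(n,n)$, checks that minimum distance, diameter, and sizes are unchanged, and concludes from $\abs{\cC}\cdot\abs{\cA}=(q-1)^n=\abs{\text{J}_q(n,n)}$ via the code-anticode bound. Your phrasing of this identification as an explicit isometry $\phi$, with the remark that the maximum diameter-$D$ anticode size is an isometry invariant, is just a slightly more formal packaging of the same argument.
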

\begin{proof}
Let $\cC$ be a $D$-diameter perfect code of length $n$ over the alphabet $\{1,2,\ldots,q-1\}$ in the Hamming scheme.
Let $\cA$ be the related maximum size anticode with diameter $D$ for which $\abs{\cC} \cdot \abs{\cA} = (q-1)^n$.
We define the same code $\cC' \triangleq \cC$ and the same anticode $\cA' \triangleq \cA$,
over the extended alphabet $Q=\{0,1,2,\ldots,q-1\}$. We claim that $\cC'$ is a $D$-diameter perfect constant-weight code of length $n$,
weight $w=n$, and minimum Hamming distance $D+1$, over $Q$. We also claim that $\cA'$ is a maximum size anticode of length $n$, weight $w=n$,
and maximum Hamming diatance $D$, over $Q$. Clearly, the minimum Hamming
distance of $\cC'$ is equal to the minimum Hamming distance of $\cC$, i.e.,~$D+1$. Similarly, the maximum Hamming
distance of $\cA'$ is equal to the maximum Hamming distance of $\cA$, i.e.,~$D$. Moreover,
$$
\abs{\cC'} \cdot \abs{\cA'} = \abs{\cC} \cdot \abs{\cA} = (q-1)^n = \abs{\text{J}_q(n,n)},
$$
which completes the proof of our claims.

Let $\cC$ be an $D$-diameter perfect constant-weight code of length $n$ and weight $w=n$ over the alphabet $Q=\{0,1,2,\ldots,q-1\}$.
Using similar arguments, in reverse order, the same code defined over $Q^*$ is
a $D$-diameter perfect code, of length $n$ over $Q^*$, in the Hamming scheme. Similarly, if $\cA$ is a maximum size anticode,
of length $n$ and weight $w=n$, with diameter~$D$, over $Q$, then the same anticode defined over $Q^*$ is
a maximum size anticode over~$Q^*$.
\end{proof}

In other words, Theorem~\ref{thm:increase_alphabet_diam} implies that when $w=n$, in each word all the coordinates are nonzero.
Hence, the words in J$_q(n,n)$ are over an alphabet with only $q-1$ letters. It implies that any code in J$_q(n,n)$ can be considered
as a code in the Hamming scheme over an alphabet with $q-1$ letters. Therefore, any $D$-diameter perfect code
of length $n$ over an alphabet with $q-1$ letters is also $D$-diameter perfect code in J$_q(n,n)$.
Similarly, each maximum size anticode of length $n$ and diameter $D$ over an alphabet with $q-1$ letters (with no \emph{zeroes})
is also a maximum size anticode with diameter $D$ in J$_q(n,n)$. Finally, all the diameter perfect codes in the Hamming scheme
over an alphabet whose size is a prime power were characterized in~\cite{AAK01}.

\begin{theorem}
\label{thm:diam_Hamming}
In the Hamming scheme whose alphabet size is a prime power there are no diameter perfect codes except for
the Hamming codes, the extended Hamming codes, the Golay codes, the extended Golay codes, and the MDS codes.
\end{theorem}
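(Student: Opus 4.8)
The plan is to use the fact that the Hamming graph $H(n,q)$ is distance-regular and vertex-transitive, so the local inequality argument (as adapted to the Hamming graph, exactly as in the remark following Lemma~\ref{lem:local_John}) gives the code-anticode bound~(\ref{eq:code_anticode_bound}). Consequently a $D$-diameter perfect code is precisely a code $\cC$ of minimum distance $D+1$ that attains $\abs{\cC}\cdot\abs{\cA}=q^n$ for a \emph{maximum-size} anticode $\cA$ of diameter $D$. The whole problem therefore splits into two stages: first determine the maximum-size anticodes of each diameter $D$, and then identify which codes meet the resulting bound with equality.

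First I would pin down the optimal anticodes, which is the one genuinely hard ingredient and is external to this paper. The Ahlswede--Khachatrian diametric theorem for Hamming spaces shows that, depending on $n$, $q$, and $D$, a maximum anticode of diameter $D$ is either a \emph{generalized Hamming ball} --- a ball of radius $e$ when $D=2e$, and a ball of radius $e$ in $n-1$ coordinates with the one remaining coordinate left free when $D=2e+1$ --- or a \emph{subcube}, i.e. the set of all words agreeing with a fixed word outside a fixed set of $D$ coordinates, of size $q^{D}$. The subcube is exactly the anticode already exhibited for orthogonal arrays in Section~\ref{sec:code-anticode}, and there is a parameter-dependent phase transition governing which of the two families is larger.

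Given the anticodes, the reduction is routine. When the optimal anticode is a ball, Lemma~\ref{lem:ball=anti} and Theorem~\ref{thm:perfect=diameter} show that a $D$-diameter perfect code with $D=2e$ is exactly an $e$-perfect code, while for odd $D=2e+1$ (even minimum distance) the bound is met precisely by the \emph{extended} perfect codes, i.e. codes that become perfect upon deleting the free coordinate. When the optimal anticode is the subcube of size $q^{D}$, the equality forces $\abs{\cC}=q^{\,n-D}$ together with minimum distance $D+1$, so $\cC$ meets the Singleton bound and is an MDS code --- equivalently an OA$(n-D,n,q)$, which, as recalled in Section~\ref{sec:code-anticode}, is $(n-D)$-diameter perfect. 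At this point I would invoke the classification of perfect codes over a prime-power alphabet (Tiet\"av\"ainen and van Lint, building on Zinov'ev--Leont'ev): the only nontrivial perfect codes are the Hamming codes ($e=1$) and the binary and ternary Golay codes, $[23,12,7]$ and $[11,6,5]$. Their extensions contribute the extended Hamming and extended Golay codes, and together with the MDS codes these exhaust every diameter perfect code.

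The hard part is that essentially all the depth lives in two imported theorems: the diametric theorem that determines the maximum anticodes, and the perfect-code classification. The contribution here is only the orchestration, and the delicate point will be the bookkeeping at the boundaries --- the exact ranges of $(n,q,D)$ where the ball and the subcube have equal size, and the even-versus-odd diameter split --- which must be handled so that no parameter regime is dropped and so that the trivial codes (a single codeword, the whole space, and repetition-type codes) are correctly excluded from the list of genuine examples.
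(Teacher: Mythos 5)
The paper offers no proof of this theorem: it is stated as a known result imported verbatim from~\cite{AAK01}, so there is no internal argument to compare yours against. Judged on its own, your outline follows the same strategy as the source: combine the code-anticode bound with the diametric theorem of~\cite{AhKh98} to identify the maximum anticodes, then match the resulting size condition against the Tiet\"av\"ainen--van Lint--Zinov'ev--Leont'ev classification of perfect codes in the ball regime and against the Singleton bound in the subcube regime. That is the correct architecture, and your observation that all the depth lives in the two imported theorems is accurate.

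There is, however, one genuine gap in the sketch as written. The diametric theorem does not say that the optimal anticode is ``either a generalized Hamming ball or a subcube'': it says the maximum is attained within a one-parameter family of sets $F_r$ (words differing from a fixed word in at most $r$ positions inside a window of $n-D+2r$ coordinates, arbitrary outside), of which your ball and your subcube are only the two extreme members $r=\lfloor D/2\rfloor$ (or the near-ball for odd $D$) and $r=0$. For intermediate $r$ the optimal anticode is neither a ball nor a subcube, and a complete classification must also rule out codes meeting the bound with those anticodes; your case analysis silently drops this regime, which is precisely where the bookkeeping you defer to the end actually bites. A smaller point in the same direction: the step ``odd diameter plus ball-type anticode forces an extended perfect code'' deserves an explicit puncturing argument (puncture any coordinate; since $d\geq 2$ the size is preserved, the distance drops by at most one, and the size then meets the sphere-packing bound for length $n-1$), rather than being asserted. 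Neither issue is fatal --- both are handled in~\cite{AAK01} --- but as a standalone proof your text does not yet close them.
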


When $q$ is not a prime power the only known diameter perfect codes are the orthogonal arrays with index unity.

\subsection{Codes with alphabet size $2^k+1$ for which $w=n-1$}
\label{sec:perfect}

By Theorem~\ref{thm:eP_diamP} an $e$-perfect code in J$_q(n,w)$ is also a $(2e)$-diameter perfect code in J$_q(n,w)$.
All known nontrivial non-binary perfect constant-weight codes of length $n$ have weight $w=n-1$ and they form an important class
of the family [{\bf F2}] of non-binary diameter perfect codes.
There are a few known families of nontrivial non-binary perfect constant-weight codes.
Ternary 1-perfect codes in J$_3(2^m,2^m-1)$, $m \geq 2$, were constructed in~\cite{Sva99,LiTo99} and it was proved in~\cite{LiTo99}
that there are no other ternary 1-perfect constant-weight codes. A large number of nonequivalent codes with
these parameters were constructed by~\cite{Kro08}. The construction in~\cite{LiTo99} was
generalized in~\cite{EtLi01} and 1-perfect codes were constructed in J$_q(q+1,q)$, where $q=2^k +1$, $k \geq 2$.
It is not known whether there exist more nontrivial perfect constant-weight codes.

Ternary diameter perfect codes of length $n$ and weight $w=n-1$ were considered in~\cite{Kro08}.
For $d=4$, it was proved in~\cite{Kro08} that there is only one set of parameters for which there exist
a ternary 3-diameter perfect constant-weight codes of length $n$ and weight $w=n-1$. A ternary code in this set has length 6, weight 5,
and 12 codewords. Such a code was constructed in~\cite{OsSv02}.
For ternary 4-diameter perfect constant-weight codes, it was proved by~\cite{Kro08} that for every length $n=2^m$, $m$ odd, there
exists such a code with weight $w=n-1$. When $m$ is even only a code
of length $n=64$ constructed in~\cite{BDMW} is known~\cite{KOP16} and it was proved in~\cite{KOP16} that
such a code does not exist for length $n=16$.

\subsection{Generalized Steiner systems}
\label{sec:GST}

We already saw that a Steiner system S$(t,w,n)$ is a binary $(w-t)$-diameter perfect constant-weight code.
For non-binary alphabet, there is an analog definition of generalized Steiner system which was introduced in~\cite{Etz97}.

\begin{definition}
A {\bf \emph{generalized Steiner system}} GS$(t,w,n,q)$ is a constant-weight code~$\cC$, over $\Z_q$,
whose length is $n$, weight $w$, for each codeword, such that:
\begin{enumerate}
\item The minimum Hamming distance of $\cC$ is $2(w-t)+1$.

\item Each word $x$ of length $n$ and weight $t$ over $\Z_q$ is covered by exactly one codeword $c \in \cC$,
i.e., $d(x,c)=w-t$.
\end{enumerate}
\end{definition}

Similarly to Theorem~\ref{thm:ST_b_s} we have the following theorem.
\begin{theorem}
\label{thm:num_blocksG}
The number of codewords in a generalized Steiner system GS$(t,w,n,q)$ is
$$
\frac{\binom{n}{t}}{\binom{w}{t}} (q-1)^t
$$
and its minimum Hamming distance is $2(w-t)+1$.
\end{theorem}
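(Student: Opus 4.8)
The minimum distance assertion is nothing but the first defining condition of a generalized Steiner system, so the entire content lies in the enumeration, which I would carry out by a double-counting argument directly generalizing the proof of Theorem~\ref{thm:ST_b_s}. The plan is to count in two ways the set of pairs $(x,c)$ in which $x$ ranges over all words of length $n$ and weight $t$ over $\Z_q$ and $c \in \cC$ is a codeword with $d(x,c) = w-t$ (that is, $c$ covers $x$). By the second defining condition, every weight-$t$ word is covered by exactly one codeword, so the number of such pairs equals the number of weight-$t$ words, which is $\binom{n}{t}(q-1)^t$: one chooses the $t$ nonzero coordinates in $\binom{n}{t}$ ways and assigns to each a nonzero symbol in $q-1$ ways.

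It remains to count, for a fixed codeword $c$, how many weight-$t$ words it covers, and the crux of the argument is to show this number is exactly $\binom{w}{t}$ for every $c$. First I would establish the structural characterization that, for $\wt{x}=t$ and $\wt{c}=w$, one has $d(x,c)=w-t$ if and only if $\supp{x} \subseteq \supp{c}$ and $x$ agrees with $c$ on $\supp{x}$. For this, writing $A = \supp{x} \cap \supp{c}$ and letting $A_{\neq} \subseteq A$ be the positions where both words are nonzero but disagree, a direct decomposition of the Hamming distance over the support pattern gives
$$
d(x,c) = |A_{\neq}| + (t - |A|) + (w - |A|),
$$
so $d(x,c)=w-t$ forces $|A_{\neq}| = 2|A| - 2t$; since $0 \le |A| \le t$, this is possible only when $|A|=t$ and $|A_{\neq}|=0$, which is precisely the claimed containment together with agreement on $\supp{x}$.

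Granting this characterization, the weight-$t$ words covered by a fixed $c$ are obtained by choosing a $t$-subset $S$ of the $w$-element set $\supp{c}$ and copying the values of $c$ on $S$ while zeroing the remaining coordinates; the symbols are dictated by $c$, so there are exactly $\binom{w}{t}$ such words. Double-counting the pairs then yields
$$
\abs{\cC} \cdot \binom{w}{t} = \binom{n}{t}(q-1)^t,
$$
and solving for $\abs{\cC}$ gives the asserted count $\frac{\binom{n}{t}}{\binom{w}{t}}(q-1)^t$. I expect the only nontrivial step to be the distance characterization of the previous paragraph; once the factor $(q-1)^t$ is seen to come from the free choice of nonzero symbols in the weight-$t$ words, the rest is the same double count that proves the binary case in Theorem~\ref{thm:ST_b_s}.
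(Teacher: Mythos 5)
Your proof is correct, and it is exactly the double-counting argument that the paper intends when it states the theorem as the analogue of the ``readily verified'' binary case (Theorem~\ref{thm:ST_b_s}); the distance characterization you prove (covering means $\supp{x}\subseteq\supp{c}$ with agreement on $\supp{x}$) is the right way to see that each codeword covers exactly $\binom{w}{t}$ weight-$t$ words. No gaps.
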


Similarly to Lemma~\ref{lem:derived} one can verify the following lemma.
\begin{lemma}
\label{lem:derivedGS}
If there exists a GS$(t,w,n,q)$, $t > 1$, then there exists a GS$(t-1,w-1,n-1,q)$.
\end{lemma}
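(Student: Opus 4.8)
The plan is to imitate the classical derived-design construction for Steiner systems used in Lemma~\ref{lem:derived}, but with the extra ingredient that in the non-binary setting a ``point'' is really a pair consisting of a coordinate together with a nonzero symbol. So I would fix once and for all a coordinate, say the last one, $n$, together with a nonzero value $\alpha \in Q^*$, and define the derived code
\[
\cC' \triangleq \{ (c_1,\ldots,c_{n-1}) ~:~ (c_1,\ldots,c_{n-1},\alpha) \in \cC \},
\]
that is, take every codeword of the given GS$(t,w,n,q)$ whose $n$-th coordinate equals $\alpha$ and delete that coordinate. Since exactly one nonzero entry (the value $\alpha$ in position $n$) is removed, each resulting word has length $n-1$ and weight $w-1$, so $\cC'$ is a constant-weight code with the correct length and weight.

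Before checking the two defining properties I would record the elementary fact that the covering condition $d(x,c)=w-t$ for a weight-$t$ word $x$ and a weight-$w$ codeword $c$ is equivalent to $\supp{x}\subseteq\supp{c}$ together with $x_i=c_i$ for every $i\in\supp{x}$; this is immediate by counting the positions in which $x$ and $c$ differ. For the minimum distance, if $c,c'\in\cC$ both have $n$-th coordinate equal to $\alpha$, then coordinate $n$ contributes nothing to $d(c,c')$, so deleting it leaves the distance unchanged; hence the minimum distance of $\cC'$ is again $2(w-t)+1 = 2\bigl((w-1)-(t-1)\bigr)+1$, exactly as required for a GS$(t-1,w-1,n-1,q)$.

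For the covering property I would start from an arbitrary word $y$ of length $n-1$ and weight $t-1$ over $\Z_q$, append $\alpha$ in coordinate $n$ to obtain a word $x$ of length $n$ and weight $t$, and invoke the second defining property of the given generalized Steiner system: there is exactly one $c\in\cC$ with $d(x,c)=w-t$. By the equivalence above, $\supp{x}\subseteq\supp{c}$, and since $x_n=\alpha\neq 0$ we get $c_n=\alpha$; thus $c$ is one of the codewords truncated into $\cC'$, and its truncation $c'$ satisfies $\supp{y}\subseteq\supp{c'}$, agrees with $y$ on $\supp{y}$, and has $d(y,c')=d(x,c)=(w-1)-(t-1)$. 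Uniqueness of the covering codeword for $y$ then follows from uniqueness for $x$: any two truncated codewords covering $y$ would lift, by reattaching $\alpha$ in position $n$, to two codewords of $\cC$ covering $x$, forcing them to coincide.

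The construction is therefore routine once the dictionary between the covering condition and support-containment with value-agreement is set up. The only genuinely new point compared with the binary case, and the place where I would be most careful, is the need to fix a nonzero symbol $\alpha$ in addition to a coordinate: this is precisely what guarantees that the appended word $x$ has weight $t$ (rather than $t-1$) and that the derived distances collapse to $2\bigl((w-1)-(t-1)\bigr)+1$. I would also note in passing that $t>1$ ensures $t-1\geq 0$, so the statement is non-vacuous.
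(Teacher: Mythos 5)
Your proof is correct and is exactly the derived-design argument the paper has in mind (it offers no written proof, merely pointing to the analogy with Lemma~\ref{lem:derived}): fixing a coordinate together with a nonzero symbol $\alpha$, truncating the codewords ending in $\alpha$, and using the equivalence between $d(x,c)=w-t$ and support-containment with value-agreement is precisely the intended adaptation to the non-binary setting.
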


Let $\cA^\text{s} (n,w,t)$ be the anticode defined by
$$
\cA^\text{s} (n,w,t) \triangleq \{ (\overbrace{1\cdots \cdots 1}^{t ~ \text{times}}, a_1,\ldots,a_{n-t} ) ~:~ a_i \in \Z_q, ~ \wt{a_1 \cdots a_{n-t}} = w-t \}~.
$$
Note, that when $q=2$ we have that $\cA^\text{s} (n,w,t)$ is identical to $\cA (n,w,t)$.

The following lemma can be readily verified.
\begin{lemma}
\label{lem:anticode_sizeGS}
The anticode $\cA^\text{s} (n,w,t)$ has codewords of length $n$ and weight~$w$, with maximum distance $2(w-t)$, where $n \leq 2w-t$, over $\Z_q$.
The anticode $\cA^\text{s} (n,w,t)$ has $\binom{n-t}{w-t} (q-1)^{w-t}$ codewords.
\end{lemma}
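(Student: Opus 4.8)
The plan is to verify the three structural assertions — length, weight, and maximum distance — and the cardinality separately, exploiting the transparent block structure of the words: a fixed prefix of $t$ ones followed by a suffix of length $n-t$ and weight $w-t$ over $\Z_q$.

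First I would confirm length and weight. Each word has $t+(n-t)=n$ coordinates, its prefix contributes weight exactly $t$ (all $t$ entries equal $1$), and its suffix contributes weight $w-t$ by definition; hence every word has weight $t+(w-t)=w$, as claimed. For the cardinality, the prefix is fixed, so a word is determined by its suffix $(a_1,\dots,a_{n-t})\in\Z_q^{n-t}$ of weight $w-t$. Such a suffix is built by choosing its support --- the $w-t$ nonzero positions among the $n-t$ suffix coordinates --- in $\binom{n-t}{w-t}$ ways, and then assigning to each chosen position one of the $q-1$ nonzero symbols of $\Z_q$, independently, in $(q-1)^{w-t}$ ways. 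Multiplying gives the stated count $\binom{n-t}{w-t}(q-1)^{w-t}$.

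The maximum-distance claim is the only part needing an argument. Given two words $x,y\in\cA^\text{s}(n,w,t)$, both carry the all-ones prefix, so they agree on the first $t$ coordinates and $d(x,y)$ equals the Hamming distance between their suffixes $\alpha,\beta$, each of length $n-t$ and weight $w-t$. A coordinate contributes to $d(\alpha,\beta)$ only if $\alpha$ or $\beta$ is nonzero there, so $d(\alpha,\beta)\le \wt{\alpha}+\wt{\beta}=2(w-t)$; this upper bound holds for every pair and yields the diameter bound. The bound is attained exactly when the two suffix-supports are disjoint, so that each nonzero coordinate of one suffix meets a zero of the other. Two $(w-t)$-subsets of the $n-t$ suffix coordinates can be made disjoint precisely when $2(w-t)\le n-t$, the length condition required for the diameter to reach $2(w-t)$. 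The only subtlety is this achievability direction: under the length hypothesis one simply picks two suffixes with disjoint supports (and arbitrary nonzero values), realizing distance $2(w-t)$, whence the maximum distance equals $2(w-t)$.
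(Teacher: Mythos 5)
Your verification is correct, and since the paper dismisses this lemma with ``can be readily verified,'' your argument is precisely the expected one: the fixed all-ones prefix reduces everything to the suffixes, the counting is a support-then-values product, and the distance bound is $d(\alpha,\beta)\le\wt{\alpha}+\wt{\beta}$ with attainment via disjoint supports. One point worth flagging: your (correct) achievability analysis shows the diameter equals $2(w-t)$ exactly when $2(w-t)\le n-t$, i.e., $n\ge 2w-t$, which is the condition the paper itself uses later in Section~\ref{sec:anticodes} (``when $n-t\ge 2(w-t)$''); the inequality $n\le 2w-t$ printed in the lemma statement is evidently a typo, and if it were taken literally the maximum distance would be $n-t$ rather than $2(w-t)$ whenever the inequality is strict.
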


\begin{lemma}
\label{lem:max_anticodeGS}
If there exists a generalized Steiner system S$(t,w,n,q)$, then the anticode $\cA^\text{s} (n,w,t)$ is a maximum size anticode
of length~$n$ and weight $w$, with maximum distance $2(w-t)$, where $n \leq 2w-t$, over $\Z_q$.
\end{lemma}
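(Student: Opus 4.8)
The plan is to transcribe the proof of Theorem~\ref{thm:anticode_Steiner} to the non-binary constant-weight setting, now that the code-anticode bound is available for $\text{J}_q(n,w)$. The existence of a generalized Steiner system supplies a concrete code whose size, fed into that bound, pins down the largest possible size of an anticode of diameter $2(w-t)$; the set $\cA^\text{s}(n,w,t)$ will then be seen to attain this value, and hence to be optimal. The factors of $(q-1)$ will simply track the nonzero symbols and play no essential role beyond bookkeeping.

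First I would let $\cC$ be the code given by a generalized Steiner system GS$(t,w,n,q)$. By Theorem~\ref{thm:num_blocksG} its minimum Hamming distance is $2(w-t)+1$ and $\abs{\cC} = \frac{\binom{n}{t}}{\binom{w}{t}}(q-1)^t$. Because the local inequality lemma (Lemma~\ref{lem:gen_anticode_preJQ}) holds for $\text{J}_q(n,w)$, the code-anticode bound~(\ref{eq:code_anticode_bound}) applies: for every anticode $\cA$ in $\text{J}_q(n,w)$ of diameter $2(w-t)$ we have $\abs{\cC}\cdot\abs{\cA} \leq \abs{\text{J}_q(n,w)} = \binom{n}{w}(q-1)^w$. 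Solving for $\abs{\cA}$ and substituting the value of $\abs{\cC}$ gives
$$
\abs{\cA} \;\leq\; \binom{n}{w}\,\frac{\binom{w}{t}}{\binom{n}{t}}\,(q-1)^{w-t}.
$$
The standard identity $\binom{n}{w}\binom{w}{t} = \binom{n}{t}\binom{n-t}{w-t}$ then collapses the right-hand side to $\binom{n-t}{w-t}(q-1)^{w-t}$, so every anticode of diameter $2(w-t)$ in $\text{J}_q(n,w)$ contains at most $\binom{n-t}{w-t}(q-1)^{w-t}$ words.

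Finally I would invoke Lemma~\ref{lem:anticode_sizeGS}, under its stated hypothesis $n \leq 2w-t$, to conclude that $\cA^\text{s}(n,w,t)$ is itself an anticode of diameter $2(w-t)$ with exactly $\binom{n-t}{w-t}(q-1)^{w-t}$ words. Since it meets the upper bound just derived, it is a maximum size anticode, which completes the proof. I expect no genuine obstacle here: the only non-formal manipulation is the binomial simplification, and all the real content has already been established elsewhere, namely the validity of the code-anticode bound for $\text{J}_q(n,w)$ (Lemma~\ref{lem:gen_anticode_preJQ}) together with the existence of the generalized Steiner system, which is precisely what forces the bound to be tight.
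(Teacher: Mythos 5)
Your proposal is correct and follows essentially the same route as the paper: both feed the size and minimum distance of the generalized Steiner system (from Theorem~\ref{thm:num_blocksG}) into the code-anticode bound made available by Lemma~\ref{lem:gen_anticode_preJQ}, and both observe via Lemma~\ref{lem:anticode_sizeGS} that $\cA^\text{s}(n,w,t)$ attains the resulting bound. The only cosmetic difference is that you solve for the upper bound on $\abs{\cA}$ and then check attainment, whereas the paper verifies directly that $\abs{\cC}\cdot\abs{\cA^\text{s}(n,w,t)} = \binom{n}{w}(q-1)^w = \abs{\text{J}_q(n,w)}$; the binomial identity involved is the same.
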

\begin{proof}
Let $\cC$ be a generalized Steiner system GS$(t,w,n,q)$ and let $\cA$ be the anticode $\cA^\text{s} (n,w,t)$.
By the definition of a generalized Steiner system and by Lemma~\ref{thm:num_blocksG}, $\cC$ has minimum
Hamming distance $2(w-t)+1$ and its size is $\frac{\binom{n}{t}}{\binom{w}{t}} (q-1)^t$. By Lemma~\ref{lem:anticode_sizeGS} the
anticode $\cA^\text{s} (n,w,t)$ has maximum distance $2(w-t)$ and its size is $\binom{n-t}{w-t} (q-1)^{w-t}$. Since
$$
\abs{\cC} \cdot \abs{\cA} = \frac{\binom{n}{t}}{\binom{w}{t}} (q-1)^t \cdot
\binom{n-t}{w-t} (q-1)^{w-t} = \binom{n}{w} (q-1)^w =\abs{\text{J}_q(n,w)} ~,
$$
it follows by the code-anticode bound that $\cA^\text{s} (n,w,t)$ is a maximum size anticode of length~$n$ and
weight $w$, over $\Z_q$, whose maximum distance~${2(w-t)}$.
\end{proof}
\begin{corollary}
A generalized Steiner system GS$(t,w,n,q)$ is a ${2(w-t)}$-diameter perfect code.
\end{corollary}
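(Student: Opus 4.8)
The plan is to read this off as an immediate consequence of the two preceding lemmas together with the code-anticode bound, which is available here because the local inequality lemma has been verified for J$_q(n,w)$. Recall that a $D$-diameter perfect code is, by definition, a code whose pairwise distances all exceed $D$ and which attains~(\ref{eq:code_anticode_bound}) with equality against some anticode of diameter~$D$. I would set $D \triangleq 2(w-t)$ and $\cV \triangleq \text{J}_q(n,w)$, and then check the three ingredients of this definition in turn.

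First, I would note that the code-anticode bound is legitimately available: by Lemma~\ref{lem:gen_anticode_preJQ} the local inequality bound~(\ref{eq:local_inequality}) holds for J$_q(n,w)$, and as observed immediately after that lemma this yields~(\ref{eq:code_anticode_bound}) for this space. Second, for the minimum-distance condition, let $\cC$ be a generalized Steiner system GS$(t,w,n,q)$. By Theorem~\ref{thm:num_blocksG} its minimum Hamming distance is $2(w-t)+1 = D+1$, so every nonzero distance between codewords of $\cC$ lies in $\{i : D+1 \leq i\}$, exactly as required of a $D$-diameter perfect code.

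Third, I would exhibit the matching anticode. Take $\cA \triangleq \cA^\text{s}(n,w,t)$. By Lemma~\ref{lem:anticode_sizeGS} this is an anticode of length $n$, weight $w$, and maximum distance $2(w-t) = D$, so it is an anticode of diameter~$D$. The size computation is already in hand from the proof of Lemma~\ref{lem:max_anticodeGS}: combining the block count of Theorem~\ref{thm:num_blocksG} with the anticode size of Lemma~\ref{lem:anticode_sizeGS} gives
$$
\abs{\cC} \cdot \abs{\cA} = \frac{\binom{n}{t}}{\binom{w}{t}} (q-1)^t \cdot \binom{n-t}{w-t} (q-1)^{w-t} = \binom{n}{w} (q-1)^w = \abs{\text{J}_q(n,w)}~,
$$
so~(\ref{eq:code_anticode_bound}) is attained with equality. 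Hence $\cC$ meets the code-anticode bound and is, by definition, a $2(w-t)$-diameter perfect code.

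There is essentially no hard step here, since the decisive size identity was already established in Lemma~\ref{lem:max_anticodeGS}; the corollary is really just a bookkeeping statement that repackages that identity in the language of diameter perfect codes. The only point requiring care is to confirm that the two defining parameters line up with $D = 2(w-t)$ simultaneously — namely that the code's minimum distance is $D+1$ (Theorem~\ref{thm:num_blocksG}) while the anticode's diameter is $D$ (Lemma~\ref{lem:anticode_sizeGS}) — so that the equality in the product of sizes is exactly the equality case of~(\ref{eq:code_anticode_bound}) for diameter $D$. Both hold, and the proof is complete.
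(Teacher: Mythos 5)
Your proof is correct and follows exactly the paper's route: the corollary is stated as an immediate consequence of Lemma~\ref{lem:max_anticodeGS}, whose proof already contains the identity $\abs{\cC}\cdot\abs{\cA^\text{s}(n,w,t)} = \binom{n}{w}(q-1)^w$ together with the distance and diameter checks you spell out. Your write-up merely makes explicit the bookkeeping that the paper leaves implicit.
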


Generalized Stiener systems were defined in~\cite{Etz97} and further considered in many papers, e.g.~\cite{CJZ07,CGZ99,PhYi97,WuFa09},
but there is still lot ground for further research in this area.

\subsection{Maximum distance separable constant weight codes}
\label{sec:MDS-CW}

\begin{definition}
An $(n,w,q)$ {\bf \emph{maximum distance separable constant-weight code}} (MDS-CW code in short) is a
constant-weight code~$\cC$, over $\Z_q$, whose length is~$n$, weight $w$, for each codeword, such that:
\begin{enumerate}
\item The minimum Hamming distance of $\cC$ is $w$.

\item Each subset of $w$ coordinates is the support of exactly $q-1$ codewords.
\end{enumerate}
\end{definition}
The name MDS-CW codes is a consequence from the observation that the minimum weight codewords
in an MDS codes form such an MDS-CW code~\cite[pp. 319--321]{McSl77}.

Let $\cA^\text{m} (n,w,\delta)$, $1 \leq \delta \leq w$, be the anticode defined as follows
$$
\cA^\text{m} (n,w,\delta) \triangleq \{ (a_1,a_2, \ldots, a_\delta , \overbrace{1\cdots \cdots 1}^{w-\delta ~ \text{times}} ,
\overbrace{0\cdots \cdots 0}^{n-w ~ \text{times}}) ~:~ a_i \in \Z_q \setminus \{0\} \}~.
$$

The following lemma can be readily verified.
\begin{lemma}
\label{lem:anticode_sizeMDSCW}
The anticode $\cA^\text{m} (n,w,\delta)$ has codewords of length $n$, weight $w$, with maximum distance $\delta$,
and the number of codewords in $\cA^\text{m} (n,w,\delta)$ is $(q-1)^\delta$.
\end{lemma}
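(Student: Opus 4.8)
The plan is to verify the three structural assertions---length, weight, and cardinality---by direct inspection of the defining form of $\cA^\text{m}(n,w,\delta)$, and then to pin down the maximum distance with a short two-sided argument. None of these steps is deep; the lemma is flagged as ``readily verified,'' so the goal is simply to record each verification cleanly.

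First I would count coordinates. Each word in $\cA^\text{m}(n,w,\delta)$ is built from three consecutive blocks: the $\delta$ symbols $a_1,\ldots,a_\delta$, the $w-\delta$ ones, and the $n-w$ zeros. These contribute $\delta+(w-\delta)+(n-w)=n$ coordinates, so every word has length $n$. For the weight I would note that the first $\delta$ coordinates are nonzero by hypothesis, since $a_i\in\Z_q\setminus\{0\}$, and the next $w-\delta$ coordinates equal $1$ and are therefore nonzero as well, while the final $n-w$ coordinates vanish. Hence the number of nonzero coordinates is $\delta+(w-\delta)=w$, as claimed.

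Next I would count words. The blocks of ones and of zeros are common to every element of the anticode, so a word is determined precisely by the tuple $(a_1,\ldots,a_\delta)$; each entry $a_i$ may be chosen independently among the $q-1$ nonzero elements of $\Z_q$, and distinct tuples clearly yield distinct words. Therefore $\abs{\cA^\text{m}(n,w,\delta)}=(q-1)^\delta$.

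Finally, for the maximum distance I would argue both directions. Any two words of $\cA^\text{m}(n,w,\delta)$ agree on every coordinate outside the first $\delta$ positions, so their Hamming distance is at most $\delta$. To see that $\delta$ is attained, I would invoke $q>2$, which gives $q-1\geq 2$; one can then select two words whose leading tuples differ in all $\delta$ coordinates, realizing distance exactly $\delta$. This last point---confirming that the maximum equals $\delta$ rather than something strictly smaller---is the only step that is not purely mechanical, and it rests entirely on the hypothesis $q>2$ (together with $\delta\geq 1$ from the stated range $1\leq\delta\leq w$); everything else is a direct reading of the definition.
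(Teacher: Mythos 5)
Your proof is correct and is exactly the direct verification the paper intends: the lemma is stated as ``readily verified'' with no written proof, and your block-by-block count of length, weight, and cardinality, together with the two-sided argument that the diameter is attained (using $q-1\geq 2$), supplies precisely the omitted details.
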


\begin{lemma}
\label{lem:max_anticodeMDSCW}
If there exists an $(n,w,q)$ MDS-CW code, then the anticode $\cA^\text{m} (n,w,w-1)$ is a maximum size anticode of length $n$,
weight $w$, with maximum distance $w-1$, over $\Z_q$.
\end{lemma}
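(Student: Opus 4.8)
The plan is to mirror the proof of Lemma~\ref{lem:max_anticodeGS}: exhibit the MDS-CW code as the object ``dual'' to the target anticode in the code-anticode bound, so that the product of their sizes equals $\abs{\text{J}_q(n,w)}$ exactly, forcing the anticode to be optimal.

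First I would compute $\abs{\cC}$ for an $(n,w,q)$ MDS-CW code $\cC$. By the second defining condition, each of the $\binom{n}{w}$ possible $w$-subsets of the $n$ coordinates is the support of exactly $q-1$ codewords; since every codeword has a single support, summing over all supports gives $\abs{\cC}=\binom{n}{w}(q-1)$. By the first defining condition the minimum Hamming distance of $\cC$ is $w=(w-1)+1$, so all pairwise distances in $\cC$ exceed $w-1$. Thus $\cC$ is precisely a code of the type $\cC_\cD$ appearing in the code-anticode bound with $D=w-1$.

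Next I would invoke the code-anticode bound~(\ref{eq:code_anticode_bound}), which holds for $\text{J}_q(n,w)$ by Lemma~\ref{lem:gen_anticode_preJQ}, applied to the pair consisting of $\cC$ and an arbitrary anticode $\cA$ of length $n$, weight $w$, and maximum distance $w-1$ over $\Z_q$. Since $\abs{\text{J}_q(n,w)}=\binom{n}{w}(q-1)^w$, the bound yields
$$
\binom{n}{w}(q-1)\cdot\abs{\cA}\leq\binom{n}{w}(q-1)^w,
$$
whence $\abs{\cA}\leq(q-1)^{w-1}$ for every such anticode.

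Finally, by Lemma~\ref{lem:anticode_sizeMDSCW} with $\delta=w-1$, the set $\cA^\text{m}(n,w,w-1)$ has maximum distance $w-1$ and exactly $(q-1)^{w-1}$ codewords, so it attains the upper bound just established and is therefore a maximum size anticode. There is essentially no obstacle here beyond bookkeeping: the single point that must be checked with care is that the parameters of the MDS-CW code and of the target anticode are genuinely complementary, i.e.\ that $\abs{\cC}\cdot\abs{\cA^\text{m}(n,w,w-1)}=\binom{n}{w}(q-1)^w=\abs{\text{J}_q(n,w)}$. This identity simultaneously shows that the anticode is optimal and (as in the corollary following Lemma~\ref{lem:max_anticodeGS}) that the MDS-CW code itself is a $(w-1)$-diameter perfect code.
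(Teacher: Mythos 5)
Your proposal is correct and follows essentially the same route as the paper's proof: compute $\abs{\cC}=\binom{n}{w}(q-1)$, apply the code-anticode bound (valid for J$_q(n,w)$ by Lemma~\ref{lem:gen_anticode_preJQ}) to bound any diameter-$(w-1)$ anticode by $(q-1)^{w-1}$, and note via Lemma~\ref{lem:anticode_sizeMDSCW} that $\cA^\text{m}(n,w,w-1)$ attains this bound. Your version merely makes explicit the quantification over an arbitrary anticode, which the paper leaves implicit.
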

\begin{proof}
Let $\cC$ be an $(n,w,q)$ MDS-CW code and $\cA$ be the anticode $\cA^\text{m} (n,w,w-1)$.
By the definition, an $(n,w,q)$ MDS-CW code, has minimum distance $w$ and size $\binom{n}{w} (q-1)$.
By Lemma~\ref{lem:anticode_sizeMDSCW}, the anticode $\cA^\text{m} (n,w,w-1)$ has maximum distance $w-1$ and size $(q-1)^{w-1}$. Since
$$
\abs{\cC} \cdot \abs{\cA} =\binom{n}{w} (q-1) \cdot (q-1)^{w-1} = \binom{n}{w} (q-1)^w =\abs{\text{J}_q(n,w)} ~,
$$
it follows by the code-anticode bound that the anticode $\cA^\text{m} (n,w,w-1)$ is a maximum size anticode of length~$n$, weight $w$,
with maximum distance $w-1$, over $\Z_q$.
\end{proof}
\begin{corollary}
An $(n,w,q)$ MDS-CW code is a $(w-1)$-diameter perfect code.
\end{corollary}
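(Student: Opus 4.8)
The plan is to read the result off directly from the definition of a diameter perfect code together with Lemma~\ref{lem:max_anticodeMDSCW}; essentially all of the required computation has already been carried out inside the proof of that lemma. Recall that a code attaining the code-anticode bound~(\ref{eq:code_anticode_bound}) with equality, when paired with an anticode of diameter $D$, is by definition a $D$-diameter perfect code. Here the target is $D = w-1$, so the task is simply to certify that an MDS-CW code together with the anticode $\cA^\text{m}(n,w,w-1)$ hits this bound exactly.

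First I would check that an $(n,w,q)$ MDS-CW code $\cC$ is a legitimate candidate for the bound at diameter $D = w-1$. Since $\cC$ has minimum Hamming distance $w$ by definition, every distance between distinct codewords lies in the set $\cD = \{\, i : w \leq i \leq n \,\}$, so all such distances strictly exceed $D = w-1$. Thus $\cC$ plays exactly the role of $\cC_\cD$ in Lemma~\ref{lem:gen_anticode_preJQ} with $D+1 = w$, which is precisely the regime that yields the code-anticode bound~(\ref{eq:code_anticode_bound}).

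Next I would invoke Lemma~\ref{lem:max_anticodeMDSCW}: the anticode $\cA^\text{m}(n,w,w-1)$ has maximum distance (diameter) $w-1 = D$ and size $(q-1)^{w-1}$. The relevant product, already evaluated in the proof of that lemma, is $\abs{\cC} \cdot \abs{\cA^\text{m}(n,w,w-1)} = \binom{n}{w}(q-1) \cdot (q-1)^{w-1} = \binom{n}{w}(q-1)^w = \abs{\text{J}_q(n,w)}$. Hence $\cC$ meets~(\ref{eq:code_anticode_bound}) with equality for an anticode of diameter $w-1$, and so by definition $\cC$ is a $(w-1)$-diameter perfect code.

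The only point worth flagging — rather than a genuine obstacle — is that the equality chain established inside Lemma~\ref{lem:max_anticodeMDSCW} does double duty: the single identity $\abs{\cC}\cdot\abs{\cA} = \abs{\text{J}_q(n,w)}$ simultaneously certifies that $\cA^\text{m}(n,w,w-1)$ is a maximum size anticode (the direction used in the lemma) and that $\cC$ is diameter perfect (the direction used here). Reading that identity in the second direction gives the corollary with no further work.
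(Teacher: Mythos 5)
Your proof is correct and follows the paper's own route: the corollary is read off from the equality $\abs{\cC}\cdot\abs{\cA^\text{m}(n,w,w-1)} = \binom{n}{w}(q-1)^w = \abs{\text{J}_q(n,w)}$ already established in the proof of Lemma~\ref{lem:max_anticodeMDSCW}, combined with the code-anticode bound and the fact that the minimum distance $w$ exceeds the anticode diameter $w-1$. Your closing remark that this single identity simultaneously certifies maximality of the anticode and diameter perfection of the code is exactly the reading the paper intends.
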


Given a pair $(w,n)$ it is very simple to verify from Theorem~\ref{thm:MDS_conjecture} that there exists a prime power $q$ for which
there exists an $(n,w,q)$ MDS-CW code. But, this can be further improved as It was proved in~\cite{Etz97} that there exists a $q_0 =QMDS(n,w)$
such that for each $q \geq q_0$ there exists an $(n,w,q)$ MDS-CW code.
Finally, we will present a theorem related to a very simple union construction~\cite{Etz97} which can be used to prove
an upper bound on this $QMDS(n,w)$.
\begin{theorem}
\label{thm:const_union}
If there exists an $(n,w,q_1)$ MDS-CW code and an $(n,w,q_2)$ MDS-CW code, then there exists an $(n,w,q_1+q_2 -1)$ MDS-CW code.
\end{theorem}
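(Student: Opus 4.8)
The plan is to construct the larger MDS-CW code directly by taking a union of (relabeled copies of) the two given codes, exactly as the name of the theorem suggests. Suppose $\cC_1$ is an $(n,w,q_1)$ MDS-CW code over an alphabet $Q_1$ and $\cC_2$ is an $(n,w,q_2)$ MDS-CW code over an alphabet $Q_2$, where both alphabets contain a common zero. The natural idea is to identify the nonzero symbols of the two alphabets in an overlapping-by-one fashion: let $Q_1 = \{0,1,\dots,q_1-1\}$ and $Q_2 = \{0,q_1-1,q_1,\dots,q_1+q_2-2\}$, so that the two nonzero symbol-sets share exactly one symbol (here $q_1-1$) and their union is an alphabet of size $q_1+q_2-1$. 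Then I would set $\cC \triangleq \cC_1 \cup \cC_2$ and claim $\cC$ is an $(n,w,q_1+q_2-1)$ MDS-CW code.

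**First I would verify the support-covering property.** By definition, each $w$-subset $S$ of the coordinates is the support of exactly $q_1-1$ codewords of $\cC_1$ and of exactly $q_2-1$ codewords of $\cC_2$. Because the identification of alphabets is arranged so that each codeword's set of supported positions is unaffected, the supports in $\cC$ are just the union of the two families of codewords sharing support $S$. The only subtlety is whether a codeword of $\cC_1$ could coincide with a codeword of $\cC_2$, which would cause double-counting; the overlap of a single shared nonzero symbol must be controlled so that the total count of codewords on each support is exactly $(q_1-1)+(q_2-1) = q_1+q_2-2 = (q_1+q_2-1)-1$, matching the required value $q-1$ for $q = q_1+q_2-1$. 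This bookkeeping on the shared symbol is where the precise choice of identification matters.

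**The main obstacle, and the heart of the proof, is confirming the minimum-distance condition**, namely that $\cC$ still has minimum Hamming distance $w$. Distances between two codewords both in $\cC_1$, or both in $\cC_2$, are already at least $w$ by hypothesis. The new pairs to check are a codeword $c_1 \in \cC_1$ against a codeword $c_2 \in \cC_2$. Since each has weight $w$, if their supports differ then the distance is automatically large; the dangerous case is when $\supp{c_1} = \supp{c_2}$, where the distance equals the number of coordinates on which their nonzero symbols disagree. Here I would exploit the overlap-by-one alphabet identification: on a common support, a symbol from $Q_1^*$ and a symbol from $Q_2^*$ agree only if both equal the single shared symbol, so I must argue that $c_1$ and $c_2$ cannot agree on too many positions. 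The cleanest way to guarantee this is to choose the relabeling of $\cC_2$ so that on every support it behaves like a coset-style shift away from $\cC_1$, forcing disagreement in all $w$ coordinates whenever the two codewords are distinct; I expect to lean on the fact that within each support $\cC_1$ already realizes $q_1-1$ pairwise-distance-$w$ words (by its own minimum-distance $w$ property restricted to that support), so adding the $q_2-1$ shifted words from $\cC_2$ keeps all cross-distances equal to $w$.

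**Finally**, once both the support-covering count ($q_1+q_2-2$ codewords per support) and the minimum distance ($=w$) are established, the definition of an MDS-CW code is met with $q = q_1+q_2-1$, and the theorem follows. I would close by noting that this union construction is precisely the tool needed to bound $QMDS(n,w)$ from above, since starting from small prime-power MDS-CW codes one can repeatedly amalgamate them to cover every sufficiently large alphabet size.
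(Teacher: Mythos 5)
Your overall strategy --- relabel the alphabets and take the union --- is exactly the ``very simple union construction'' the theorem name advertises (the paper itself only cites the construction rather than proving it), but the specific identification you choose breaks the proof, and the repair you sketch is not a proof. First, the arithmetic: if the nonzero symbol sets of $Q_1$ and $Q_2$ overlap in exactly one symbol, the union alphabet has $1+(q_1-1)+(q_2-1)-1=q_1+q_2-2$ symbols, not $q_1+q_2-1$ (and as written your $Q_2$ actually has $q_2+1$ elements). Second, and more seriously, the overlap creates genuine distance failures that you cannot wave away. If $c_1\in\cC_1$ and $c_2\in\cC_2$ have the same support and both happen to take the shared symbol on several common positions, their distance drops below $w$; worse, even when the supports differ, two codewords that are constantly equal to the shared symbol on supports intersecting in $w-1$ positions would be at distance $2$, so your claim that distinct supports make the distance ``automatically large'' is false under your identification. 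Your proposed remedy --- choosing the relabeling of $\cC_2$ as a ``coset-style shift'' forcing disagreement everywhere --- is precisely the statement that needs proof, and no argument is given that such a relabeling exists for arbitrary MDS-CW codes.

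The fix is to drop the overlap entirely: take $Q_1^*$ and $Q_2^*$ \emph{disjoint}, so the union alphabet is $\{0\}\cup Q_1^*\cup Q_2^*$ of size $1+(q_1-1)+(q_2-1)=q_1+q_2-1$, and set $\cC=\cC_1\cup\cC_2$. Then every obstacle you identified evaporates. Each $w$-subset supports exactly $(q_1-1)+(q_2-1)=(q_1+q_2-1)-1$ codewords, with no possible coincidence between a codeword of $\cC_1$ and one of $\cC_2$ since their nonzero symbols come from disjoint sets. For the distance, pairs within $\cC_1$ or within $\cC_2$ are at distance at least $w$ by hypothesis; for $c_1\in\cC_1$ and $c_2\in\cC_2$ the disjointness forces disagreement at every coordinate where at least one of them is nonzero, so $d(c_1,c_2)=\abs{\supp{c_1}\cup\supp{c_2}}\geq w$, with equality exactly when the supports coincide. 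This gives the theorem in two lines and is what makes the construction ``simple.''
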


There are more constructions and bounds on the parameters of MDS-CW codes which were presented in~\cite{Etz97}.
Some of these bounds and constructions can be modified and generalized to the sixth family, [{\bf F6}], which will
be discussed in Section~\ref{sec:MOA-CW}.

\subsection{Codes for which $d=w+1$}
\label{sec:d=w+1}

When $d=w+1$ we are looking for an $(n,w+1,w)_q$ code, i.e., a constant-weight code of length $n$, weight $w$, and minimum
Hamming distance $w+1$, over $\Z_q$. In such a code, each subset of $w$ coordinates will support exactly one codeword,
i.e., the number of codewords is~$\binom{n}{w}$. It is rather easy to verify that such a code is a
$w$-diameter perfect constant-weight code and it exists for any given $n$ and $w$ as it is proved in the following theorem.

\begin{theorem}
\label{thm:alphabet_d=w+1}
If $n$ and $w$ are integers such that $1 \leq w \leq n-1$, then there exists a $q_0 (w,n)$ such that for each
$q \geq q_0 (w,n)$ there exist an $(n,w+1,w)_q$ $w$-diameter perfect code $\cC$.
\end{theorem}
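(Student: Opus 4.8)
The plan is to give an explicit construction that works once the alphabet is large enough, and then invoke the code-anticode bound to conclude diameter perfectness. The key observation is a clean \emph{sufficient} condition on the distance: if $c_S$ and $c_{S'}$ denote the (unique) codewords supported on the $w$-subsets $S \neq S'$, and if they disagree on \emph{every} common coordinate $i \in S \cap S'$, then, writing $a = \abs{S \cap S'} \leq w-1$, a coordinate-by-coordinate count gives $d(c_S,c_{S'}) = (w-a) + (w-a) + a = 2w - a \geq w+1$. The two terms $w-a$ come from the coordinates of $S \setminus S'$ and of $S' \setminus S$, where exactly one of the two words is nonzero, while the term $a$ counts the common coordinates, all assumed to disagree. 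Thus it suffices to build a code with exactly one codeword per support whose codewords disagree on all shared coordinates.

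First I would observe that this sufficient condition decouples across coordinates. For each coordinate $i \in \{1,\dots,n\}$, let $\cF_i$ be the collection of $w$-subsets containing $i$, so $\abs{\cF_i} = \binom{n-1}{w-1}$. Choose, independently for each $i$, an injection $\phi_i : \cF_i \to Q \setminus \{0\}$; such a map exists precisely when $q - 1 \geq \binom{n-1}{w-1}$. Define $c_S$ by $c_S(i) = \phi_i(S)$ for $i \in S$ and $c_S(i) = 0$ otherwise, and set $\cC = \{ c_S : \abs{S} = w \}$. Each $c_S$ has weight $w$ and support $S$, so distinct supports give distinct codewords, $\abs{\cC} = \binom{n}{w}$, and every $w$-subset supports exactly one codeword. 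For any $S \neq S'$ and any $i \in S \cap S'$, injectivity of $\phi_i$ forces $c_S(i) = \phi_i(S) \neq \phi_i(S') = c_{S'}(i)$, so the codewords disagree on every common coordinate and the count above yields minimum distance exactly $w+1$ (attained when $\abs{S \cap S'} = w-1$).

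It then remains to verify that $\cC$ is $w$-diameter perfect. Take the anticode $\cA$ consisting of all words of $\text{J}_q(n,w)$ with one fixed support $S_0$; any two of these agree on the $n-w$ zero coordinates, so $\cA$ has diameter exactly $w$, while $\abs{\cA} = (q-1)^w$. Since $\abs{\cC} \cdot \abs{\cA} = \binom{n}{w}(q-1)^w = \abs{\text{J}_q(n,w)}$, the code-anticode bound, which holds for $\text{J}_q(n,w)$ by Lemma~\ref{lem:gen_anticode_preJQ}, is met with equality, so $\cC$ is a $w$-diameter perfect code. Taking $q_0(w,n) = \binom{n-1}{w-1} + 1$ completes the argument; note that no field structure is used, so any alphabet of size at least $q_0(w,n)$ works, not only prime powers.

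The only conceptual step is spotting the sufficient condition and its decoupling into one injection per coordinate; everything afterward is a routine count, consistent with the statement that such a code is ``rather easy to verify.'' The main obstacle, and the reason the claim is only asymptotic in $q$, is that demanding disagreement at \emph{every} shared coordinate is far stronger than the distance constraint requires (for $\abs{S \cap S'} \leq (w-1)/2$ the inequality $2(w-a) \geq w+1$ already holds regardless of the values). Consequently $q_0(w,n) = \binom{n-1}{w-1}+1$ is surely far from optimal, and a construction tuned to the true per-intersection requirements would lower the threshold; but such sharpening is unnecessary for mere existence.
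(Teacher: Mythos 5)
Your proposal is correct and follows essentially the same route as the paper: one codeword per $w$-subset, with the nonzero symbols in each coordinate chosen to be pairwise distinct (your injections $\phi_i$ are exactly the paper's assignment of $\binom{n-1}{w-1}$ distinct symbols per coordinate), followed by the code-anticode bound against the anticode of all words with a fixed support, giving $q_0(w,n)\leq 1+\binom{n-1}{w-1}$ as in the paper's Corollary~\ref{cor:F4trivial}. Your explicit distance count $d(c_S,c_{S'})=2w-\abs{S\cap S'}\geq w+1$ spells out a step the paper leaves as ``clearly.''
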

\begin{proof}
First, note that since the minimum distance of an $(n,w+1,w)_q$ code $\cC$ is $w+1$, it follows that each subset of $w$ coordinates can
be a support for at most one codeword. If each such subset of $w$ coordinates supports exactly one codeword, then the total number of codewords
in $\cC$ will be $\binom{n}{w}$. Assume further that in $\cC$ for each coordinate all the nonzero elements in the codewords of $\cC$
have distinct symbols. It implies that in each coordinate there are $\binom{n}{w} \frac{w}{n} =\binom{n-1}{w-1}$ nonzero symbols.
Let $q' \triangleq 1 + \binom{n-1}{w-1}$ and let $Q$ be an alphabet with $q=q' +\epsilon$ symbols, where $\epsilon \geq 0$.
Assign now for each coordinate a different nonzero symbols from the $q' + \epsilon -1$ nonzero symbols of $Q^*$ to each codeword
that has a nonzero symbol in this coordinate. Clearly, $\cC$ in an $(n,w+1,w)_q$ code with $\binom{n}{w}$ codewords.

Let $\cA$ be the anticode $\cA^\text{m}(n,w,w)$ over $Q$.
By Lemma~\ref{lem:anticode_sizeMDSCW}, the anticode $\cA$, has diameter~$w$ and $(q-1)^w$ codewords. Clearly,
$$
\abs{\cC} \cdot \abs{\cA} = \binom{n}{w} (q-1)^w = \abs{J_q (n,w)}
$$
and hence by the code-anticode bound, $\cC$ is an $(n,w+1,w)_q$~~$w$-diameter perfect constant-weight code over the alphabet $Q$ of size $q$.
\end{proof}
\begin{corollary}
If there exists an $(n,w+1,w)_q$ code with $\binom{n}{w}$ codewords, then $\cA^\text{m}(n,w,w)$ is a maximum size anticode
of length $n$, weight $w$, and diameter $w$, over and alphabet with $q$ symbols.
\end{corollary}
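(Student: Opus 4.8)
The plan is to run the code-anticode argument exactly as in Lemma~\ref{lem:max_anticodeMDSCW}, with the given $(n,w+1,w)_q$ code and the anticode $\cA^\text{m}(n,w,w)$ playing the parts of $\cC$ and $\cA$ respectively. First I would record the two relevant sizes. The hypothesis supplies a code $\cC$ of length $n$, weight $w$, minimum Hamming distance $w+1$, and $\abs{\cC}=\binom{n}{w}$; since its minimum distance is $w+1$, all distances occurring between its codewords lie in $\{w+1,w+2,\ldots\}$, so $\cC$ qualifies as a code whose distances avoid every value up to $w$, and in particular it may be paired against any anticode of diameter $w$. On the other side, Lemma~\ref{lem:anticode_sizeMDSCW} (taken with $\delta=w$) gives that $\cA^\text{m}(n,w,w)$ has maximum distance $w$ and exactly $(q-1)^w$ codewords.

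Next I would invoke the code-anticode bound. Because Lemma~\ref{lem:gen_anticode_preJQ} established the local inequality lemma for $\text{J}_q(n,w)$, inequality~(\ref{eq:code_anticode_bound}) holds in this space; applied to $\cC$ and an arbitrary anticode $\cA$ of diameter $w$ it reads $\abs{\cC}\cdot\abs{\cA}\le\abs{\text{J}_q(n,w)}$. Using $\abs{\text{J}_q(n,w)}=\binom{n}{w}(q-1)^w$ together with $\abs{\cC}=\binom{n}{w}$, this rearranges to $\abs{\cA}\le(q-1)^w$ for \emph{every} anticode of diameter $w$ in $\text{J}_q(n,w)$. Thus $(q-1)^w$ is an absolute upper bound on the size of any such anticode.

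Finally, since $\cA^\text{m}(n,w,w)$ is itself an anticode of diameter $w$ whose size is exactly $(q-1)^w$, it meets this upper bound with equality and is therefore a maximum size anticode of length $n$, weight $w$, and diameter $w$ over an alphabet with $q$ symbols, as claimed. There is no genuine obstacle here: the one point that must not be overlooked is that the whole argument is licensed by the validity of the code-anticode bound in $\text{J}_q(n,w)$, which is precisely what Lemma~\ref{lem:gen_anticode_preJQ} provides. In effect the corollary is a direct transcription of the equality $\abs{\cC}\cdot\abs{\cA^\text{m}(n,w,w)}=\abs{\text{J}_q(n,w)}$ already exhibited in the proof of Theorem~\ref{thm:alphabet_d=w+1}, now read as an optimality statement about the anticode rather than about the code.
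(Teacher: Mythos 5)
Your argument is correct and is essentially the paper's own reasoning: the corollary is obtained by reading the equality $\abs{\cC}\cdot\abs{\cA^\text{m}(n,w,w)}=\binom{n}{w}(q-1)^w=\abs{\text{J}_q(n,w)}$ from the proof of Theorem~\ref{thm:alphabet_d=w+1} together with the code-anticode bound (licensed by Lemma~\ref{lem:gen_anticode_preJQ}) as an optimality statement about the anticode, exactly as done for Lemma~\ref{lem:max_anticodeMDSCW}. Nothing is missing.
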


The proof of Theorem~\ref{thm:alphabet_d=w+1} implies that an $(n,w+1,w)_q$ $w$-diameter constant-weight perfect code $\cC$
has $\binom{n}{w}$ codewords, where each $w$-subset of $w$ coordinates of $\cC$ is the support of exactly one codeword of $\cC$.
In view of Theorem~\ref{thm:alphabet_d=w+1} our goal now is to find $q_0(w,n)$ which is the smallest size alphabet $q$ for such an
$(n,w+1,w)_q$ code exists.

\begin{corollary}
\label{cor:F4trivial}
For each alphabet $Q$ of size $q$, where $q \geq 1 + \binom{n-1}{w-1}$, there exists an $(n,w+1,w)_q$ code,
i.e., $q_0 (w,n) \leq 1 + \binom{n-1}{w-1}$.
\end{corollary}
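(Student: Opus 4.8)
The plan is to observe that this corollary is exactly the quantitative content of the construction already carried out in the proof of Theorem~\ref{thm:alphabet_d=w+1}, and to make the alphabet bound explicit. First I would fix, for each of the $\binom{n}{w}$ subsets of $w$ coordinates, a single codeword supported on that subset, so that every $w$-subset is the support of exactly one codeword and $\abs{\cC}=\binom{n}{w}$. Since two codewords sharing the same support agree outside that support and can differ in at most $w<w+1$ positions, distinct supports are forced by the distance requirement, and this is the only way to attain $\binom{n}{w}$ codewords at minimum distance $w+1$.

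Next I would count, by double counting the support incidences, the number of codewords whose support contains a fixed coordinate $i$: this equals $\binom{n}{w}\cdot\frac{w}{n}=\binom{n-1}{w-1}$. The key assignment is then to give, in each coordinate separately, pairwise distinct nonzero symbols to the codewords that are nonzero there. This uses $\binom{n-1}{w-1}$ distinct nonzero symbols in each coordinate, hence it can be realized over any alphabet that has $q-1\geq\binom{n-1}{w-1}$ nonzero symbols, that is, whenever $q\geq 1+\binom{n-1}{w-1}$.

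Finally I would verify that the resulting code has minimum Hamming distance $w+1$. Given two codewords with distinct supports $S_1,S_2$ and $k=\abs{S_1\cap S_2}\leq w-1$, they automatically differ on the $2(w-k)$ coordinates of the symmetric difference $S_1\triangle S_2$, where exactly one of them is nonzero; and by the distinctness of symbols within each coordinate they also differ on all $k$ coordinates of $S_1\cap S_2$. Hence their distance is $2(w-k)+k=2w-k\geq w+1$, with equality precisely when $\abs{S_1\cap S_2}=w-1$. This confirms that $\cC$ is an $(n,w+1,w)_q$ code for every $q\geq 1+\binom{n-1}{w-1}$, which is exactly the bound $q_0(w,n)\leq 1+\binom{n-1}{w-1}$. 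The only point requiring care, and the crux of the whole argument, is the per-coordinate distinct-symbol assignment, since it is what guarantees disagreement on the shared support and therefore the claimed minimum distance; the remaining steps are routine counting.
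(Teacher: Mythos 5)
Your proposal is correct and follows essentially the same route as the paper, which obtains this corollary directly from the construction in the proof of Theorem~\ref{thm:alphabet_d=w+1}: one codeword per $w$-subset, with $\binom{n-1}{w-1}$ pairwise distinct nonzero symbols assigned independently in each coordinate. Your explicit distance computation $2(w-k)+k=2w-k\geq w+1$ fills in the step the paper dismisses as ``clearly,'' but the underlying argument is identical.
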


\begin{lemma}
For each $w \geq 1$ there exists an $(w+1,w+1,w)_{w+1}$ code which is a $w$-diameter perfect code.
\end{lemma}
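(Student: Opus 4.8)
The plan is to exhibit an explicit code with the required parameters and then read off the diameter-perfect property from the code-anticode bound. First I would take the alphabet to be $\Z_{w+1}$ and set $\cC = \{c_0, c_1, \ldots, c_w\}$, where the codeword $c_r = (c_r^{(0)}, \ldots, c_r^{(w)})$ has $j$-th coordinate $c_r^{(j)} \equiv j - r \pmod{w+1}$ for $0 \leq j \leq w$; equivalently, the $c_r$ are the cyclic shifts of $(0,1,2,\ldots,w)$, i.e.\ the rows of a circulant Latin square of order $w+1$ based on $\Z_{w+1}$. This is exactly the boundary case $q = 1 + \binom{n-1}{w-1} = 1 + \binom{w}{w-1} = w+1$ of Corollary~\ref{cor:F4trivial}, in which each coordinate is forced to use every one of its $w$ nonzero symbols exactly once.

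Next I would verify the parameters. The entry $j - r \pmod{w+1}$ vanishes precisely when $j = r$, so each $c_r$ has a single zero coordinate and hence weight $w$; the length is $w+1$ and the alphabet has $w+1$ symbols. For the distance, note that for $r \neq s$ the coordinatewise difference $c_r^{(j)} - c_s^{(j)} \equiv s - r \pmod{w+1}$ is nonzero for every $j$, since $0 < \abs{s-r} < w+1$, so $c_r$ and $c_s$ disagree in all $w+1$ coordinates. Thus any two distinct codewords lie at the maximal possible distance $w+1$, giving minimum Hamming distance exactly $w+1$ and showing that $\cC$ is an $(w+1,w+1,w)_{w+1}$ code. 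The supports of the $c_r$ run over all $w+1$ distinct $w$-subsets of the coordinates, so $\abs{\cC} = w+1 = \binom{w+1}{w}$.

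Finally, to get the diameter-perfect conclusion I would pair $\cC$ with the anticode $\cA^\text{m}(w+1,w,w)$, which by Lemma~\ref{lem:anticode_sizeMDSCW} has diameter $w$ and $(q-1)^w = w^w$ words. Then
$$
\abs{\cC} \cdot \abs{\cA^\text{m}(w+1,w,w)} = (w+1)\, w^w = \binom{w+1}{w}(q-1)^w = \abs{\text{J}_{w+1}(w+1,w)},
$$
so the code-anticode bound~(\ref{eq:code_anticode_bound}) is met with equality and $\cC$ is a $w$-diameter perfect code; alternatively this follows directly from Theorem~\ref{thm:alphabet_d=w+1} together with Corollary~\ref{cor:F4trivial}. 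There is no serious obstacle here: the only point requiring care is confirming that the minimum distance attains its maximal value $w+1$, and the circulant structure makes this transparent, since the difference of any two rows is a nonzero constant across all columns.
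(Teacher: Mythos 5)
Your proof is correct and rests on the same underlying idea as the paper's: the paper's one-line argument observes that with one codeword per $w$-subset of the $w+1$ coordinates, each coordinate carries exactly $w$ nonzero entries, so the generic symbol assignment of Theorem~\ref{thm:alphabet_d=w+1} already works at the boundary $q = 1+\binom{w}{w-1} = w+1$ of Corollary~\ref{cor:F4trivial}. Your circulant Latin-square construction is simply an explicit instantiation of that assignment, with the small bonus that it exhibits the code as equidistant (all pairwise distances equal to $w+1$) and verifies the distance by a direct difference computation rather than by appeal to the general theorem.
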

\begin{proof}
Follows immediately from the fact that if there is a codeword on each subset of $w$ coordinates, then there are exactly $w$
codewords with nonzero symbols on each coordinates.
\end{proof}
\begin{corollary}
If $w \geq 1$, then $q_0 (w,w+1) = w+1$.
\end{corollary}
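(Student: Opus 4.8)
The plan is to prove the two inequalities $q_0(w,w+1)\le w+1$ and $q_0(w,w+1)\ge w+1$ and combine them. The upper bound is immediate: the preceding lemma exhibits an $(w+1,w+1,w)_{w+1}$ $w$-diameter perfect code, so an alphabet of size $w+1$ suffices and $q_0(w,w+1)\le w+1$. All of the content therefore lies in the matching lower bound, which I would phrase as the claim that no $(w+1,w+1,w)_q$ $w$-diameter perfect code exists when $q\le w$.

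To prove the lower bound I would start from a hypothetical $(w+1,w+1,w)_q$ $w$-diameter perfect code $\cC$. As in the proof of Theorem~\ref{thm:alphabet_d=w+1}, equality in the code-anticode bound against the anticode $\cA^\text{m}(n,w,w)$ (whose size $(q-1)^w$ is given by Lemma~\ref{lem:anticode_sizeMDSCW}) forces $\abs{\cC}=\binom{w+1}{w}=w+1$; together with the fact that $d=w+1$ permits at most one codeword per $w$-subset, this means each of the $w+1$ many $w$-subsets of the $w+1$ coordinates supports exactly one codeword. Since the $w$-subsets of $\{1,\dots,w+1\}$ are precisely the complements of the singletons, I would index the codewords as $c^{(1)},\dots,c^{(w+1)}$, where $c^{(i)}$ is the unique codeword whose single zero coordinate is the $i$-th one.

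The crux is that here $d=w+1$ coincides with the length $n=w+1$, so any two distinct codewords must disagree in every coordinate. Fixing a coordinate $k$, the codewords that are nonzero there are exactly the $w$ codewords $c^{(i)}$ with $i\ne k$, and pairwise disagreement at coordinate $k$ forces their $w$ values to be distinct nonzero symbols of $\Z_q$. Hence $q-1\ge w$, i.e.\ $q\ge w+1$, which is the desired lower bound, and combining the two bounds gives $q_0(w,w+1)=w+1$. The only step needing care is the justification that the $w$-diameter perfect hypothesis yields exactly $\binom{w+1}{w}$ codewords with distinct supports; this follows from Lemma~\ref{lem:anticode_sizeMDSCW} and the elementary distance argument, after which the coordinate-counting step is immediate.
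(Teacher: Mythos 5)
Your proof is correct and follows the same route the paper intends: the upper bound is the preceding lemma's existence statement, and the lower bound is exactly the counting fact quoted in that lemma's proof (each coordinate carries $w$ codewords with nonzero entries there) combined with the observation that $d=w+1=n$ forces any two codewords to disagree in every coordinate, hence $q-1\geq w$. You merely spell out the lower-bound half that the paper leaves implicit.
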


Theorem~\ref{thm:alphabet_d=w+1} implies the existence of an $(n,w+1,w)_q$ code for each $q \geq q_0 (w,n)$, but the
upper bound $1 + \binom{n-1}{w-1}$ on $q_0 (w,n)$, inferred in Corollary~\ref{cor:F4trivial}, is quite large. Can we find a better upper
bound on $q_0 (w,n)$? The answer is definitely positive and for this purpose we have the following results.

\begin{lemma}
If $n > w+1$, then $q_0 (w,n) \geq q_0 (w,n-1)$.
\end{lemma}
\begin{proof}
Assume that $\cC$ is an $(n,w+1,w)_q$ $w$-diameter constant-weight perfect code and let~$S$ be any subset of $n-1$ coordinates.
By definition, the set codewords whose supports are subsets of $S$ form an $(n-1,w+1,w)_q$ $w$-diameter constant-weight perfect code.
Thus, the claim of the lemma follows.
\end{proof}
\begin{corollary}
If $n > w+1$ then $q_0 (w,n) \geq w+1$.
\end{corollary}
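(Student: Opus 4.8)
The plan is to combine the two immediately preceding results: the Lemma asserting that $q_0(w,n) \geq q_0(w,n-1)$ whenever $n > w+1$, and the Corollary evaluating the base case $q_0(w,w+1) = w+1$. The first of these says that, for fixed $w$, the quantity $q_0(w,n)$ is non-decreasing in $n$ once the length exceeds $w+1$, while the second pins down its value at the smallest admissible length $n = w+1$. Chaining these together yields the claimed lower bound with essentially no additional work.

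Concretely, I would fix $w \geq 1$ and telescope the monotonicity inequality down to the base case. For any $n > w+1$ we have
$$
q_0(w,n) \geq q_0(w,n-1) \geq \cdots \geq q_0(w,w+1) = w+1 ,
$$
where each inequality is an instance of the preceding Lemma, applied in turn to the lengths $n, n-1, \ldots, w+2$ (each of which still strictly exceeds $w+1$, so the Lemma's hypothesis is met at every step), and the final equality is the preceding Corollary. If one prefers an equivalent phrasing, the same conclusion follows by a one-line induction on $n$ with base case $n = w+2$, where $q_0(w,w+2) \geq q_0(w,w+1) = w+1$, and inductive step $q_0(w,n) \geq q_0(w,n-1) \geq w+1$.

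I do not anticipate any genuine obstacle, since the statement is a direct consequence of the two results just established. The only point meriting a moment's care is to confirm that the hypothesis $n > w+1$ of the monotonicity Lemma remains valid at every link of the chain; this holds because the Lemma is only ever invoked at lengths strictly greater than $w+1$, the chain terminating exactly at the value $q_0(w,w+1)$ furnished by the Corollary.
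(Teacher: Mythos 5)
Your proposal is correct and is precisely the intended argument: the paper states this corollary without proof as an immediate consequence of the monotonicity lemma $q_0(w,n) \geq q_0(w,n-1)$ and the base-case evaluation $q_0(w,w+1)=w+1$, which is exactly the telescoping chain you write down. Your check that the lemma's hypothesis $n > w+1$ holds at every link of the chain is the only point of care, and you handle it correctly.
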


\begin{lemma}
\label{lem:tr_bound}
If $n > w+1$, then $q_0 (w,n) \geq n-w+2$.
\end{lemma}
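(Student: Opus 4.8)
The plan is to derive the bound by a counting argument on codewords whose supports share a common $(w-1)$-subset, converting the minimum-distance requirement into a constraint on the number of available nonzero symbols. Suppose $\cC$ is an $(n,w+1,w)_q$ $w$-diameter perfect constant-weight code. By the discussion accompanying Theorem~\ref{thm:alphabet_d=w+1}, $\cC$ has exactly $\binom{n}{w}$ codewords, one supported on each $w$-subset of the $n$ coordinates, and its minimum Hamming distance is $w+1$. First I would fix an arbitrary $(w-1)$-subset $T$ of the coordinates (which is nonempty when $w \geq 2$); for each coordinate $i \notin T$ the set $T \cup \{i\}$ is a $w$-subset, so there is a unique codeword $c_i \in \cC$ with $\supp{c_i} = T \cup \{i\}$. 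Since $n > w+1$, this yields $n-w+1 \geq 3$ distinct codewords $\{c_i : i \notin T\}$.

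The key step is to compute the Hamming distance between two such codewords $c_i$ and $c_j$ with $i \neq j$. As $\supp{c_i} = T \cup \{i\}$ and $\supp{c_j} = T \cup \{j\}$, these words disagree on coordinate $i$ (where $c_i$ is nonzero while $c_j$ is zero) and on coordinate $j$ (symmetrically), are both zero outside $T \cup \{i,j\}$, and are both nonzero on each of the $w-1$ coordinates of $T$. Writing $(c_i)_t$ for the $t$-th coordinate of $c_i$, this gives
\[
d(c_i,c_j) = 2 + \abs{\{ t \in T : (c_i)_t \neq (c_j)_t \}} \leq 2 + (w-1) = w+1.
\]
Because the minimum distance of $\cC$ equals $w+1$, the inequality is forced to be an equality, so $(c_i)_t \neq (c_j)_t$ for every $t \in T$.

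Finally I would fix a single coordinate $t \in T$ and read off the consequence: the assignment $i \mapsto (c_i)_t$ sends the $n-w+1$ coordinates $i \notin T$ injectively into the set $Q^*$ of nonzero symbols, which has cardinality $q-1$. Hence $n-w+1 \leq q-1$, that is, $q \geq n-w+2$. Since this holds for every $(n,w+1,w)_q$ $w$-diameter perfect code, it holds in particular for the smallest admissible alphabet size, giving $q_0(w,n) \geq n-w+2$. The main obstacle is the distance computation of the second paragraph: the genuine content is the observation that the minimum-distance bound is attained \emph{with equality} for any two codewords sharing a $(w-1)$-subset support, forcing them to disagree on all common coordinates. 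This is a transversal-type condition (each fixed position of $T$ must carry pairwise distinct nonzero symbols across the $c_i$), and once it is established the remaining symbol count is routine.
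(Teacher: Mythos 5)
Your proof is correct and follows essentially the same route as the paper: the paper also restricts to the $n-w+1$ codewords whose supports contain a fixed $(w-1)$-subset (it takes the first $w-1$ coordinates), uses the minimum distance $w+1$ to force any two of them to disagree on every common nonzero coordinate, and concludes $q-1 \geq n-w+1$. Your version merely makes the forced-equality step in the distance computation more explicit.
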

\begin{proof}
Let $\cC$ be an $(n,w+1,w)_q$ $w$-diameter perfect code $\cC$.
Consider the sub-code $\cC'$ of codewords from $\cC$ for which there is no \emph{zero} in the first $w-1$ coordinates.
Since, each one of the other $n-w+1$ coordinates must have a nonzero symbol with exactly one of these codewords, it follows that
the sub-code $\cC'$ contains $n-w+1$ codewords. Since the distance of $\cC'$ is $w+1$, each pair of codewords of $\cC'$ have only
two distinct coordinates in their supports, and each pair of codewords of $\cC'$ have $w-1$ joint coordinates with nonzero symbols,
it follows that in each given coordinate of the first $w-1$ coordinates the codewords of $\cC'$ have
distinct nonzero symbols. Since $\abs{\cC'} = n-w+1$, it follows that $\cC$ has at least $n-w+1$ nonzero symbols and hence $q \geq n-w+2$.
\end{proof}
\begin{corollary}
\label{cor:w=2F4}
If $n > 2$, then $q_0 (2,n)=n$.
\end{corollary}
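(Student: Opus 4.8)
The plan is to prove the two inequalities $q_0(2,n)\ge n$ and $q_0(2,n)\le n$ separately for every $n>2$, and the lower bound I would simply read off from the results already in hand. When $n=3$ we have $n=w+1=3$, which is exactly the case settled by the equality $q_0(w,w+1)=w+1$ established just above, giving $q_0(2,3)=3$. When $n\ge 4$ we have $n>w+1$ with $w=2$, so Lemma~\ref{lem:tr_bound} applies and yields $q_0(2,n)\ge n-w+2=n$. Thus $q_0(2,n)\ge n$ in all cases, and the content of the corollary is the matching upper bound.

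For the upper bound I would exhibit an $(n,3,2)_n$ code having exactly one codeword on each $2$-subset support, hence $\binom{n}{2}$ codewords (the count required of a $w$-diameter perfect code, by Theorem~\ref{thm:alphabet_d=w+1} and its corollary). The first step is to reduce the minimum-distance requirement to a local one. Let $c,c'$ be two distinct weight-$2$ codewords. If their supports are disjoint they differ in all four nonzero coordinates, so their distance is $4\ge 3$ automatically; and since there is one codeword per support, equal supports do not occur. The only binding case is when the supports meet in a single coordinate $i$: writing them as $\{i,j\}$ and $\{i,k\}$ with $j\ne k$, the words differ at $j$ and at $k$, and they differ at $i$ precisely when their symbols there are unequal, so the distance is $3$ exactly when those symbols differ and is $2$ otherwise. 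Hence minimum distance $w+1=3$ is equivalent to the local condition that, for every coordinate $i$, the codewords whose support contains $i$ carry pairwise distinct symbols in coordinate $i$.

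The key observation is that this local condition decouples across coordinates. For a fixed $i$ there are exactly $n-1$ codewords through it, one for each partner $j\ne i$, and with $q=n$ the set $Q^*=\{1,\dots,n-1\}$ of nonzero symbols has exactly $n-1$ elements. Because the symbol that codeword $\{i,j\}$ places in coordinate $i$ is unconstrained by the symbol it places in coordinate $j$, I may choose, independently for each coordinate $i$, any bijection from the $n-1$ codewords through $i$ onto $\{1,\dots,n-1\}$ and let it prescribe those symbols; for instance, assign to the support $\{i,j\}$ the symbol $j$ in coordinate $i$ when $j<i$ and the symbol $j-1$ when $j>i$, which is a bijection onto $\{1,\dots,n-1\}$. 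Every coordinate then carries distinct symbols across its incident codewords, so the local condition holds, the code has minimum distance $3$, and it realises $\binom{n}{2}$ codewords over an alphabet of size $n$. This gives $q_0(2,n)\le n$, and together with the lower bound it proves $q_0(2,n)=n$.

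I expect the only genuine step to be the distance reduction in the second paragraph; once the requirement is seen to separate coordinate by coordinate, the construction is immediate, since the two symbols of a single weight-$2$ codeword never interact and therefore no global compatibility obstruction (of Latin-square type) can arise to force the alphabet above $n$.
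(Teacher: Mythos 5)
Your proof is correct and follows the paper's route: the lower bound comes from Lemma~\ref{lem:tr_bound} (with the $n=3$ case, which you handle more carefully than the paper, covered by $q_0(w,w+1)=w+1$), and your explicit construction for the upper bound is exactly the ``distinct nonzero symbols per coordinate'' argument behind Theorem~\ref{thm:alphabet_d=w+1}, whose Corollary~\ref{cor:F4trivial} already gives $q_0(2,n)\leq 1+\binom{n-1}{1}=n$ directly. The paper simply cites that corollary rather than re-deriving the construction, but the substance is identical.
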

\begin{proof}
By Lemma~\ref{lem:tr_bound} we have that $q_0 (2,n) \geq n$ and by Corollary~\ref{cor:F4trivial} we have that ${q_0 (2,n) \leq n}$.

Thus, $q_0 (2,n)=n$.
\end{proof}

The proof of the next theorem requires two more concepts, a one-factorization and a near-one-factorization.
A \emph{one-factorization} of the complete graph $K_n$, $n$ even, is a partition of the
edges of $K_n$ into perfect matchings. In other words, the set
$$
\cF = \{ \cF_1, \cF_2 , \ldots , \cF_{n-1} \}
$$
is a one-factorization of $K_n$ if each $\cF_i$, $1 \leq i \leq n-1$, is a perfect matching
(called a \emph{one-factor}), and the $\cF_i$'s are pairwise disjoint.

If $n$ is odd, then there is no perfect matching in $K_n$ and we define a \emph{near-one-factorization}
$$
\cF = \{ \cF_1, \cF_2 , \ldots , \cF_n \}
$$
to be a partition of the edges in $K_n$ into sets of $\frac{n-1}{2}$ pairwise disjoint edges,
where each $\cF_i$ has one isolated vertex. Each $\cF_i$ is called a \emph{near-one-factor}.

\begin{theorem}
\label{thm:q0_3n}
If $n$ is odd, then $q_0 (3,n)=n-1$, and if $n$ is even, then $q_0 (3,n)=n$.
\end{theorem}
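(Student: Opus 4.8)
The plan is to recast an $(n,4,3)_q$ $3$-diameter perfect code as a family of proper edge-colorings of complete graphs, one attached to each coordinate, and then to read off $q_0(3,n)$ from the chromatic index of $K_{n-1}$. Recall from the proof of Theorem~\ref{thm:alphabet_d=w+1} that such a code $\cC$ carries exactly one codeword on each $3$-subset of coordinates, so its codewords are indexed by the triangles $\{i,j,k\}$ of the complete graph $K_n$ on the coordinate set $[n]$. First I would analyze the minimum-distance requirement $d=4$: two supports sharing $0$ or $1$ coordinate already have Hamming distance at least $4$ automatically, and two supports sharing all three coordinates coincide. Thus the only genuine constraint is that two codewords whose supports share exactly two coordinates $\{i,j\}$ must differ in \emph{both} of those coordinates, since the two private coordinates contribute only $2$ to the distance.

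Next I would reformulate this constraint coordinatewise. For a fixed coordinate $i$, let $f_i(\{j,k\})$ denote the value that the unique codeword supported on $\{i,j,k\}$ carries in position $i$; this is a map from the edges of $K_{[n]\setminus\{i\}}\cong K_{n-1}$ into $Q^*$. The distance condition above says precisely that whenever two edges $\{j,k\}$ and $\{j,l\}$ share the vertex $j$, the codewords on $\{i,j,k\}$ and $\{i,j,l\}$ must differ in coordinate $i$; equivalently, $f_i$ must be a \emph{proper} edge-coloring of $K_{n-1}$. Crucially, the requirement ``differ in both shared coordinates'' splits into an independent condition on $f_i$ and on $f_j$, so the colorings $f_1,\dots,f_n$ attached to distinct coordinates may be chosen entirely independently of one another, the only coupling being that they all draw their colors from the common symbol set $Q^*$ of size $q-1$.

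For the lower bound, since each $f_i$ is a proper edge-coloring of $K_{n-1}$ we must have $q-1 \geq \chi'(K_{n-1})$. When $n$ is odd, $K_{n-1}$ has an even number of vertices, $\chi'(K_{n-1})=n-2$, and hence $q\geq n-1$ (this also follows from Lemma~\ref{lem:tr_bound}). When $n$ is even, $K_{n-1}$ has an odd number $m=n-1$ of vertices, so every color class is a matching missing at least one vertex and therefore has at most $\tfrac{m-1}{2}$ edges; since there are $\binom{m}{2}$ edges in total, at least $\binom{m}{2}/\tfrac{m-1}{2}=m=n-1$ colors are needed, giving $q\geq n$. This parity gap in $\chi'(K_{n-1})$ is exactly what produces the two cases in the statement, and the even case is the main point, the odd lower bound being already available.

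For the matching upper bound I would construct $\cC$ explicitly. Take, independently for each coordinate $i$, a one-factorization of $K_{n-1}$ when $n$ is odd (so $n-1$ is even, using $n-2$ colors) or a near-one-factorization when $n$ is even (so $n-1$ is odd, using $n-1$ colors); both exist classically. Color the edge $\{j,k\}$ of the link of $i$ by the index of the (near-)one-factor containing it, and take this value as $f_i(\{j,k\})$, the symbol placed in position $i$ of the codeword on $\{i,j,k\}$. Each $f_i$ is then proper, so by the coordinatewise reformulation $\cC$ has minimum distance $4$ with exactly one codeword per $3$-subset, i.e.\ $\binom{n}{3}$ codewords over an alphabet of size $q=n-1$ (for odd $n$) or $q=n$ (for even $n$); by the size computation in Theorem~\ref{thm:alphabet_d=w+1} it is $3$-diameter perfect. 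This meets the lower bound, yielding $q_0(3,n)=n-1$ for odd $n$ and $q_0(3,n)=n$ for even $n$. The delicate step to get right is the verification that the ``differ in both shared coordinates'' condition truly decouples across coordinates, since it is precisely this decoupling that legitimizes choosing the $f_i$ independently and reduces the whole problem to the chromatic index of a single $K_{n-1}$.
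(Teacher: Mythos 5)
Your proposal is correct and follows essentially the same route as the paper: the paper's lower bound for even $n$ is exactly your matching-size count for color classes in $K_{n-1}$ (and Lemma~\ref{lem:tr_bound} handles odd $n$), while its construction assigns to coordinate $i$ the index of the one-factor (resp.\ near-one-factor) of $K_{n-1}$ containing $\{j,k\}$, which is precisely your proper edge-coloring $f_i$. Your explicit observation that the ``differ in both shared coordinates'' condition decouples into independent properness conditions on the $f_i$ is a clean restatement of the paper's distance verification rather than a different argument.
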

\begin{proof}
By Lemma~\ref{lem:tr_bound} we have that $q_0 (3,n) \geq n-1$ and this bound is applied when $n$ is odd.

Assume now that $n$ is even and let $\cC$ be a related code. Let $\cC_1$ be the set of codewords in $\cC$ with a nonzero symbol in the
first coordinate. By the definition of this family of codes, it follows that $\abs{\cC_1}=\binom{n-1}{2} = \frac{(n-1)(n-2)}{2}$.
Since the minimum distance of $\cC_1$ is four and $n$ is even, it follows that the number of codewords in $\cC_1$ with a given nonzero
symbol $\sigma$ in the first coordinate is at most $\frac{n-2}{2}$. Since $\abs{\cC_1} = \frac{(n-1)(n-2)}{2}$,
it follows that there are at least $n-1$ nonzero symbols in the first coordinate.
Therefore, $q_0 (3,n) \geq n$ if $n$ is even.

Regarding the upper bound on $q_0(3,n)$ we distinguish again between two cases, depending on whether $n$ is odd or $n$ is even.

\noindent{\bf Case 1.}
$n$ is odd.

Let $\cN$ be the set of $n$ coordinates and let $Q \triangleq \{ 0, \sigma_1,\ldots,\sigma_{n-2} \}$ be an alphabet of size~$n-1$.
Let $\cC$ be a code of length $n$ and weight 3 with $\binom{n}{3}$ codewords, a codeword for each support of size 3.
Consider the $i$-th coordinate, $i \in \cN$ and
let $\cF = \{ \cF_1 , \cF_2 , \ldots, \cF_{n-2} \}$ be a one-factorization on the $n-1$ points of $\cN \setminus \{i\}$.
Given a triple $\{ i, j,k \}$, where $\{ j,k \} \in \cF_r$, we assign $\sigma_r$ to the symbol in coordinate $i$ of the
codeword $\{ i, j,k \}$, where the nonzero symbols are in coordinates $i,j,k$. It is readily verified that we have constructed
a code of length $n$ and weight 3, over an alphabet $Q$ of size $n-1$. Clearly, if two codewords share at most one coordinate,
then their Hamming distance is at least 4. Now, assume that two codewords $c_1$ and $c_2$ share nonzero symbols in two coordinates $i$ and $j$.
If the symbols in the $i$-th coordinate of $c_1$ and $c_2$ are distinct and the symbols in the $j$-th coordinate of $c_1$ and $c_2$ are distinct,
then clearly $d(c_1,c_2)=4$. Now, assume for the contrary that in one coordinate, say~$i$, $c_1$ and~$c_2$ have the same symbol.
By the construction, we have that the two other pairs of nonzero coordinates in $c_1$ and~$c_2$ must be disjoint (they belong to the
same one-factor), a contradiction. Therefore, the minimum distance of $\cC$ is $4$ and hence $q_0 (3,n) \leq n-1$.

\noindent{\bf Case 2.}
$n$ is even.

Let $\cN$ be the set of $n$ coordinates and let $Q \triangleq \{ 0, \sigma_1,\ldots,\sigma_{n-1} \}$ be an alphabet of size~$n$.
Let $\cC$ be a code of length $n$ and weight 3 with $\binom{n}{3}$ codewords, a codeword for each support of size 3.
Consider the $i$-th coordinate, $i \in \cN$ and
let $\cF = \{ \cF_1 , \cF_2 , \ldots, \cF_{n-1} \}$ be a near-one-factorization on the $n-1$ points of $\cN \setminus \{i\}$.
Given a triple $\{ i, j,k \}$, where $\{ j,k \} \in \cF_r$, we assign $\sigma_r$ to the symbol in coordinate $i$ of the
codeword $\{ i, j,k \}$. It is readily verified that we have constructed
a code of length $n$ and weight 3, over an alphabet $Q$ of size $n$. As in Case 1 the minimum distance of $\cC$ is 4 and
therefore, $q_0 (3,n) \leq n$.

Thus, these two cases complete the proof of the theorem.
\end{proof}

Other codes with the same parameters as the ones constructed in Theorem~\ref{thm:q0_3n} were also presented in~\cite{CDLL08,ChLi07},
by using different techniques from combinatorial designs.
Similarly to the technique used in the proof of Theorem~\ref{thm:q0_3n} one can construct ${(n,w+1,w)_q}$ $w$-diameter perfect codes,
for relatively small $q$, when $w$ is small using techniques coming from combinatorial designs. The same is true for
$(n,w+1,w)_q$ $w$-diameter perfect codes, when $n$ is not much larger than $w$. Such constructions are left for future research.
Moreover, the technique used in Theorem~\ref{thm:q0_3n} to obtain the upper bound on $q_0 (3,n)$ can be used to obtain
better upper bounds on $q_0 (w+1,n+1)$ than the trivial one, i.e., $q_0 (w+1,n+1) \leq 1 + \binom{n}{w}$. The idea is to
partition the set of all binary words of length $n$ and weight $w$ into pairwise disjoint constant-weight codes of length $n$,
weight $w$, and minimum Hamming distance $w+2$. Let~$\chi (n,w)$ be the minimum number of codes in such a partition.
With an identical proof as the one in Theorem~\ref{thm:q0_3n} we can prove that $q_0 (w+1,n+1) \leq \chi (n,w) +1$.
Note, that the minimum distance of a constant-weight code is always even and hence the proof will be more effective for even $w$,
i.e., for bounds on $q_0(w+1,n+1)$ when $w+1$ is odd.
This kind of a partition problem was discussed in~\cite{BSSS}. As a simple example we can use prove the following bound implied
by a result proved in~\cite{GrSl80}.
\begin{theorem}
If $w$ is even and $p$ is the smallest prime power for which $p \geq n$, then
$$
q_0 (w+1,n+1) \leq 1 + p^{w/2}~.
$$
\end{theorem}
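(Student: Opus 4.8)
The plan is to use the partition bound $q_0(w+1,n+1)\le\chi(n,w)+1$ recorded just above, so that it suffices to partition the $w$-subsets of an $n$-set into at most $p^{w/2}$ constant-weight codes of length $n$, weight $w$, and minimum Hamming distance $w+2$; this gives $\chi(n,w)\le p^{w/2}$ and hence the claim. Write $m=w/2$, an integer since $w$ is even. Because $p\ge n$, the field $\F_p$ has at least $n$ elements, so I would fix an identification of the $n$ coordinates with $n$ distinct elements of $\F_p$, and then color a $w$-subset $S\subseteq\F_p$ by the vector of its first $m$ power sums, $c(S)=(P_1(S),P_2(S),\dots,P_m(S))$, where $P_k(S)=\sum_{x\in S}x^k\in\F_p$. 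There are at most $p^m=p^{w/2}$ colors, so everything reduces to showing that each color class has minimum Hamming distance at least $w+2$; equivalently, that if $S\ne T$ have the same color then $\abs{S\cap T}\le m-1$, since then $d(S,T)=2(w-\abs{S\cap T})\ge 2(w-m+1)=w+2$.

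This equivalent statement is the heart of the matter, and I would prove it by contradiction via Newton's identities. Suppose $S\ne T$ satisfy $c(S)=c(T)$ and $\abs{S\cap T}\ge m$. Set $A=S\setminus T$ and $B=T\setminus S$; these are disjoint, nonempty, and of common size $\ell=w-\abs{S\cap T}\le m$. Cancelling the contribution of $S\cap T$ from each coordinate of $c(S)=c(T)$ yields $P_k(A)=P_k(B)$ for $1\le k\le m$, and in particular for $1\le k\le\ell$. By Newton's identities the power sums $P_1,\dots,P_\ell$ determine the elementary symmetric functions $e_1,\dots,e_\ell$ after dividing successively by $1,2,\dots,\ell$; the step that most deserves care is verifying that these divisions are legitimate in $\F_p$, which holds precisely because $\ell\le m=w/2<w\le n\le p$ forces $1,\dots,\ell$ to be units modulo $p$. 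Hence $A$ and $B$ are the root sets of one and the same monic polynomial $\prod_{a\in A}(X-a)=\prod_{b\in B}(X-b)$ over the field $\F_p$, which forces $A=B$ and contradicts $A\cap B=\emptyset$ together with $A\ne\emptyset$.

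Assembling the pieces, the power-sum coloring partitions the $w$-subsets into at most $p^{w/2}$ codes, each of minimum Hamming distance $w+2$ (empty color classes only help), so $\chi(n,w)\le p^{w/2}$ and therefore $q_0(w+1,n+1)\le 1+p^{w/2}$. I expect the genuine obstacle to lie entirely in the characteristic check above, namely ensuring that no integer which Newton's identities must invert is divisible by $p$; the counting of colors and the distance computation are routine once that point is secured.
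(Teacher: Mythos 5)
Your overall strategy is exactly the intended one: invoke the bound $q_0(w+1,n+1)\leq \chi(n,w)+1$ and then exhibit a partition of the weight-$w$ words of length $n$ into at most $p^{w/2}$ classes of minimum distance $w+2$ (the paper gives no proof, deferring to the Graham--Sloane construction, which is precisely such a partition). The distance computation, the color count, and the reduction to ``same color and $\abs{S\cap T}\geq m$ forces $S=T$'' are all correct. However, there is a genuine gap in the step you yourself flagged as the heart of the matter: you argue that $1,\dots,\ell$ are units ``modulo $p$'' because $\ell< p$, but $p$ is only assumed to be a \emph{prime power}. The field $\F_p$ with $p=r^k$, $k\geq 2$, has characteristic $r$, not $p$, so the integer $r\leq \ell$ can very well be $0$ in $\F_p$ even though $\ell<p$; Newton's identities then fail to recover $e_r,\dots,e_\ell$ from the power sums. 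Worse, the power-sum coloring itself degenerates: by the Frobenius endomorphism, $P_{rk}(S)=P_k(S)^r$ identically in characteristic $r$, so some of your $m$ coordinates carry no new information. Concretely, take $n=8$, $w=4$, so $p=8$ and $m=2$: since $P_2=P_1^2$ on $\F_8$, two $4$-sets $S=A\cup C$, $T=B\cup C$ with $A=\{0,1\}$, $B=\{\alpha,\alpha+1\}$ disjoint and $C$ disjoint from both satisfy $c(S)=c(T)$ yet $d(S,T)=4<w+2=6$. The case of a proper prime power is not vacuous (e.g.\ $n=8,9,114,\dots$), so the theorem as stated is not covered by your argument.

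The repair is to replace power sums by the elementary symmetric functions themselves, i.e.\ color $S$ by the top $m$ nontrivial coefficients of $\prod_{j\in S}(X-\alpha_j)$. Then if $S\neq T$ share a color, dividing both polynomials by $\prod_{j\in S\cap T}(X-\alpha_j)$ and comparing coefficients shows that $\prod_{a\in A}(X-a)-\prod_{b\in B}(X-b)$ is a polynomial of degree at most $\ell-1$ that vanishes in degrees $\geq \ell-m$, and for $\ell\leq m$ unique factorization over the field forces $A=B$ --- an argument valid in every characteristic, with no division by integers. With that substitution your proof is complete and coincides with the construction of Graham and Sloane that the paper cites.
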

We leave further improvements in this direction for future research.

\subsection{Multiple orthogonal arrays constant weight codes}
\label{sec:MOA-CW}

In the last family of diameter perfect constant-weight codes we have similarly as in families [{\bf F4}] and [{\bf F5}] that each projection
of any $w$ coordinates supports some specified number of codeword codewords. The distinction from families [{\bf F4}] and [{\bf F5}] is that
the minimum distance of the code in this family, [{\bf F6}] , is strictly smaller than the weight of the codewords.
More precisely we have the following definition.

\begin{definition}
An $(n,d,w)_q$ {\bf \emph{multiple orthogonal arrays constant weight}} (MOA-CW in short) code is a code of
length $n$, constant weight $w$, minimum distance $d <w$, where each subset of $w$ coordinates is the support of
exactly $(q-1)^{w-d+1}$ codewords, i.e., these codewords form an OA$(w-d+1,w,q-1)$.
\end{definition}

One might asks why this family does not include the MDS-CW codes, where $d=w$. The two families
of codes, [{\bf F4}] and [{\bf F6}] share some properties such as similar expression of their size (which is also shared
by the family [{\bf F5}]), they are both related to orthogonal arrays (MDS codes) in a way that codewords with no \emph{zeroes}
in the same $w$ coordinates form an orthogonal array. The main reason for the distinction
between the two families is that an MDS-CW code either forms the codewords of minimum weight in an orthogonal array
or has the same parameters as it would have had, if such an orthogonal array have been possible. There is no similar property
for an MOA-CW code. The codewords of an MOA-CW code
are not related to codewords of some weights in MDS codes or orthogonal arrays. Another important distinction
is in the simple union construction of Theorem~\ref{thm:const_union} which cannot be applied to MOA-CW codes.
The similarity of the two families will be also demonstrated
in one construction of such codes which is a joint construction for both families of codes.
Similarly, some bounds on the tradeoff between the parameters of these codes are joint bounds for the two families of codes.

\begin{theorem}
An $(n,d,w)_q$ MOA-CW code is a $(d-1)$-diameter perfect constant-weight code.
\end{theorem}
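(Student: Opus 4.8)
The plan is to mimic the structure used for the other five families (see the proofs for generalized Steiner systems and MDS-CW codes), namely to exhibit an explicit maximum size anticode of diameter $d-1$ and verify that the product of its size with the code size equals $\abs{\text{J}_q(n,w)} = \binom{n}{w}(q-1)^w$. First I would count the codewords of an $(n,d,w)_q$ MOA-CW code. Since each of the $\binom{n}{w}$ supports carries the rows of an OA$(w-d+1,w,q-1)$, and such an orthogonal array with index unity has $(q-1)^{w-d+1}$ rows, the code has exactly $\binom{n}{w}(q-1)^{w-d+1}$ codewords, matching the size stated for family [{\bf F6}] in the introduction.

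Next I would introduce the natural anticode. Following the template of $\cA^{\text{m}}(n,w,\delta)$ from Section~\ref{sec:MDS-CW}, the anticode to use here is $\cA^{\text{m}}(n,w,d-1)$, consisting of words whose first $d-1$ nonzero coordinates range freely over $\Z_q \setminus \{0\}$, the next $w-d+1$ coordinates are fixed to $1$, and the last $n-w$ coordinates are $0$. By Lemma~\ref{lem:anticode_sizeMDSCW} this anticode has maximum distance $d-1$ and exactly $(q-1)^{d-1}$ codewords. The required product is then
$$
\binom{n}{w}(q-1)^{w-d+1} \cdot (q-1)^{d-1} = \binom{n}{w}(q-1)^w = \abs{\text{J}_q(n,w)},
$$
so the code-anticode bound~(\ref{eq:code_anticode_bound}), which is valid for J$_q(n,w)$ by Lemma~\ref{lem:gen_anticode_preJQ}, is met with equality, and the code is $(d-1)$-diameter perfect by definition.

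The only point that needs genuine argument, rather than arithmetic bookkeeping, is that the minimum distance of an MOA-CW code is indeed $d$; the definition fixes this, but one should confirm it is consistent, i.e.\ that the orthogonal-array condition on each common support forces distance at least $d$ between codewords sharing a support (two rows of an OA$(w-d+1,w,q-1)$ agree in at most $w-d+1-1 = w-d$ of their $w$ entries, giving distance at least $d$ within a support), while codewords on distinct supports automatically differ in at least one support coordinate and, for the relevant parameter range, remain at distance at least $d$. I expect this distance verification to be the main (though still routine) obstacle: unlike the $d = w$ and $d = w+1$ cases, here codewords sharing a support are close, so the bulk of the minimum-distance burden rests on the orthogonal-array structure rather than on differing supports. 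Once that is checked, the theorem follows immediately from the code-anticode bound exactly as in the corollaries for families [{\bf F3}], [{\bf F4}], and [{\bf F5}].
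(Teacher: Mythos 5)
Your proposal is correct and follows essentially the same route as the paper: count the code size as $\binom{n}{w}(q-1)^{w-d+1}$, pair it with the anticode $\cA^{\text{m}}(n,w,d-1)$ of size $(q-1)^{d-1}$ and diameter $d-1$ from Lemma~\ref{lem:anticode_sizeMDSCW}, and observe that the product equals $\abs{\text{J}_q(n,w)}$ so the code-anticode bound is met with equality. Your closing worry about verifying the minimum distance is unnecessary here, since the definition of an MOA-CW code already stipulates minimum distance $d$ as a hypothesis rather than something to be derived.
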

\begin{proof}
By definition, the size of an $(n,d,w)_q$ MOA-CW code $\cC$ is $\binom{n}{w} (q-1)^{w-d+1}$ and by Lemma~\ref{lem:anticode_sizeMDSCW}
the related anticode~$\cA$ with
diameter $d-1$ in J$_q(n,w)$, $\cA^m (n,w,d-1)$, has size $(q-1)^{d-1}$. Therefore we have that,
$$
\abs{\cC} \cdot \abs{\cA} = \binom{n}{w} (q-1)^{w-d+1} (q-1)^{d-1} = \binom{n}{w} (q-1)^w = \abs{J_q(n,w)}~,
$$
which by the code-anticode bound implies that $\cC$ is a $(d-1)$-diameter perfect constant-weight code.
\end{proof}
\begin{corollary}
If there exists an $(n,d,w)_q$ MOA-CW code, then $\cA^\text{m}(n,w,d-1)$ is a maximum size anticode
of length $n$, weight $w$, and diameter $d-1$, over an alphabet with $q$ symbols.
\end{corollary}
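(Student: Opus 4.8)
The plan is to run the code-anticode bound in the reverse direction, exactly as in the maximum-size-anticode arguments for the earlier families (e.g.\ Lemma~\ref{lem:max_anticodeMDSCW}). The preceding theorem already shows that a given $(n,d,w)_q$ MOA-CW code $\cC$ and the specific anticode $\cA^\text{m}(n,w,d-1)$ meet the code-anticode bound~(\ref{eq:code_anticode_bound}) with equality; the present corollary merely extracts from this the optimality of the anticode among \emph{all} anticodes of the same diameter. Concretely, I would first record the two parameters of $\cC$ that are needed: by the definition of an MOA-CW code each of the $\binom{n}{w}$ supports of size $w$ carries exactly $(q-1)^{w-d+1}$ codewords, so $\abs{\cC}=\binom{n}{w}(q-1)^{w-d+1}$, and its minimum Hamming distance is $d$.

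Next, let $\cA^*$ be an arbitrary anticode of length $n$, weight $w$, and diameter $d-1$ over the alphabet of size $q$. Since $\cC$ has minimum distance $d=(d-1)+1$, and since the local inequality lemma holds for J$_q(n,w)$ by Lemma~\ref{lem:gen_anticode_preJQ}, the code-anticode bound~(\ref{eq:code_anticode_bound}) applies to the pair $(\cC,\cA^*)$ and yields
\[
\abs{\cC}\cdot\abs{\cA^*}\le\abs{J_q(n,w)}=\binom{n}{w}(q-1)^w .
\]
Dividing by $\abs{\cC}=\binom{n}{w}(q-1)^{w-d+1}$ gives the uniform upper bound $\abs{\cA^*}\le(q-1)^{d-1}$ on every anticode of diameter $d-1$.

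Finally, by Lemma~\ref{lem:anticode_sizeMDSCW} the anticode $\cA^\text{m}(n,w,d-1)$ itself has diameter $d-1$ and exactly $(q-1)^{d-1}$ codewords, so it attains the bound just derived and is therefore a maximum size anticode, as claimed. There is essentially no obstacle here: the one substantive ingredient is the validity of the code-anticode bound in J$_q(n,w)$, which is precisely what Lemma~\ref{lem:gen_anticode_preJQ} supplies, and the rest is the same counting identity already verified in the preceding theorem, now read as an inequality against an arbitrary competitor anticode.
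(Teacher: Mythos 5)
Your proposal is correct and follows exactly the route the paper intends: the preceding theorem establishes $\abs{\cC}\cdot\abs{\cA^\text{m}(n,w,d-1)}=\abs{\text{J}_q(n,w)}$, and the code-anticode bound (valid by Lemma~\ref{lem:gen_anticode_preJQ}) forces $\abs{\cA^*}\le(q-1)^{d-1}$ for any competing anticode of diameter $d-1$, so the explicit anticode is of maximum size. This mirrors the paper's own arguments for the analogous corollaries (e.g.\ Lemma~\ref{lem:max_anticodeMDSCW}), and nothing is missing.
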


Next, we present a construction for MOA-CW codes which can also serve as a construction for MDS-CW codes.
The construction is a generalization and a modification of a construction presented in~\cite{Etz97}.
Let $\cM$ be an OA$(t,n,q)$ over $Q$, where $Q=\{1,2,\ldots,q \}$. Assume further that the symbols in the last coordinate
of the first $q^{t-1}$ codewords of $\cM$ are \emph{ones}, the symbols in the last coordinate of the next $q^{t-1}$ codewords
of $\cM$ are \emph{twos}, and so on, where the symbols in the last coordinate of the last $q^{t-1}$ codewords of $\cM$ are $q$'s.
Assume further that ${q \geq \binom{n-1}{\ell}}$ for a given $\ell$, $1 \leq \ell \leq n-1$.
Let $S_1,S_2,\ldots,S_r$, where $r = \binom{n-1}{\ell}$, be a sequence containing all the $\ell$-subsets
of $\{1,2,\ldots,n-1\}$. Let $\cM'$ be the $\left( \binom{n-1}{\ell} q^{t-1} \right) \times (n-1)$ array constructed from $\cM$ as follows.
\begin{enumerate}
\item If $S_1 = \{ i_1, i_2, \ldots , i_\ell \}$, then replace all the symbols in the first $q^{t-1}$ rows of
column $i_j$ in~$\cM$, for each $1 \leq j \leq \ell$, with \emph{zeroes}.

\item If $S_2 = \{ i_1, i_2, \ldots , i_\ell \}$, then replace all the symbols in the next $q^{t-1}$ rows of
column $i_j$ in~$\cM$, for each $1 \leq j \leq \ell$, with \emph{zeroes}.

\item Continue the same process with $S_3$, $S_4$, and so on until $S_r$.

\item Remove the last column of $\cM$.

\item Remove the last $q^t - \left( \binom{n-1}{\ell} q^{t-1} \right)$ rows of $\cM$.
\end{enumerate}

\begin{theorem}
\label{thm:main_MOA-CW}
The rows of the array $\cM'$ form an $(n-1,n-t-\ell+1,n-\ell-1)_{q+1}$ code~$\cC$ that is an $(n-t-\ell)$-diameter
perfect constant-weight code over $Q \cup \{ 0 \}$.
\end{theorem}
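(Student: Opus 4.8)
The plan is to verify directly that the array $\cM'$ produces a code $\cC$ with the claimed length, weight, and minimum distance, and then invoke the code-anticode bound (via Lemma~\ref{lem:gen_anticode_preJQ}) together with the matching anticode $\cA^\text{m}$ to conclude diameter perfection. First I would track the parameters through the construction. The original array $\cM$ is an OA$(t,n,q)$, so it has $q^t$ rows and $n$ columns; after Step~4 removes the last column we are down to $n-1$ columns, and after Step~5 we keep exactly $\binom{n-1}{\ell} q^{t-1}$ rows, so $\cM'$ has dimensions $\binom{n-1}{\ell} q^{t-1} \times (n-1)$. This already matches the promised size, since the new alphabet is $Q \cup \{0\}$ of size $q+1$ and a $(n-1,\,n-t-\ell+1,\,n-\ell-1)_{q+1}$ MOA-CW code should have $\binom{n-1}{n-\ell-1}(q+1-1)^{(n-\ell-1)-(n-t-\ell+1)+1} = \binom{n-1}{\ell} q^{t-1}$ codewords, using $\binom{n-1}{n-\ell-1}=\binom{n-1}{\ell}$.

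Next I would establish the constant weight. Each block of $q^{t-1}$ consecutive rows of $\cM$ is associated (via the indexing $S_1,\dots,S_r$) with one $\ell$-subset of $\{1,\dots,n-1\}$, and Step~1--3 zero out exactly those $\ell$ columns in that block; Step~4 then deletes the last column, which carried a constant nonzero symbol on that block. Since $\cM$ had no zeros (its alphabet is $\{1,\dots,q\}$), each surviving row of $\cM'$ has exactly $\ell + 1$ zero coordinates among its $n-1$ positions, hence weight $(n-1)-(\ell+1) = n-\ell-2$. Here I expect I would need to recheck the off-by-one bookkeeping against the stated weight $n-\ell-1$, since the theorem's arithmetic must be made consistent; the honest step is to recount the zeros produced by the $\ell$ zeroed columns plus the deleted last column and reconcile with the claimed $w = n-\ell-1$.

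The heart of the argument, and the step I expect to be the main obstacle, is the minimum-distance claim $d = n-t-\ell+1$ together with the orthogonal-array structure on each support. I would split into two cases. For two rows coming from the \emph{same} block (same $\ell$-subset $S_j$, hence the same zeroed columns and the same support), the nonzero coordinates are exactly a projection of $\cM$ onto the $n-\ell-1$ surviving nonzero columns; because $\cM$ is an OA$(t,n,q)$, any projection onto $t$ columns realizes each $t$-tuple equally often, and one shows these projected rows form an OA$(w-d+1,w,q)$ with $w-d+1 = t-1$ and hence have pairwise Hamming distance at least $w-(t-1)+1 = n-\ell-2-(t-1)+1$. For two rows from \emph{different} blocks, the supports differ, so there is at least one coordinate that is zero in one row and nonzero in the other, and I would have to carefully lower-bound the number of disagreements by combining the support mismatch with the orthogonal-array spreading; the delicate part is proving the bound is tight and that no pair falls below $d$, which is exactly where the indexing of the $\ell$-subsets and the uniform block partition of $\cM$'s last coordinate are used. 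Once $d$ and the support-count $(q+1-1)^{w-d+1}=q^{t-1}$ are verified, the code is an MOA-CW code by definition, and the preceding theorem of this subsection immediately gives that it is a $(d-1)=(n-t-\ell)$-diameter perfect constant-weight code, completing the proof.
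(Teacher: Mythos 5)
Your overall skeleton matches the paper's: verify length, weight, size, and minimum distance directly from the construction, split the distance argument according to whether two rows have the same or different supports, and finish with the code-anticode machinery. But two things do not go through as written. First, the weight count: the last column is \emph{deleted} in Step~4, not zeroed, so it is not among the $n-1$ surviving positions; each row of $\cM'$ has exactly $\ell$ zeros (those introduced in Steps~1--3, all in columns indexed by subsets of $\{1,\ldots,n-1\}$) and hence weight $(n-1)-\ell = n-\ell-1$, exactly as claimed. Your count of $\ell+1$ zeros and weight $n-\ell-2$ is an error on your side, not an inconsistency in the theorem, and it propagates into your strength and distance arithmetic for the same-block case.

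Second, and more seriously, the different-block case of the minimum distance is the heart of the proof, and you leave it as a declared obstacle rather than proving it. The missing idea is precisely the reason the construction insists that the last column of $\cM$ be constant on each block of $q^{t-1}$ rows: two rows of $\cM'$ from different blocks come from rows $c'_1\alpha$ and $c'_2\beta$ of $\cM$ with $\alpha\neq\beta$, so $d(c'_1\alpha,c'_2\beta)\geq n-t+1$ already forces $d(c'_1,c'_2)\geq n-t$ on the first $n-1$ coordinates. Zeroing can only destroy a disagreement at a coordinate zeroed in \emph{both} rows, and two distinct $\ell$-subsets meet in at most $\ell-1$ coordinates, so $d(c_1,c_2)\geq n-t-(\ell-1)=n-t-\ell+1$. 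The observation you do extract, that a support mismatch yields at least one disagreement, is far weaker than what is needed. Your closing step --- verifying the orthogonal-array structure on each support and invoking the MOA-CW diameter-perfection theorem --- is a legitimate alternative to the paper's direct computation $\abs{\cC}\cdot\abs{\cA^{\text{m}}(n-1,n-1-\ell,n-t-\ell)}=\binom{n-1}{\ell}q^{t-1}\cdot q^{n-t-\ell}=\abs{\text{J}_{q+1}(n-1,n-\ell-1)}$, but it applies verbatim only when $t>2$ (for $t=2$ one has $d=w$, the code is MDS-CW, and the analogous earlier lemma must be used), and in any case it does not repair the missing distance bound.
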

\begin{proof}
Exactly one column was deleted from $\cM$ to obtain $\cM'$ and hence the length of the code $\cM'$ is $n-1$. In each codeword of length $n-1$
exactly $\ell$ \emph{zeroes} were inserted instead of nonzero symbols and hence the weight of each codeword is $n-1-\ell$.
Since $\cM$ is an OA$(t,n,q)$, it follows that
$\abs{\cM} =q^t$, and since $q \geq \binom{n-1}{\ell}$, it follows that $q^t \geq \binom{n-1}{\ell} q^{t-1}$ and hence $\cM$ has at
least $\binom{n-1}{\ell} q^{t-1}$ rows as required by the construction. Furthermore, note that the minimum distance of
the code defined by $\cM$ is $n-t+1$.
Let $c_1$ and $c_2$ be two codewords in $\cM'$. If the \emph{zeroes} in $c_1$ and $c_2$ are on the same $\ell$ coordinates, then
$c_1$ and $c_2$ were derived from two rows $c'_1 \alpha$ and $c'_2 \alpha$ of $\cM$, where $\alpha \in \{1,2,\ldots,q\}$,
and $d(c'_1 \alpha ,c'_2 \alpha) \geq n-t+1$. Since the same $\ell$ coordinates were changed in $c'_1$ and $c'_2$ to obtain $c_1$ and $c_2$,
respectively, it follows that $d(c_1,c_2) \geq n-t+1-\ell$. If the \emph{zeroes} in $c_1$ and $c_2$ are not on the same coordinates, then
$c_1$ and $c_2$ were derived from two rows $c'_1 \alpha$ and $c'_2 \beta$, where $\alpha, \beta \in \{1,2,\ldots,q\}$, $\alpha \neq \beta$,
and $d(c'_1 \alpha , c'_2 \beta) \geq n-t+1$, which implies that $d(c'_1,c'_2) \geq n-t$. The number of coordinates in which both $c_1$ and $c_2$
have \emph{zeroes} is at most $\ell-1$ and hence $d(c_1,c_2) \geq d(c'_1,c'_2) -(\ell-1) \geq n-t -(\ell-1)= n-t-\ell+1$.
Thus, $d(\cC) \geq n-t-\ell+1$.

As an immediate consequence from the construction, the number of rows in the array~$\cM'$ is $\binom{n-1}{\ell} q^{t-1}$
and its alphabet $\{ 0,1,2,\ldots,q\}$ is of size $q+1$. Let $\cA$
be a related anticode of length $n-1$ and diameter $n-t-\ell$. By Lemma~\ref{lem:anticode_sizeMDSCW} there exists such a
constant-weight anticode $\cA^\text{m} (n-1,n-1-\ell,n-t-\ell)$, over~$\Z_{q+1}$,
whose size is $q^{n-t-\ell}$. Therefore,
$$
\abs{\cC} \cdot \abs{\cA} = \binom{n-1}{\ell} q^{t-1} \cdot q^{n-t-\ell}
= \binom{n-1}{n-1-\ell} q^{n-\ell-1} = \abs{\text{J}_{q+1}(n-1,n-\ell-1)} ~,
$$
which implies by the code-anticode bound that $\cM'$ is an $(n-t-\ell)$-diameter perfect code.
\end{proof}

\begin{corollary}
When $t=2$ the code $\cM'$ is an $(n-1,n-\ell-1,q+1)$ MDS-CW code.
\end{corollary}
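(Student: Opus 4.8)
The plan is to specialize Theorem~\ref{thm:main_MOA-CW} to the case $t=2$ and verify that the resulting parameters match the definition of an MDS-CW code. When $t=2$, the minimum distance of the constructed code $\cC$ becomes $n-t-\ell+1 = n-\ell-1$, which equals the weight $w = n-1-\ell$ of the codewords. Since an MDS-CW code is precisely a constant-weight code whose minimum distance equals its weight (with each support carrying $q-1$ codewords), the equality $d = w$ is the key structural fact that distinguishes the MDS-CW corollary from the general MOA-CW theorem, where $d < w$.

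First I would read off the parameters directly from Theorem~\ref{thm:main_MOA-CW}: the code $\cM'$ has length $n-1$, weight $w = n-1-\ell$, minimum distance $d = n-t-\ell+1$, and alphabet size $q+1$. Substituting $t=2$ gives $d = n-\ell-1 = w$, establishing condition~(1) of the MDS-CW definition. For condition~(2), I would count the codewords per support. The total number of codewords is $\binom{n-1}{\ell}q^{t-1} = \binom{n-1}{\ell}q$ when $t=2$. By the construction, each $\ell$-subset $S_r$ of $\{1,2,\ldots,n-1\}$ receives the \emph{zeroes} in exactly one block of $q^{t-1}=q$ consecutive rows, so each choice of support (the complementary $(n-1-\ell)$-subset) is carried by exactly $q$ codewords. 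Since the alphabet has size $q+1$, we have $q = (q+1)-1$, which is exactly the required count of $q-1$ codewords per support in the alphabet of size $q+1$.

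**The main obstacle** I anticipate is bookkeeping the alphabet shift: the parameter $q$ in the orthogonal array OA$(2,n,q)$ is \emph{one less} than the alphabet size $q+1$ of the resulting code $\cM'$, so one must be careful to express both the ``$q-1$ codewords per support'' count and the notation $(n-1,w,q)$ MDS-CW code consistently in terms of the final alphabet size $q+1$. Once this translation is made cleanly, the two defining conditions of an MDS-CW code follow immediately, and since Theorem~\ref{thm:main_MOA-CW} already establishes that $\cC$ is a diameter perfect constant-weight code, the corollary requires nothing beyond identifying the parameters and confirming $d=w$.
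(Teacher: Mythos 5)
Your proposal is correct and matches the paper's (implicit) justification: the corollary is stated without proof as an immediate specialization of Theorem~\ref{thm:main_MOA-CW}, and you carry out exactly that specialization, checking $d=n-t-\ell+1=n-\ell-1=w$ and that each $w$-support carries $q^{t-1}=q=(q+1)-1$ codewords. Your explicit attention to the alphabet-size shift from $q$ in the orthogonal array to $q+1$ in $\cM'$ is a helpful clarification of a point the paper leaves to the reader.
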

\begin{corollary}
When $t>2$ the code $\cM'$ is an $(n-1,n-t-\ell-1,n-\ell-1)_{q+1}$ MOA-CW code.
\end{corollary}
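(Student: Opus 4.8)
The plan is to read the parameters off Theorem~\ref{thm:main_MOA-CW} and then verify the two extra features that promote a constant-weight code to an MOA-CW code: the strict inequality $d<w$ and the orthogonal-array structure carried by every $w$-subset of coordinates. By Theorem~\ref{thm:main_MOA-CW}, $\cM'$ is a constant-weight code of length $n-1$, weight $w=n-\ell-1$, and minimum Hamming distance $d=n-t-\ell+1$ over the alphabet of size $q+1$. The equivalence $d<w\Longleftrightarrow n-t-\ell+1<n-\ell-1\Longleftrightarrow t>2$ shows that the defining inequality of the MOA-CW family holds precisely under the hypothesis $t>2$; this is exactly the inequality that separates this family from the MDS-CW family produced when $t=2$. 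It then remains to show that each set of $w$ coordinates supports exactly $((q+1)-1)^{w-d+1}=q^{\,t-1}$ codewords and that these form an $\text{OA}(w-d+1,w,q)=\text{OA}(t-1,n-\ell-1,q)$.

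First I would return to the construction of $\cM'$. Recall that $\cM$ is an $\text{OA}(t,n,q)$ whose rows are arranged so that the last column is constant on consecutive blocks $B_1,B_2,\dots$ of $q^{t-1}$ rows, the rows of $B_\alpha$ carrying the symbol $\alpha$ in the last coordinate. The structural fact I would establish is that, for each fixed $\alpha$, the rows of $B_\alpha$ restricted to the first $n-1$ columns form an $\text{OA}(t-1,n-1,q)$ of index unity: projecting $\cM$ onto any $t-1$ of the first $n-1$ columns together with the last column, each $t$-tuple occurs exactly once, so after fixing the last coordinate to $\alpha$ each $(t-1)$-tuple on those columns occurs exactly once. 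This is the standard derived-array argument and is the step in which the strength drops from $t$ to $t-1$.

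Next I would track the effect of the construction block by block. Because $q\ge\binom{n-1}{\ell}=r$, at least $r$ blocks are available, and block $i$ has the columns indexed by the $\ell$-subset $S_i$ overwritten by zeroes before the last column is deleted. Hence every codeword coming from block $i$ has support exactly $\{1,2,\dots,n-1\}\setminus S_i$, an $(n-\ell-1)$-subset, and the $q^{t-1}$ codewords of block $i$ are the projection of the derived array $\text{OA}(t-1,n-1,q)$ onto the $n-\ell-1$ surviving columns. Since a projection of an orthogonal array onto at least $t-1$ columns is again an orthogonal array of strength $t-1$, with the number of rows unchanged at $q^{t-1}$ (so of index unity), these codewords form an $\text{OA}(t-1,n-\ell-1,q)$. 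As $S_1,\dots,S_r$ runs through all $\ell$-subsets of $\{1,\dots,n-1\}$ exactly once, every $(n-\ell-1)$-subset of the $n-1$ coordinates, that is every $w$-subset, occurs as the support of exactly one block, hence of exactly $q^{t-1}$ codewords, and these form the required orthogonal array. Matching this against the definition of an MOA-CW code, with $q-1$ there replaced by $(q+1)-1=q$ and with $w-d+1=t-1$, completes the argument.

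The step I expect to require the most care is this derived-array and projection argument: one must check that fixing the last coordinate and then deleting $\ell$ further columns leaves an orthogonal array of strength exactly $t-1$, and one must also keep the side condition $w=n-\ell-1\ge t-1$ (equivalently $n-\ell\ge t$) in view, since otherwise the asserted $\text{OA}(t-1,w,q)$ would not be defined. The remaining ingredients, namely the codeword count $\binom{n-1}{\ell}q^{t-1}$, the length, weight, distance, and the diameter-perfect property, are inherited verbatim from Theorem~\ref{thm:main_MOA-CW}.
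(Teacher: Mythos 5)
Your proposal is correct, and it actually does more than the paper, which states this corollary with no proof at all, treating it as immediate from Theorem~\ref{thm:main_MOA-CW}. The theorem's proof establishes the length, weight, minimum distance, codeword count, and the diameter-perfect property, but it never explicitly verifies the defining feature of an MOA-CW code, namely that every $w$-subset of coordinates supports exactly $q^{\,t-1}$ codewords forming an orthogonal array of strength $w-d+1=t-1$ and index unity. Your derived-array argument (fix the last coordinate to get an $\text{OA}(t-1,n-1,q)$ from each block, then project onto the $n-\ell-1$ surviving columns and observe that the supports run over all $(n-\ell-1)$-subsets as the $S_i$ run over all $\ell$-subsets) supplies exactly this missing verification, and your observation that $d<w$ is equivalent to $t>2$ is the correct dividing line between this corollary and the preceding MDS-CW one. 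One remark: the distance parameter in the corollary as printed, $n-t-\ell-1$, is inconsistent with the theorem, whose code has minimum distance $n-t-\ell+1$; this is evidently a typo in the paper, and you correctly work with $n-t-\ell+1$ throughout (indeed $w-d+1=t-1$ only comes out right with that value). The only point you leave implicit is that the minimum distance is exactly, not merely at least, $n-t-\ell+1$; this follows from the index-unity orthogonal array structure you establish, since two codewords with the same support agree in at most $t-2$ coordinates and some pair agrees in exactly $t-2$.
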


\begin{theorem}
$~$
\begin{enumerate}
\item If there exists an $(n,d,w)_q$ MOA-CW code, then there exists an $(n-1,d,w)_q$ MOA-CW code.

\item If there exists an $(n,d,w)_q$ MOA-CW code, then there exists an $(n-1,d-1,w-1)_q$ MOA-CW code.
\end{enumerate}
\end{theorem}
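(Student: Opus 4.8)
The plan is to prove both parts by puncturing $\cC$ on a single coordinate, using two different restriction rules matched to the two target parameter sets. Throughout let $\cC$ be the given $(n,d,w)_q$ MOA-CW code over $Q$ and fix the last coordinate $n$. The structural fact I will lean on is the defining one: for every $w$-subset $W$ the codewords of $\cC$ supported exactly on $W$ form an index-unity OA$(w-d+1,w,q-1)$, so by index unity distinct such codewords agree in at most $w-d$ of the coordinates of $W$, giving minimum distance exactly $d$ on each support.

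For part (1), I would take $\cC'$ to be the set of codewords of $\cC$ whose support avoids coordinate $n$ (equivalently, those with a $0$ there) and then delete that coordinate. Because every retained codeword has a common $0$ in coordinate $n$, deletion is injective and preserves all pairwise distances, so the minimum distance remains $d$. The decisive point is that the MOA-CW axiom requires \emph{every} $w$-subset to be a full support, so for each $w$-subset $W\subseteq\{1,\ldots,n-1\}$ the codewords of $\cC$ supported on $W$ already avoid coordinate $n$ and survive intact, and their OA$(w-d+1,w,q-1)$ structure on $W$ is untouched by deleting a coordinate lying outside $W$. Hence every $w$-subset of $\{1,\ldots,n-1\}$ still supports exactly $(q-1)^{w-d+1}$ codewords forming the required array, and $\cC'$ is an $(n-1,d,w)_q$ MOA-CW code.

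For part (2), the derived-code construction analogous to Lemma~\ref{lem:derivedGS}, I would instead keep every codeword having a \emph{nonzero} symbol in coordinate $n$ and delete that coordinate, producing words of length $n-1$ and weight $w-1$. Writing $W=W'\cup\{n\}$, the $(w-1)$-subsets $W'$ of $\{1,\ldots,n-1\}$ biject with the $w$-subsets $W$ containing $n$, and all $(q-1)^{w-d+1}$ codewords supported on $W$ carry over. The crucial calculation is that deleting one column from an index-unity OA$(w-d+1,w,q-1)$ yields an index-unity OA$(w-d+1,w-1,q-1)$, since strength is retained while only the number of columns drops; this is precisely OA$((w-1)-(d-1)+1,\,w-1,\,q-1)$, the array demanded of an $(n-1,d-1,w-1)_q$ MOA-CW code, and the count $(q-1)^{w-d+1}=(q-1)^{(w-1)-(d-1)+1}$ also matches. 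For the distance I would check that same-support pairs realize distance $(w-1)-(w-d+1)+1=d-1$, while different-support pairs sharing the deleted symbol keep distance $\geq d$ and those differing in coordinate $n$ drop by exactly one to $\geq d-1$; since $q-1\geq 2$ and $w-d+1\geq 2$ each support carries at least two codewords, so the minimum distance is exactly $d-1$.

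The only step that is more than bookkeeping is the orthogonal-array accounting in part (2): one must confirm both the per-support count and that projecting away one coordinate lowers the length but not the strength of an index-unity OA, so that the $d-1$ in the target parameters stays consistent with the OA$(w-d+1,w-1,q-1)$ structure. I would also record the mild hypotheses $d\geq 2$ (so that $d-1$ is a genuine minimum distance and the puncturing in part (2) is injective) and $w<n$ (so that the shortened code in part (1) is nonempty), both of which hold for any nondegenerate MOA-CW code.
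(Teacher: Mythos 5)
Your proposal is correct and follows essentially the same route as the paper: the paper defines $\cC_1$ (codewords with a \emph{zero} in a fixed coordinate, coordinate deleted) and $\cC_2$ (codewords with a nonzero symbol there, coordinate deleted) and leaves the verification to the reader, which is exactly your two shortenings up to the choice of which coordinate to puncture. Your added bookkeeping — that deleting a column from an index-unity OA$(w-d+1,w,q-1)$ preserves the strength, and the distance/injectivity checks under $d\geq 2$ and $w<n$ — correctly fills in what the paper dismisses as "easily verified."
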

\begin{proof}
Let $\cC$ be $(n,d,w)_q$ MOA-CW code and define the following two code
$$
\cC_1 \triangleq \{ (x_2,x_3,\ldots,x_n) ~:~ (0,x_2,x_3,\ldots,x_n) \in \cC \}
$$
and
$$
\cC_2 \triangleq \{ (x_2,x_3,\ldots,x_n) ~:~ (x_1,x_2,x_3,\ldots,x_n) \in \cC, ~ x_1 \neq 0 \}~.
$$
One can easily verify that $\cC_1$ is an $(n-1,d,w)_q$ MOA-CW code and $\cC_2$ is an $(n-1,d-1,w-1)_q$ MOA-CW code.
\end{proof}

After constructing $(d-1)$-diameter constant-weight perfect codes for $d < w$, where each $w$ coordinates are
the support of exactly $(q-1)^{w-d+1}$ codewords we would like to have some lower bounds on the alphabet size of such codes and upper bounds
on their length and their weights. Since each $w$ coordinates are the support of exactly $(q-1)^{w-d+1}$ codewords,
it follows that the projection on each $w$ coordinates on these codewords forms an orthogonal array OA$(w-d+1,w,q-1)$ and the related
bounds on orthogonal arrays in Theorems~\ref{thm:b_OA1},~\ref{thm:b_OA2}, and~\ref{thm:b_OA3}, can be applied. This implies the following theorem
which present a tradeoff between the alphabet size and the minimum distance of the code.

\begin{theorem}
\label{thm:bounds_MOA-CW}
$~$
\begin{enumerate}
\item If there exists an $(n,w-1,w)_q$ MOA-CW code, then $w \leq q$.

\item If there exists an $(n,w-\delta,w)_q$ MOA-CW code, where $2 \leq \delta \leq w-1$ and $q$ is even, then $w \leq q+\delta$.

\item If there exists an $(n,w-\delta,w)_q$ MOA-CW code, where $2 \leq \delta \leq w-1$ and $q$ is odd, then $w \leq q+\delta-1$.

\item If there exists an $(n,w-\delta,w)_q$ MOA-CW code, where $q-1 \leq \delta +1$, then $w \leq \delta +2$.
\end{enumerate}
\end{theorem}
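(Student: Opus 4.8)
The plan is to reduce each of the four claims to an already-established bound on orthogonal arrays by exploiting the defining structure of an MOA-CW code. By definition, for an $(n,w-\delta,w)_q$ MOA-CW code (so that the minimum distance is $d=w-\delta$), the codewords supported on any fixed $w$-subset of the coordinates, when restricted to those coordinates, form an orthogonal array OA$(w-d+1,w,q-1)=$OA$(\delta+1,w,q-1)$ over the $q-1$ nonzero symbols. Fixing one such $w$-subset, it therefore suffices to apply Theorems~\ref{thm:b_OA1}, \ref{thm:b_OA2}, and~\ref{thm:b_OA3} to this orthogonal array, which has strength $t=\delta+1$, exactly $w$ columns, and alphabet size $q-1$; every upper bound on the number of columns of such an array translates directly into an upper bound on $w$.

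For the first claim $\delta=1$, so the relevant array is an OA$(2,w,q-1)$, and Theorem~\ref{thm:b_OA1} gives $w\leq(q-1)+1=q$. For the remaining claims $\delta\geq 2$, whence the strength $t=\delta+1\geq 3$ and one is in the regime governed by Theorems~\ref{thm:b_OA2} and~\ref{thm:b_OA3}. The selection between these two is made by comparing the alphabet size $q-1$ with the strength $\delta+1$: when $q-1\leq\delta+1$ (the hypothesis of the fourth claim) Theorem~\ref{thm:b_OA3} applies and immediately yields $w\leq(\delta+1)+1=\delta+2$; when $q-1\geq\delta+1$ Theorem~\ref{thm:b_OA2} applies, subject to the parity analysis below.

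The one step that needs genuine care is the parity bookkeeping in the second and third claims, because the parity hypothesis of Theorem~\ref{thm:b_OA2} is imposed on the \emph{array's} alphabet, which here is $q-1$ rather than $q$; thus the parity effectively flips. When $q$ is even, $q-1$ is odd, so the odd case of Theorem~\ref{thm:b_OA2} gives $w\leq(q-1)+(\delta+1)-2=q+\delta-2\leq q+\delta$; when $q$ is odd, $q-1$ is even, so the even case gives $w\leq(q-1)+(\delta+1)-1=q+\delta-1$. In each case the stated inequality follows. For completeness one should also dispose of the sub-case $q-1<\delta+1$ inside the second and third claims, where Theorem~\ref{thm:b_OA2} does not apply; there Theorem~\ref{thm:b_OA3} gives $w\leq\delta+2$, which is comfortably within the stated bounds since $q>2$.

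Since each claim is settled by a single substitution into a previously proved orthogonal-array bound, there is no serious obstacle in the argument. The only thing that must be tracked carefully is the parameter translation, and in particular that the alphabet governing the orthogonal-array bounds is $q-1$, so that both the parity hypotheses and the comparison between alphabet size and strength are tested against $q-1$ and $\delta+1$ rather than against $q$ and $\delta$.
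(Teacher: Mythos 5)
Your proof is correct and follows essentially the same route as the paper, which likewise disposes of all four claims by restricting to the orthogonal array OA$(\delta+1,w,q-1)$ carried by any fixed $w$-subset of coordinates and then citing Theorems~\ref{thm:b_OA1}, \ref{thm:b_OA2}, and~\ref{thm:b_OA3} with the substitution $t=\delta+1$, $n=w$, and alphabet size $q-1$. Your careful tracking of the parity of $q-1$ (rather than $q$) in fact yields the slightly stronger bound $w\leq q+\delta-2$ in the second claim, whereas the stated theorem appears to test parity against $q$ itself; since your bound implies the stated one, nothing is lost.
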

\begin{proof}
All the claims are direct consequences of Theorems~\ref{thm:b_OA1},~\ref{thm:b_OA2}, and~\ref{thm:b_OA3}, where the length $n$
in the OA$(t,n,q)$ is restricted to $w$, the alphabet size is $q-1$, and $w-\delta = n-t+1$.
\end{proof}

Theorem~\ref{thm:bounds_MOA-CW} implies upper bounds on $w$ as a function of the alphabet size $q$ and the minimum distance $d$ of the code
and, similarly, lower bounds on $q$ as a function of the weight~$w$ and the minimum distance of the code. Since $d=w-\delta$, it follows that
these bounds can be written as bounds only as a tradeoff between $d$ and $q$.

\begin{corollary}
$~$
\begin{enumerate}
\item If there exists an $(n,d,w)_q$ MOA-CW code, where $1 \leq d \leq w-2$ and $q$ is even, then $d \leq q$.

\item If there exists an $(n,d,w)_q$ MOA-CW code, where $1 \leq d \leq w-2$ and $q$ is odd, then $d+1 \leq q$.
\end{enumerate}
\end{corollary}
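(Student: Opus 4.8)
The plan is to derive the corollary directly from parts 2 and 3 of Theorem~\ref{thm:bounds_MOA-CW} by re-parametrizing in terms of the minimum distance $d$ rather than the offset $\delta = w-d$. First I would verify that the hypothesis $1 \le d \le w-2$ of the corollary is equivalent to the hypothesis $2 \le \delta \le w-1$ appearing in those two parts. Setting $\delta = w-d$, the inequality $d \ge 1$ becomes $\delta \le w-1$ and $d \le w-2$ becomes $\delta \ge 2$, so the two ranges coincide exactly. Hence an $(n,d,w)_q$ MOA-CW code with $1 \le d \le w-2$ is precisely an $(n,w-\delta,w)_q$ MOA-CW code with $2 \le \delta \le w-1$, and the corresponding bound of Theorem~\ref{thm:bounds_MOA-CW} applies.

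For the even case, part 2 of Theorem~\ref{thm:bounds_MOA-CW} gives $w \le q+\delta$; substituting $\delta = w-d$ yields $w \le q + w - d$, i.e. $d \le q$, which is the claimed bound. For the odd case, part 3 gives $w \le q + \delta - 1$; the same substitution produces $w \le q + w - d - 1$, i.e. $d+1 \le q$. This settles both items. I would write each case in a single line of inline arithmetic to keep the argument transparent.

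Since the entire combinatorial content — the passage to the projected orthogonal array OA$(w-d+1,w,q-1)$ and the bounds of Theorems~\ref{thm:b_OA1}--\ref{thm:b_OA3} — is already packaged inside Theorem~\ref{thm:bounds_MOA-CW}, the only work remaining is this change of variables, so there is no genuine obstacle. The one point meriting a moment's care is confirming that the boundary value $d = w-2$ (hence $\delta = 2$) lies strictly inside the admissible range of parts 2 and 3, and therefore does not slip into the separately treated regimes (part 1, with $\delta = 1$, or part 4, governing small $q$). Once that range check is made, the two arithmetic simplifications close the argument immediately.
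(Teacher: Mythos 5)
Your proposal is correct and is exactly the argument the paper intends: the corollary is obtained from parts 2 and 3 of Theorem~\ref{thm:bounds_MOA-CW} by the substitution $\delta = w-d$, under which the range $1 \le d \le w-2$ matches $2 \le \delta \le w-1$ and the bounds $w \le q+\delta$ and $w \le q+\delta-1$ become $d \le q$ and $d+1 \le q$ respectively. The paper gives no further detail, so nothing more is needed.
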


The next bound presents a tradeoff between the length, the weight, and the alphabet size, of the code. It is interesting to note that
the minimum distance has no influence on the bound.
\begin{theorem}
\label{thm:tradeoff_mds-cw}
If there exists an $(n,d,w)_q$ MOA-CW code, then $n \leq q+w-2$ (equivalently, $q \geq n-w+2$).
\end{theorem}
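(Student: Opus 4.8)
The plan is to generalize the argument of Lemma~\ref{lem:tr_bound} (the $d=w+1$ case), replacing the ``unique codeword per support'' feature by a selection based on the strength of the orthogonal arrays sitting on each support. Write $t\triangleq w-d+1$ for the strength, so that the minimum distance is $d=w-t+1$ and, by definition, every $w$-subset of coordinates carries an OA$(t,w,q-1)$ of index unity.

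First I would fix the $w-1$ coordinates $\{1,2,\dots,w-1\}$ and, for each remaining coordinate $j\in\{w,w+1,\dots,n\}$, consider the support $S_j\triangleq\{1,\dots,w-1,j\}$; there are $n-w+1$ such supports, each carrying an OA$(t,w,q-1)$. I then single out a set $A\subseteq\{1,\dots,w-1\}$ of exactly $t$ coordinates together with a fixed nonzero pattern $\alpha\in(\Z_q\setminus\{0\})^{t}$ on $A$. Because the array on $S_j$ has strength $t$ and index unity, there is exactly one codeword $c_j$ of support $S_j$ whose restriction to $A$ equals $\alpha$; this produces $n-w+1$ codewords $c_w,\dots,c_n$.

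The crux is a distance computation. For $j\ne j'$ the words $c_j,c_{j'}$ disagree in coordinates $j$ and $j'$ (one is nonzero, the other zero there), contributing $2$ to their Hamming distance, while on $\{1,\dots,w-1\}$ both are nonzero; if $a$ denotes the number of coordinates of $\{1,\dots,w-1\}$ on which they carry equal symbols, then $d(c_j,c_{j'})=2+(w-1-a)\ge d=w-t+1$, forcing $a\le t$. Since $c_j$ and $c_{j'}$ agree on all of $A$ by construction, we also have $a\ge t$, hence $a=t$: the two codewords agree exactly on $A$ and differ on every coordinate of $\{1,\dots,w-1\}\setminus A$. Choosing any $i^\ast\in\{1,\dots,w-1\}\setminus A$, the $n-w+1$ codewords $c_w,\dots,c_n$ carry pairwise distinct nonzero symbols in coordinate $i^\ast$, so $n-w+1\le q-1$, i.e.\ $n\le q+w-2$.

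The main obstacle is ensuring that the coordinate $i^\ast$ exists, that is, that $\{1,\dots,w-1\}\setminus A$ is nonempty; this requires $t\le w-2$, equivalently $d\ge 3$. I would therefore treat the borderline strengths ($w-1=t$, i.e.\ $d\le 2$) separately, since there the selection set $A$ exhausts $\{1,\dots,w-1\}$ and the distance slack vanishes, so those cases must be checked by hand. A secondary point to verify carefully is that the strength-$t$, index-unity property is invoked correctly to guarantee both the existence and the uniqueness of $c_j$ for each $j$, so that the $n-w+1$ selected codewords are genuinely well defined.
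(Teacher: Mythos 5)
Your proposal is correct and follows essentially the same route as the paper's proof: the paper also fixes the support family $\{1,\dots,w-1\}\cup\{j\}$, pins down a common pattern on a set of $w-d+1$ of the first $w-1$ coordinates (it takes the ``suffix'' of that length, your set $A$), uses index unity of the OA on each support to extract exactly one codeword per $j$, and then uses the minimum distance to force pairwise distinct nonzero symbols in one of the remaining $d-2$ coordinates, giving $q-1\ge n-w+1$. The borderline case $d\le 2$ that you flag is in fact glossed over by the paper's own proof as well, so your treatment is, if anything, slightly more careful.
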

\begin{proof}
Let $\cC$ be an $(n,d,w)_q$ MOA-CW code and consider the set $\cS$ of codewords in $\cC$ which have only nonzero symbols
in the first $w-1$ coordinates and in these $w-1$ coordinates of~$\cS$, all the codewords of $\cS$  share the same
suffix of length $w-d+1$.
Each two such codewords of $\cS$ can differ in at most two coordinates out of the last $n-w+1$ coordinates and in the
first $d-2$ coordinates. Hence, since the minimum distance of $\cC$ is $d$, it follows that
two such codewords of $\cS$ differ exactly in these $d$ coordinates. This implies the following observations:

\begin{enumerate}
\item The only nonzero symbol in the last $n-w+1$ coordinates of each codeword from~$\cS$ is in a distinct coordinate.
This implies that $\abs{S} \leq n-w+1$. Moreover, since the projection, on the nonzero entries,
of the codewords whose support is a given subset of $w$ coordinates forms an OA$(w-d+1,w,q-1)$, it follows
that $\cS$ contains a codeword with a nonzero symbol in each one of the last $n-w+1$ coordinates and hence $\abs{S} \geq n-w+1$.

Thus, $\abs{S} = n-w+1$.

\item Each two codewords of $\cS$ differ in all the symbols of their first $d-2$ coordinates.
This implies that $q-1 \geq \abs{S}$.
\end{enumerate}

Thus, $q-1 \geq n-w+1$, which completes the proof of the theorem.
\end{proof}

\subsection{Comparison between maximum size anticodes}
\label{sec:anticodes}

So far we have characterized the families of diameter perfect constant-weight codes.
Each family is associated with some maximum size anticodes. In this subsection
we will characterize these families of maximum size anticodes and compare some of them.

The first family of maximum size anticodes is associated with the family [{\bf [F1]}] of non-binary diameter perfect constant-weight codes
for which $w=n$. Clearly, for these anticodes the length of a codeword is $n$ and the weight of each codeword is $w=n$.
Moreover, the anticodes are derived from the related anticodes in the Hamming scheme, by replacing the \emph{zeroes} in the
anticodes of the Hamming scheme with the additional nonzero symbol of the constant-weight code.

The second family of maximum size anticodes is associated with the family
[{\bf [F2]}] of diameter perfect constant-weight codes over an alphabet of size $2^k+1$ for which $w=n-1$.
Clearly, the related anticodes also have length $n$ and weight $w=n-1$.
If the non-binary diameter perfect code is in fact a non-binary perfect code, then the related anticode is a ball.
If the non-binary diameter perfect code is not a non-binary perfect code, then the related anticode is not a ball
and it should be computed for each set of parameters. For example, it was proved in~\cite{KOP16} that for ternary codes if $n=2^m \geq 8$,
$w=n-1$, an the diameter is 4, then the maximum size anticode has size $n^2$ and such a set can be defined by
the union of the set of ternary words with a unique \emph{zero} and all the other symbols are \emph{ones} and
the set of ternary words with a unique \emph{zero} and a unique \emph{two} and all the other symbols are \emph{ones}.

The third family of maximum size anticodes is associated with the family of generalized Steiner system.
The related anticode $\cA^\text{s} (n,w,t)$ was defined by
$$
\cA^\text{s} (n,w,t) \triangleq
\{ (\overbrace{1\cdots \cdots 1}^{t ~ \text{times}}, a_1,\ldots,a_{n-t} ) ~:~ a_i \in \Z_q, ~ \wt{a_1 \cdots a_{n-t}} = w-t \}~.
$$
This anticode has diameter $2(w-t)$ (when $n-t \geq 2(w-t)$) and its size is $\binom{n-t}{w-t} (q-1)^{w-t}$.

The last family of maximum size anticodes is associated with families [{\bf F4}], [{\bf F5}], and~[{\bf F6}] of
the diameter perfect constant-weight codes. The related anticode $\cA^\text{m} (n,w,\delta)$ was defined by
$$
\cA^\text{m} (n,w,\delta) \triangleq \{ (a_1, \ldots, a_\delta , \overbrace{1\cdots \cdots 1}^{w-\delta ~ \text{times}} ,
\overbrace{0\cdots \cdots 0}^{n-w ~ \text{times}}) ~:~ a_i \in \Z_q \setminus \{ 0 \} \}~,
$$
This anticode has diameter $\delta$ and its size is $(q-1)^\delta$.

One can be easily observed that nontrivial anticodes of the first two families cannot have the same parameters since they have different weights.
Moreover, it can be observed that nontrivial anticodes from these two families cannot have the same parameters
as the anticodes from the last two families. Hence, we will compare the anticodes from the last two families.

Unless the two anticodes represent one of two trivial cases ($w=n$ and $\delta=w-t$; or $w=t$) they
cannot be isomorphic. This can be observed from the fact that the \emph{zeroes} of $\cA^\text{m} (n,w,\delta)$
are in $n-w$ fixed coordinates, while the \emph{zeroes} of $\cA^\text{s} (n,w,t)$ are in any combination
of $n-w$ coordinates in the last $n-t$ coordinates.

Can these two anticodes be maximum size anticodes, related to two diameter perfect codes of different families,
be of the same size (when the length, weight, and diameter are the same)? Note first that this implies that the
related code from the family [{\bf F4}], or the family [{\bf F5}], or the family [{\bf F6}] must be also a generalized Steiner system
since the two codes will have the same parameters.
Since $\abs{\cA^\text{s} (n,w,t)} = \binom{n-t}{w-t} (q-1)^{w-t}$ and $\abs{\cA^\text{m} (n,w,\delta)} = (q-1)^\delta$, it follows that
the two anticodes are of equal size if and only if $\binom{n-t}{w-t} = (q-1)^\ell$ for some nonnegative integer~$\ell=\delta+t-w$.
If $\ell=0$, then either $n=w$ (the first trivial case) or $w=t$ (the second trivial case). We distinguish now
between two cases depending on whether $\ell =1$ or $\ell > 1$.
\begin{enumerate}
\item If $\ell=1$, then $\binom{n-t}{w-t} = q-1$ and we distinguish between three cases, depending on whether
the related code is from the family [{\bf F4}], the family [{\bf F5}], or the family [{\bf F6}].

\noindent
{\bf Case 1.1.} The diameter perfect code is from the family [{\bf F4}] which implies that $\delta =w$ and hence $t=\ell=1$.

Since $t=1$, it follows that $\binom{n-1}{w-1} =q-1$ and hence by Corollary~\ref{cor:F4trivial},
there exists an $(n,w+1,w)_q$ code of the family [{\bf F4}].
If $w=2$, then $q=n$ and the two related codes are a GS$(1,2,n,n)$ which is also an $(n,3,2)_n$ code from the family [{\bf F4}].
Such a code exists by Corollary~\ref{cor:w=2F4}.
If $2 < w < n$, then a related code with $\binom{n}{w}$ codewords cannot be a GS$(1,w,n,q)$.

\noindent
{\bf Case 1.2.} The diameter perfect code is from the family [{\bf F5}] which implies that $\delta =w-1$ and hence $t=\ell+1=2$.

For the GS$(2,w,n,q)$ and the $(n,w,q)$ MDS-CW code to be equal they must have the same minimum distance and hence $2(w-2)+1=w$, i.e., $w=3$.
Since also $\binom{n-2}{w-2} = q-1$, it follows that $n=q+1$. Two codes are considered in this case.
The first one is a generalized Steiner system GS$(2,3,q+1,q)$ derived from a 1-perfect Hamming code over $\F_q$.
The second one is an $(q+1,3,q)$ MDS-CW code derived from a $[q+1,q-1,3]_q$ MDS code. For these parameters the 1-perfect
Hamming code is also an MDS code and hence both constant-weight codes are the same code. There might be other such constant-weight codes
for $q$ which is not a power of a prime, but no such code is known.

\noindent
{\bf Case 1.3.} The diameter perfect code is from the family [{\bf F6}], for which the MOA-CW code has minimum
Hamming distance $d < w$, which implies that $\delta =d-1$ and hence $t=\ell+1 +w-d= w-d +2$.

Hence, the related codes are GS$(t,w,n,q)$ and an $(n,d,w)_q$ MOA-CW code.
The codes have the same minimum Hamming distance and hence $d=2(w-t)+1=2d-3$, i.e., $d=3$, which
implies that $w=t+1$. Since $\binom{n-t}{w-t} = q-1$, it follows that $n-t=q-1$, i.e., $n=q+t-1$, and hence one of our codes is
a GS$(t,t+1,q+t-1,q)$. By iteratively applying Lemma~\ref{lem:derivedGS}, we obtain a GS$(2,3,q+1,q)$ which is the code in the previous case.
Unfortunately, no GS$(t,t+1,q+t-1,q)$ is known for $t>2$.

\item If $\ell>1$, then $\binom{n-t}{w-t} = (q-1)^\ell$ and first we have to consider the solutions for this equation.
We distinguish between three cases depending whether $w-t \in \{1,n-t-1\}$, $w-t \in \{2,3,n-t-3,n-t-2\}$, or $3 < w-t < n-t-3$.

\noindent
{\bf Case 2.1.} If $w-t \in \{1,n-t-1\}$.

If $w-t=n-t-1$, then $w=n-1$ and one code is a GS$(t,n-1,n,q)$ and by iteratively applying Lemma~\ref{lem:derivedGS}, we obtain
a GS$(1,n-t,n-t+1,q)$. One can easily verify that for such a code $n-t+1 \geq 1 + (n-t+1)(q-1)$ (see also~\cite{Etz97}) and hence
it does not exist.

If $w-t=1$, then one code is a GS$(t,t+1,n,q)$ for which the minimum distance is 3. Hence, the related codes from the
families [{\bf F4}], [{\bf F5}], and [{\bf F6}] are only those considered in cases 1.1., 1.2., and 1.2., respectively.
Therefore, no such code will be found for $\ell >1$.

\noindent
{\bf Case 2.2.} If $w-t \in \{2,n-t-2\}$.

When $w-t=2$ or $w-t=n-t-2$, there are infinitely many such solutions which satisfy
the recursion $a_m = 6a_{m-1} - a_{m-2}$ (where $q-1=a_m$ and $\ell=2$)
[Online Encyclopedia on Integer Sequences A001109], with the initial conditions $a_1=1$ and $a_2=6$.
Similar analysis to the previous cases shows that there is no diameter perfect code from two different families in this case.

\noindent
{\bf Case 2.3.} If $2 < w-t < n-t-2$.

It was proved by Erd\"{o}s~\cite[p. 48]{Lio83} that this equation
has exactly one solution for $n-t=50$ and $w-t=3$ or $w-t=n-t-3$. In the region for this solution there is no code from two families.
\end{enumerate}

Therefore, all those cases for which the anticodes $\cA^\text{s} (n,w,t)$ and $\cA^\text{m} (n,w,\delta)$ have the same size and different structure, the anticodes are not related to a diameter perfect codes from two different families. Moreover, they might not be of maximum size.

In general, one can decide based on the size of a maximum size anticode if the given parameters are in the range of a generalized Steiner system,
an ${\text{MDS-CW}}$ code, or an ${\text{MOA-CW}}$ code. Each such code is an optimal non-binary constant-weight code that meets the value
of $A_q(n,d,w)$. In some cases these codes coincide as illustrated in the following example (and analyzed in the comparison
of the codes with their anticodes).

\begin{example}
Let $\cC$ be a linear 1-perfect code of length $q+1$, dimension $q-1$, and minimum Hamming distance 3, over $\F_q$.
By its parameters, this code is also an MDS code.
The codewords of weight three of $\cC$ form a generalized Steiner system GS$(2,3,q+1,q)$ and also
a $(q+1,3,q)$ MDS-CW code. The related maximum size anticodes are $\cA^\text{s} (q+1,3,2)$ and $\cA^\text{m} (q+1,3,1)$
which are of the same size $(q-1)^2$, but different structure.
\end{example}

\section{Conclusions and Open Problems for Future Research}
\label{sec:conclude}

We have considered diameter perfect constant-weight codes.
First, we have revisited the family of such binary constant-weight codes which are codes in the Johnson scheme.
Non-binary such codes (where the metric is the Hamming distance) are not associated with an association scheme and hence the
original proofs for the code-anticode bound do not hold for these codes. Also proofs which were given to other spaces
with related metrics, which do not require the association scheme
conditions, do not hold for these codes. We have presented a novel proof to the code-anticode bound for these codes.
We have distinguished between six families of such codes and four families of related maximum size anticodes.
All the new constructed non-binary diameter perfect constant-weight codes are optimal codes, which attain the value of $A_q(n,d,w)$.
Two of the families of anticodes are new and the proof of their optimality is a simple consequence from the code-anticode bound.
As was pointed out in~\cite{AAK01} maximum size anticodes in the Hamming scheme, the Johnson scheme, and the Grassmann scheme, are related
to $t$-intersecting families and the celebrated Erd\"{o}s-Ko-Rado theorem. The proofs  that certain $t$-intersecting families (anticodes)
are of maximum size are not simple and they were mainly obtained via extremal combinatorics.
This is in contrast to our proofs which are very simple since they are derived from the code-anticode bound.
Four of the families of diameter perfect constant-weight
codes were considered before, but they were not observed as diameter perfect constant-weight codes. The last two families
are new and raise many problems for future research. Some of these problems, as well as other problems related
to the other families, will be presented now.

\begin{enumerate}
\item Are there diameter perfect codes in J$(n,w)$, except for Steiner systems and their complements?
By Conjecture~\ref{conj:Dbinary} such codes do not exist, but there are hardly any result in this direction. It was proved in~\cite{AAK01}
(who used an idea from~\cite{Etz96}) that if such a diameter perfect code exists then some Steiner systems also exist. This excludes
the existence of some parameters of such diameter perfect codes as a consequence of the necessary conditions in Corollary~\ref{cor:necesary_ST}.

\item Continue to develop the theory of generalized Steiner systems GS$(t,w,n,q)$. Even for small parameters,
such as $(t,w)=(2,3)$, we don't know if the necessary conditions similar to the ones in Corollary~\ref{cor:necesary_ST} are sufficient,
although there are many results on these parameters, e.g.~\cite{CGZ99,Etz97,PhYi97}. We conjecture
that this is the case for all $q$, with a possible exceptions for a small number of values of $n$.
For other parameters there are also lot of research work, e.g.~\cite{CJZ07,Etz97,WuFa09}, but there is no known
construction for $t > 3$ and we would like to see a construction for such a system.

\item For a given pair $(w,n)$, $3 \leq w < n$, find good upper bound on the alphabet size QMDS$(w,n)$, for which
there exist an $(n,w,q)$ MDS-CW code for each $q \geq \text{QMDS}(w,n)$. It was proved in~\cite{Etz97} that for each pair $(w,n)$
there exists such a value QMDS$(w,n)$, but the current upper bound is very large.

\item For any given pair $(w,n)$, $4 \leq w \leq n-2$, improve the upper bound on $q=q_0 (w,n)$, which is the smallest alphabet
for which there exist an $(n,w+1,w)_q$ code. A very intriguing problem is this direction is to find good lower bounds
and upper bounds on $\chi (w,n)$, the minimum number of codes in a partition of all binary words of length $n$ and weight $w$
into codes with minimum Hamming distance $w+2$.

\item Present new constructions for MDS-CW codes (and also for MOA-CW codes). The only construction, which is not derived directly
from an orthogonal array, which we gave was analyzed in Theorem~\ref{thm:main_MOA-CW}. Another construction is the union
construction for ${\text{MDS-CW}}$ codes as mentioned in Theorem~\ref{thm:const_union}. Is there a related construction for ${\text{MOA-CW}}$ codes?
We would like to see new different constructions as well as amendments to the construction which was given in the paper.

\item Present new bounds on the tradeoff between the parameters of MDS-CW codes (and also for MOA-CW codes).
This is especially important and we would like to see some new directions which will enable to conjecture for which
parameters such codes exist.

\item Given $1 < d < w < n$, does there exist a $q_0 (n,d,w)$ for which there exists an $(n,d,w)_q$ MOA-CW for all $q \geq q_0 (n,d,w)$?
recall that for MDS-CW codes such a value called QMDS$(w,n)$ exists as was proved in~\cite{Etz97}.

\item Are there more families of diameter perfect constant-weight codes, except for the six families which were presented?
One possible direction, to exclude such possible families of codes, is to show sets of parameters with tradeoff between $n$, $w$, and $d$,
where such codes cannot exist.

\item Characterize all parameters for which $\cA^\text{s} (n,w,t)$ is a maximum size anticode. Such a proof can be done
by using extremal combinatorics as was done in~\cite{AhKh97} for such anticodes related to binary words with constant weight.
Similarly, characterize all parameters for which $\cA^\text{m} (n,w,\delta)$ is a maximum size anticode. Such a proof can be also done
by using extremal combinatorics as was done in~\cite{AhKh97} and in~\cite{AhKh98} for such anticodes related to binary words
with constant weight and for non-binary words in the Hamming scheme. The range in which $\cA^\text{m} (n,w,\delta)$ is optimal
can be proved by using a combination of techniques from both papers.

\item Are there $(q+1,3,q)$ MDS-CW codes which are also generalized Steiner systems GS$(2,3,q+1,q)$ beside those for prime power $q$?

\item Is there any GS$(3,4,q+2,q)$ for a prime power $q$? and for non-prime  power $q$?

\item Define the anticodes $\cA^\text{s} (n,w,t)$ and $\cA^\text{m} (n,w,\delta)$ in terms of $t$-intersecting families.
Find the maximum size of these $t$-intersecting families (in other words, the maximum size of these anticodes) for all
parameters, including the cases where $d > w+1$.
\end{enumerate}

\section*{Acknowledgement}

The author would like to thank Denis Krotov for bringing~\cite{BDMW,Kro08,KOP16} to his attention.
He also would like to thank Daniella Bar-Lev for bringing~\cite{Lio83} to his attention.


%


\end{document}